\documentclass[letterpaper,onecolumn,11pt,accepted=2023-02-20]{quantumarticle}
\pdfoutput=1
\usepackage[margin=0.4in]{geometry}
\usepackage[]{graphicx}
\usepackage{amssymb}
\usepackage{enumitem}
\usepackage{hyperref}
\usepackage{amsmath}
\usepackage{amsthm}
\usepackage{multicol}
\usepackage{qcircuit}
\usepackage{braket}
\usepackage{caption}
\usepackage{subcaption}
\usepackage{xcolor}
\usepackage{svg}
\usepackage[htt]{hyphenat}

\newcommand{\eps}{\varepsilon}
\newcommand{\erf}{\text{erf}}

\newtheorem{theorem}{Theorem}
\newtheorem*{theorem*}{Theorem}

\newtheorem{definition}[theorem]{Definition}
\newtheorem{lemma}[theorem]{Lemma}
\newtheorem{remark}[theorem]{Remark}

\newtheorem{proposition}[theorem]{Proposition}
\newtheorem{fact}[theorem]{Fact}
\newtheorem{empiricalclaim}[theorem]{Empirical Claim}
\newtheorem*{empiricalclaim*}{Empirical Claim}
\begin{document}

\title{Amplitude Estimation from Quantum Signal Processing}

\author{Patrick Rall}
\email{patrickjrall@ibm.com}
\affiliation{IBM Quantum, MIT-IBM Watson AI Lab, Cambridge, Massachusetts 02142, USA}
\author{Bryce Fuller}
\email{Bryce.Fuller@ibm.com}
\affiliation{IBM Quantum, Thomas J Watson Research Center, Yorktown Heights, New York 10598, USA}

\maketitle

\begin{abstract} Amplitude estimation algorithms are based on Grover's algorithm: alternating reflections about the input state and the desired outcome. But what if we are given the ability to perform arbitrary rotations, instead of just reflections? In this situation, we find that quantum signal processing lets us estimate the amplitude in a more flexible way. We leverage this technique to give improved and simplified algorithms for many amplitude estimation tasks: we perform non-destructive estimation without any assumptions on the amplitude, develop an algorithm with improved performance in practice, present a new method for unbiased amplitude estimation, and finally give a simpler method for trading quantum circuit depth for more repetitions of short circuits.
\end{abstract}

\newcommand{\maindocument}{}

\ifdefined\maindocument
\else
    \documentclass[11pt]{article}
    \usepackage[margin=0.4in]{geometry}
    
    \begin{document}
    \begin{center}
        {\Large Introduction}
    \end{center}
\fi

    Amplitude estimation is a fundamental subroutine in quantum algorithms. It directly yields quantum speedups that improve the accuracy of Monte Carlo estimation \cite{1504.06987}, which has applications in many areas of science and technology. But it also forms a fundamental component of more complicated quantum algorithms like those presented in \cite{1907.09965, 2009.11270}. A general description of the problem is as follows: given a state $\ket{\psi}$ and a projector $\Pi$, estimate $a := |\Pi \ket{\psi}|$. There exist algorithms that can obtain an estimate $\hat a$ satisfying $|a - \hat a| \leq \eps$ with high probability in $O(\eps^{-1})$ queries to oracles for $\ket{\psi}$ and $\Pi$. 

The first algorithm for this task was given by \cite{0005055}, and it is based on applying phase estimation to the Grover operator which consists of the reflections $Z_\psi Z_\Pi$. This method has since become widely used in the literature. But, starting in 2019, there have been several efforts to improve the simplicity, constant-factor performance, and versatility of amplitude estimation. 
    
    One line of work studies algorithms that do not require the phase estimation step from the \cite{0005055} method. These algorithms simply perform Grover's algorithm: they apply alternating reflections $Z_\Pi$ and $Z_\psi$ to a copy of $\ket{\psi}$, and then measure the $\Pi$ observable. Via repeated measurements we can extract information about $a$. \cite{1904.10246} gave an algorithm based on maximum-likelihood estimation that has improved constant-factor performance over \cite{0005055} and is non-adaptive, but there is no proof that the algorithm is correct.  \cite{1908.10846} gave an algorithm for relative-error amplitude estimation with a proof of correctness, although the constant-factor performance is poor. \cite{1912.05559} presented an additive-error estimation algorithm which is provably correct, and empirically outperforms \cite{0005055} and roughly matches the performance of \cite{1904.10246}. \cite{2010.04370} gave a relative-error estimation algorithm that is also non-adaptive. Finally, \cite{2012.03348} present a family of estimation algorithms with trade-offs between the circuit depth and the total number of samples. Together, these techniques make amplitude estimation simpler, faster, and more versatile.
    
    A separate line of work studies quantum algorithms for approximating partition functions that leverage amplitude estimation as a subroutine. In an effort to improve an algorithm from \cite{1504.06987},  \cite{1907.09965} modify the \cite{0005055} method in order to be `non-destructive'. In this situation, $\ket{\psi}$ is much harder to prepare than to reflect about. Thus, we only want to use one copy of $\ket{\psi}$ for the estimation algorithm, and return it undamaged when done. The non-destructive method by  \cite{1907.09965} was later used again in \cite{2009.11270}.

    Recent literature has presented many different quantum algorithms, all of which improve over \cite{0005055} in some particular way. In this manuscript we build a framework that unifies the underlying quantum subroutine of some of these methods. This allows us to inherit several of these improvements simultaneously, whenever sensible. This `unification' of amplitude estimation algorithms is achieved through Quantum Signal Processing (QSP). 
    
    QSP is a powerful technique for building quantum algorithms, capable of unifying many other existing results \cite{1806.01838, 2105.02859}. While existing techniques only consider reflections $Z_\Pi$ and $Z_\psi$ around $\Pi$ and $\ket{\psi}$, we consider arbitrary rotations $e^{i \phi Z_\Pi}$ and $e^{i \phi Z_\psi}$. We find that in situations where Grover's algorithm is implemented, it is generally possible to implement these arbitrary rotations at no additional cost.  The dynamics of such quantum circuits are well studied \cite{1409.3305, 1603.03996, 1806.01838, 2105.02859}: we find that careful choice of phases $\phi$ lets us implement a family of polynomials $P(a)$ in the amplitude. This yields our first result:

\begin{theorem*} (Section~\ref{sec:polysamp}, Theorem~\ref{thm:polysamp}) \textbf{Sampling from polynomials.} Say we are given access to one copy of $\ket{\psi}$ as well as the oracles  $e^{i \phi Z_\psi}$ and $e^{i \phi Z_\Pi}$. Then it is possible to repeatedly toss coins with $\text{Pr}[\text{heads}] = |P(a)|^2$ where $P(a)$ can be freely chosen from a certain family of polynomials. The number of oracle queries per coin toss is $O(\text{deg}(P))$.
\end{theorem*}
 
    We find that all the algorithms mentioned above \emph{except} for those presented in \cite{0005055, 1907.09965} can be formulated as repeated invocations of special cases of this theorem. This already brings many previous results into a single framework. By considering the full range of polynomials, we find this tool for amplitude estimation can improve or simplify many of the existing algorithms. 

    In Section~\ref{sec:repair} we consider the setting where the state $\ket{\psi}$ is expensive to prepare, but the reflection $Z_\psi$ is relatively cheap. This appears in algorithms for estimating observables on ground states such as \cite{2002.12508}, where the manner by which $\ket{\psi}$ is prepared in the first place is by constructing a reflection about the ground space $Z_\psi$ and then employing amplitude amplification. It also appears in the partition function estimation algorithms in \cite{1504.06987, 1907.09965, 2009.11270}, where a sequence of qsamples of thermal states is constructed through amplitude amplification. Here, we can reflect about any of these states using Szegedy walk unitaries, but re-preparing the states corresponding to low temperatures is expensive. To aid in settings like these, \cite{1907.09965} gave a method for non-destructive amplitude estimation that restores a copy of $\ket{\psi}$ after the algorithm completes.  That way, each qsample state can be re-used.

    However, we find that the non-destructive estimation algorithm presented in \cite{1907.09965} makes a subtle implicit assumption: that some $\kappa$ is known such that $\kappa < a < \sqrt{1-\kappa^2}$.  But since the whole point of amplitude estimation is to determine the value of $a$, such a bound may not be known beforehand. We resolve this issue by developing a procedure that requires no such bound. Furthermore, our procedure can be combined with any algorithm in the polynomial sampling framework above. However, in the process we incur a polynomial slowdown in the failure probability $\delta$, which only appears polylogarithmically in the overall complexity when a bound $\kappa$ is known.

    \begin{theorem*} (Section~\ref{sec:repair}, Theorem~\ref{thm:repair}) \textbf{State repair.} Say we just sampled from several polynomials, and the sum of the degrees of these polynomials is $D$. Then we can restore the quantum state to a copy of $\ket{\psi}$ with success probability $\geq 1-\delta$ by sampling from one additional polynomial with degree $O(\delta^{-1/2} D)$.
    \end{theorem*}

    We remark that there was also a non-destructive amplitude estimation algorithm presented in \cite{2103.09717}, but this algorithm estimates the probability $a^2$, not $a$, which is weaker\footnote{When converting an additive-error estimate of $a^2$ to an additive-error estimate of $a$, the error is blown up by roughly a factor of $1/a$. This can be expensive when $a$ is small. Converting between $a$ and the Grover angle $\theta := \arcsin(a)$ is not so expensive.}. The algorithms in \cite{1907.09965, 2009.11270} rely on methods for mean estimation from \cite{1504.06987} which in turn rely on estimates of $\arcsin(a) \sim a$, and it is not clear how to generalize them to rely on $a^2$ instead. 
    
    \textbf{Parallel work.} \cite{2207.08643} simultaneously presented a different method for achieving non-destructive amplitude estimation algorithm that also improves over that of \cite{1907.09965}.

    Next, in Section~\ref{sec:performance} we ask if the polynomial sampling framework can reduce the constant factors in the number of queries required. Since we want amplitude estimation to be useful in practice, it is essential to study which method can minimize the number of queries, even though the asymptotic performance is already determined to be $\Theta(\eps^{-1})$ \cite{9804066}. The prior state of the art for amplitude estimation with high empirical performance is Iterative Quantum Amplitude Estimation (IQAE), the algorithm presented by \cite{1912.05559}, which, according to our experiments, requires about $\approx 9.93/\eps$ queries to the $Z_\Pi$ oracle to achieve a success probability of $\geq 95\%$.

    IQAE relies on the oracles $Z_\Pi$ and $Z_\psi$, which can be viewed as special cases of the $e^{i\phi Z_\Pi}$ and $e^{i\phi Z_\psi}$ oracles when $\phi = \pi/2$. In this case, the polynomial $P(a)$ is a Chebyshev polynomial. While in this sense IQAE is restricted to odd Chebyshev polynomials, the polynomial sampling framework admits Chebyshev polynomials of both even and odd degree. But except for this modest increase in flexibility, we find that it is not particularly useful to deviate from $\phi = \pi/2$ in order to improve constant-factor performance. This makes intuitive sense: Chebyshev polynomials exhibit the maximum rate of change among polynomials bounded by $|P(a)| \leq 1$ for $a \in [0,1]$, and we even rely on this fact in Section~\ref{sec:repair}. Thus, as a rule of thumb, deviating from Chebyshev polynomials should waste resources.

    However, we still find that it is possible to improve over the performance of IQAE. IQAE features a `no-overshooting' condition that prevents the algorithm from taking too many samples from the most expensive polynomials near the end of execution. We find that this no-overshooting condition can be tuned in order to significantly improve performance. We combine this optimization with the ability to estimate the amplitude $a$ directly, rather than the Grover angle $\theta := \arcsin(a)$, and the ability to sample from both odd and even Chebyshev polynomials, into a new algorithm we call `ChebAE'.
    
    \begin{empiricalclaim*} (Section~\ref{sec:performance}, Empirical~Claim~\ref{empericalclaim:chebae}) \textbf{Chebyshev amplitude estimation `ChebAE'.} There exists an amplitude estimation algorithm that samples from Chebyshev polynomials and returns an estimate of the amplitude with error $\eps$ and success probability $\geq 95\%$ in about $\approx 4.66/\eps$ queries to $Z_\Pi$.
\end{empiricalclaim*}

   Depending on the details of the analysis, ChebAE only requires about $45\%$ to $65\%$ of the queries of IQAE depending on the analysis methodology. To support this claim, we performed a comprehensive numerical study where we benchmark the query complexity of the relevant algorithms in the literature.

    In Section~\ref{sec:unbiased}, we study a problem inspired by recent work \cite{2109.10215, 2207.08800}: unbiased amplitude amplitude estimation. An amplitude estimation algorithm effectively samples from a random variable $\mathbf{\hat{a}}$ that satisfies $| \mathbf{\hat{a}} - a |\leq \eps$ with high probability. For unbiased estimation, we additionally demand that $\mathbb{E}[\mathbf{\hat{a}}] \approx a$. 

    This is in contrast to the unbiased \emph{phase} estimation problem considered by \cite{2109.10215, 2207.08800}, which, given an oracle for $e^{i \theta}$, samples from an unbiased estimator $\mathbf{\hat{\boldsymbol{\theta}}}$ of $\theta$. Their solution to the problem is based on applying a random shift to the angle $\theta$ before estimation, and accounting for it afterward. It is not possible to use \cite{0005055} to convert unbiased phase estimation to unbiased amplitude estimation since the relationship $a = \sin\theta $ is not linear. Instead, we invoke Jackson's theorem \cite{rivlin} to build a polynomial that guarantees $\mathbb{E}[\mathbf{\hat{a}}] \approx a$ directly.

    \begin{theorem*} (Section~\ref{sec:unbiased}, Theorem~\ref{thm:unbiasedamplitudeestimation}) \textbf{Unbiased amplitude estimation.} For any $\eps,\eta,\delta$, there exists an amplitude estimation algorithm that samples from a random variable $\mathbf{\hat{a}}$ that satisfies $\text{Pr}[ | \mathbf{\hat{a}} - a |\geq \eps] \leq \delta$ and $|\mathbb{E}[\mathbf{\hat{a}}] - a| \leq \eps \eta + \delta$. It has a query complexity of $O(\eps^{-1}(\eta^{-1} + \log(\delta^{-1})))$.
    \end{theorem*}

\textbf{Parallel work.} While we were unaware of any applications of this subroutine in other quantum algorithms as we were writing this manuscript, \cite{2207.08643} were able to improve partition function estimation using an unbiased amplitude estimation algorithm of their own. 

    Finally, in Section~\ref{sec:hybrid} we revisit a problem studied by \cite{2012.03348}: trading classical and quantum resources in amplitude estimation. Early quantum computers may be able to perform the oracles $e^{i \phi Z_\Pi}$ and $e^{i \phi Z_\psi}$, but may not be able to perform them $O(\eps^{-1})$ many times within a single circuit as required by most amplitude estimation algorithms. \cite{2012.03348} develop two algorithms that, for any $\beta \in [0,1)$, have query complexity $O(\eps^{-(1+\beta)})$ and query depth $ O(\eps^{-(1-\beta)})$.  When $\beta =0$, this corresponds to the traditional scaling of amplitude estimation, but as $\beta \to 1$ we approach the complexity of classical probability estimation.

    The algorithms presented in \cite{2012.03348} are rather complicated: one algorithm relies on maximum likelihood estimation, which functions very well in practice but rigorously proving its accuracy and convergence rate is challenging. The other algorithm pieces together several estimates of multiples of $a$ via the Chinese remainder theorem. We find that there exists a simpler procedure based on careful construction of polynomials. The polynomials approximate a line with slope $k$ near the current best estimate of $a$, and have degree $O(k)$. By selecting $k \in O(\eps^{-(1-\beta)})$ and we can achieve depth $\eps^{-(1-\beta)}$ as desired. 

    The main technical challenge in constructing our polynomial-based method is rigorously proving bounds on the quality of the polynomial approximation. We find that Jackson's theorem is not accurate enough, so we must instead rely on a construction based on the Jacobi-Anger expansion (see \cite{1707.05391}). However, we find that we can analyze our method with greater rigor than that of  \cite{2012.03348}, since we do not need to implicitly assume that $a,\beta \in \Theta(1)$ as they do. 

    \begin{theorem*} [Section~\ref{sec:hybrid}, Theorem~\ref{thm:hybridestimation}] \textbf{Hybrid quantum-classical amplitude estimation.} For any $\eps,\delta > 0$ and $0 \leq \beta < 1$, there exists an amplitude estimation algorithm that estimates $a$ to accuracy $\eps$ with probability $\geq 1-\delta$. It makes at most $\tilde O(  (a^{-1} + \eps^{-(1+\beta)}) \log(\delta^{-1}) )$  queries, and the circuits have depth at most $O( a^{-1/(1-\beta)} + \eps^{-(1-\beta)} )$
    \end{theorem*}

Our analysis reveals an additional $O(a^{-1})$ dependence, which we suspect is also present for the algorithms presented in \cite{2012.03348}. Of course, if we treat $a$ and $\beta$ as constants then we recover the same complexity.

\ifdefined\maindocument
\else

    \end{document}
\fi

\ifdefined\maindocument
    \section{Sampling from Polynomials \label{sec:polysamp}}
\else
    \documentclass[11pt]{article}
    \usepackage[margin=0.4in]{geometry}
    
    \begin{document}
    \begin{center}
        {\Large Sampling from Polynomials}
    \end{center}
\fi

We begin by establishing our general framework for amplitude estimation algorithms. All future sections will be based on the main result of this section, which, informally, is as follows:  Say $P(a)$ is a `Pellian' or a `semi-Pellian' polynomial in $a$. Then, there exists a quantum algorithm that samples from a random variable over $\{0,1\}$ with bias $|P(a)|^2$. This algorithm makes $O(\text{deg}(P))$ oracle queries to $\ket{\psi}$ and $\Pi$.

The definition of `Pellian' and `semi-Pellian', the exact nature of the oracles, as well as the input and output states of this algorithm will be specified formally in this section. In particular, the outline for the rest of the section is as follows. First, we establish some general notation and discuss the implementation of our oracles. Second, we define the restricted family of `Pellian' polynomials, and give the simpler algorithm for sampling from them. Third, we define the more general family of `semi-Pellian' polynomials and give a more complicated algorithm for sampling from them. Finally, we put all these results together with some standard notation and show how some existing results fit into this framework.

\subsection{Notation and Oracles}
    
We begin by establishing some notation that we borrow from the discussion of Grover's algorithm. Our amplitude estimation framework is very closely related to Grover's algorithm. We are given a state $\ket{\psi}$ and a projector $\Pi$ and want to estimate $a = |\Pi \ket{\psi}|$. Let $\bar a := \sqrt{1-a^2}$, and assume for a moment that $a \not\in \{0,1\}$. Then let:
\begin{align}
    \ket{\Pi} &:= a^{-1} \cdot \Pi\ket{\psi} \\
    \ket{\Pi^\perp} &:= \bar a^{-1} \cdot (I - \Pi)\ket{\psi} 
\end{align}

Observe that $\bar a = |(I-\Pi)\ket{\psi}|$. We see how $\ket{\Pi}$ and $\ket{\Pi^\perp}$ satisfy:
\begin{align}
    \ket{\psi} = a \ket{\Pi} + \bar a \ket{\Pi^\perp}
\end{align}
We furthermore define:
\begin{align}
    \ket{\psi^\perp} := \bar a \ket{\Pi} - a \ket{\Pi^\perp}
\end{align}
Finally, observe that $\braket{\Pi|\Pi^\perp} = \braket{\psi|\psi^\perp} = 0$. We have constructed two orthonormal bases for the two-dimensional Grover subspace:  $\ket{\psi},\ket{\psi^\perp}$ and $\ket{\Pi},\ket{\Pi^\perp}$. 

When $a \in \{0,1\}$ then all the dynamics in the Grover subspace are trivial anyway. Nonetheless we can treat these cases along with $a \not\in \{0,1\}$ in a unified manner. If $a = 0$, let $\ket{\Pi^\perp} := \ket{\psi}$ and select an arbitrary orthogonal state $\ket{\Pi} = \ket{\psi^\perp}$. Similarly, if $a = 1$, let $\ket{\Pi} := \ket{\psi}$ and let $\ket{\Pi^\perp} = \ket{\psi^\perp}$ be an arbitrary orthogonal state.

In the traditional presentation of Grover's algorithm we are given access to reflection oracles $2\ket{\psi}\bra{\psi} - I$ and $2\Pi - I$. Projected down into the two-dimensional subspace, we can write these as:
\begin{align}
    Z_\psi := \ket{\psi}\bra{\psi} - \ket{\psi^\perp}\bra{\psi^\perp}\\
    Z_\Pi := \ket{\Pi}\bra{\Pi} - \ket{\Pi^\perp}\bra{\Pi^\perp}
\end{align}

In this work we consider what is possible when we are instead given access to a larger family of oracles $e^{i \phi Z_\psi}$ and $e^{i \phi Z_\Pi}$ for arbitrary phases $\phi$. These can always be implemented using two queries to controlled-$Z_\psi$ or controlled-$Z_\Pi$, by using a Hadamard test. For example, to implement $e^{i \phi Z_\psi}$:
\begin{align}
    \Qcircuit @C=1em @R=1em {
    \lstick{\ket{+}} & \ctrl{1}     & \gate{H} & \gate{e^{i\phi Z}} & \gate{H} & \ctrl{1} & \rstick{\ket{+}} \qw  \\
                     & \gate{Z_\psi} & \qw  & \qw & \qw & \gate{Z_\psi}    & \qw
}
\end{align}
This lets us implement $e^{i \phi Z_\psi}$ using two queries to controlled-$Z_\psi$. However, we argue that in essentially all practical situations the oracles $e^{i \phi Z_\psi}, e^{i \phi Z_\Pi}$ do not have twice the complexity of $Z_\psi, Z_\Pi$.  Here are some examples:

\begin{itemize}
    \item Say we have a function $f: \{0,1\}^n \to \{0,1\}$ and some classical program that computes it. In the usual situation for Grover's search, we apply reflections $Z_\Pi$ where $\Pi = \sum_x f(x) \ket{x}\bra{x}$ by rendering the classical program into a quantum circuit via Toffoli gates. That circuit produces an output bit $\ket{f(x)}$ plus garbage. To build $Z_\Pi$, we apply $Z$ to the output bit, and then uncompute. But we could also just apply $e^{i\phi Z}$ to the output bit instead of $Z$, and obtain $e^{i\phi Z_\Pi}$ with the same number of gates.
        
    \item Say we have a preparation unitary $U_\psi$ for an $n$-qubit state $\ket{\psi}$, satisfying $U\ket{0^n} = \ket{\psi}$.  Then, can implement $Z_{0^n} := 2\ket{0^n}\bra{0^n} - I$ via a some Toffoli gates and a $Z$ gate, and $Z_\psi$ via $U_\psi Z_{0^n}U_\psi^\dagger$. But notice that implementing $e^{i \phi Z_{0^n}}$ requires the same number of Toffoli gates as implement $Z_{0^n}$. Therefore the circuit for $e^{i \phi Z_\psi} = U e^{i \phi Z_{0^n}}U^\dagger$ has the same complexity.
    
    \item Say $\ket{\psi}$ is the ground state of a Hamiltonian $H$, and assume the ground space is non-degenerate. Modern algorithms for estimating ground state energies or other observables on the ground state (such as \cite{2002.12508}) synthesize approximate reflections via singular value transformation: a polynomial $p(x)$ is approximately $1$ in the ground space, and $-1$ elsewhere, so that $p(H) \approx Z_\psi$. But we can also design polynomials that are $e^{i\phi}$ in the ground space and $e^{-i\phi}$ elsewhere, and then $p(H) \approx e^{i\phi Z_\psi}$. These polynomials have the same degree, so they have the same circuit complexities.

    \item  Quantum algorithms based on Szegedy walk unitaries (such as \cite{1907.09965, 2009.11270}) prepare a unitary $U$ that has an eigenvalue $\ket{\psi}$ with eigenphase $0$, and the other eigenphases are at least $\Delta$. To implement $Z_\psi$, they apply phase estimation to $U$ to extract an estimate of the eigenphase with precision $\Delta$ -- enough to distinguish $0$ from values larger than $\Delta$. Then they apply a phase flip depending on if the estimate is $0$, and uncompute. Similar to the Toffoli circuit, we can also simply apply $e^{i\phi}$ if the estimate is $0$ and $e^{-i\phi}$ otherwise, and we have implemented $e^{i \phi Z_\psi}$ with the same circuit complexity.
    
\end{itemize}

It seems that in the black-box setting two queries to $Z_\Pi$ are necessary to implement $e^{i\phi Z_\Pi}$. But once we `open the black-box' and actually implement it with a quantum circuit, we find that for every example in the literature that we know of the circuits implementing $Z_\Pi$ and $e^{i \phi Z_\Pi}$ are essentially the same size. As we shall see, access to these oracles significantly expands our capabilities.

In a fault tolerant setting, we may find that the complexity of executing a circuit may depend less on its overall size and more on the number of non-Clifford gates. In this case, $e^{i \phi Z}$ may be significantly more expensive than $Z$, since $Z$ is a Clifford gate and $e^{i \phi Z}$ is not. The severity of this effect may vary with the level of precision with which phases $\phi$ are implemented \cite{1603.04230}. We leave an investigation of these concerns to future work.

\subsection{Sampling from Pellian Polynomials}

In this section we present an algorithm based on alternating applications of $e^{i\phi Z_\psi}$ and $e^{i \phi Z_\Pi}$. The behavior of such alternating rotations is extensively studied, initially by \cite{1409.3305, 1603.03996}. We find that this lets us toss a coin that comes up heads with probability $|P(a)|^2$ where $P$ belongs to a certain restricted class of polynomials.

Indeed, \cite{2105.02859} presents a derivation of quantum signal processing in terms of amplitude amplification. Our framework can be seen as a synthesis of this observation with the idea of using quantum signal processing to perform Kitaev's phase estimation algorithm on the Grover operator.

\cite{1806.01838} gives a precise characterization of what these polynomials are that is completely independent of the actual phase rotations $\phi$.  This is exceedingly convenient for quantum algorithms, since we no longer need to worry about circuit synthesis and can simply construct polynomial approximations to the functions we want to apply. We just need to make sure the polynomial we construct meets some simple constraints. Actually computing the phase rotations that implement the polynomial is studied by several recent works \cite{1806.10236, 2003.02831, 2110.04993}.

However, one thing that is lacking from the literature is simple nomenclature for the different families of polynomials. \cite{2003.02831} introduces the `Low-algebra' and `Haah-algebra' for different families of implementable matrices, and \cite{2105.02859} introduced the idea of `Quantum Signal Processing (QSP) Conventions' such as $(W_x,S_z,\bra{0}\cdot\ket{0})$-QSP. These ideas are useful for understanding the inner workings of polynomial synthesis. But if all we want to do is refer to synthesizable polynomials without worrying about how to find the phases, then these terms are still too technical. 

Instead, we are motivated by the fact that the Chebyshev polynomials are a family of polynomials that are implementable via quantum signal processing. \cite{Dem07} furthermore observes that the Chebyshev polynomials of the first kind $T_n$ and of the second kind $U_n$ are solutions of the `Pell equation':
\begin{align}
    |T_n(x)|^2 + (1-x^2)|U_{n-1}(x)|^2 = 1.
\end{align}

Indeed, the Pell equation is satisfied by exactly the polynomials that are directly implementable by  quantum signal processing. We thus propose the following nomenclature:

\begin{definition} A pair of polynomials $P(x), Q(x) \in \mathbb{C}[x]$ \textbf{form a Pell pair} if they have fixed and opposite parity\footnote{A polynomial is even (odd) if it is an even (odd) function. If it is either even or odd then we say it has fixed parity, otherwise it has mixed parity. Two polynomials have opposite parity if one of them is even and the other is odd.} and, for all $x \in [-1,1]$, they together satisfy:
    \begin{align}
        |P(x)|^2 + (1-x^2)|Q(x)|^2 = 1.
    \end{align}
    A polynomial $P(x) \in \mathbb{C}[x]$ is \textbf{Pellian} if there exists a $Q(x) \in \mathbb{C}[x]$ such that $P(x),Q(x)$ form a Pell pair. Similarly, a polynomial $Q(x)$ is \textbf{Pell-complementary} if there exists a $P(x)$ such that $P(x),Q(x)$ form a Pell pair.
\end{definition}

The condition is not symmetrical with respect to $P(x)$ and $Q(x)$, and therefore the different classes of polynomials have different names. While the above definition is independent of the existence of the phases, it still does not make it easy to check if a polynomial $P(x)$ is Pellian. This is facilitated by Theorem~3 of \cite{1806.01838}, which states that a polynomial $P(x) \in \mathbb{C}[x]$ is Pellian if and only if it has fixed parity, and:
    \begin{itemize}
        \item $\forall x \in [-1,1]$ we have $|P(x)| \leq 1$,
        \item $\forall x \in (-\infty,-1]\cup [1,\infty)$ we have $|P(x)| \geq 1$,
        \item if $P$ is even then $\forall x \in\mathbb{R}$ we have $P(ix)P^{*}(ix) \geq 1$.
    \end{itemize}
Here, $P^*$ refers to complex conjugation of the coefficients of $P$ only.

We have defined the family of polynomials in terms of simple conditions independent of the implementation. Now we move on to the actual algorithm that samples from these polynomials.

The core idea of quantum signal processing is to apply alternating rotations in different bases. In our case, this will correspond to alternatingly applying $e^{i \phi Z_\psi}$ and $e^{i \phi Z_\Pi}$. Keeping track of the dynamics when both bases are rotating is very complicated, so in our analysis we find it useful to standardize the bases. We let $\{\ket{0},\ket{0^\perp}\}$ be some other basis for the two-dimensional Grover subspace, and introduce a new unitary $V$ that is used entirely for analysis purposes and never appears in the implementation. 
    \begin{align}
        V &:= \ket{\psi}\bra{0} + \ket{\psi^\perp}\bra{0^\perp}\\
        Z_0 &:= \ket{0}\bra{0} - \ket{0^\perp}\bra{0^\perp}\\
        e^{i \phi Z_\psi} &= V e^{i \phi Z_0} V^\dagger
    \end{align}
Now, a pair of rotations in different bases becomes:
\begin{align}
 e^{i \phi_{1} Z_\psi}  e^{i \phi_{2} Z_\Pi} = V e^{i \phi_{1} Z_0}  V^\dagger e^{i \phi_{2} Z_\Pi}
\end{align}
Next, observe what happens when we write out $V$ in the $\{\ket{0},\ket{0^\perp}\}$ and $\{\ket{\Pi},\ket{\Pi^\perp}\}$ bases via some abuse of notation:
\begin{align}
        V &= a \ket{\Pi}\bra{0} + \bar a \ket{\Pi}\bra{0^\perp} + \bar a \ket{\Pi^\perp}\bra{0}  - a\ket{\Pi^\perp}\bra{0^\perp}   \\
        &=: \begin{bmatrix} a \ket{\Pi}\bra{0} & \bar a \ket{\Pi}\bra{0^\perp} \\ \bar a \ket{\Pi^\perp}\bra{0} & -a\ket{\Pi^\perp}\bra{0^\perp}   \end{bmatrix}
\end{align}
Although $V$ is not actually a reflection, when viewed in these particular bases the matrix elements look just like those of a reflection matrix. Our goal is to track the complicated motion through the multiple bases of the Grover subspace via just matrix multiplication over $\mathbb{C}^2$. In that case, the $\ket{\Pi}\bra{0}, \ket{\Pi}\bra{0^\perp},...$ labels disappear and we are actually just left with a reflection, call it $R$. Now we can think of $V e^{i \phi_{1} Z_0}  V^\dagger e^{i \phi_{2} Z_\Pi}$ just as $Re^{i\phi_1 Z}R e^{i\phi_2 Z}$.

Given access to alternating phase rotations and reflections, we invoke quantum signal processing to implement the desired polynomial. Throughout the paper, products run from left to right: $\prod_{j=1}^3 c_j = c_1c_2c_3$.

\begin{lemma} \label{lemma:circuits} \textbf{Implementing Pellian polynomials via phase rotations and reflections.} Say $P,Q \in \mathbb{C}[a]$ form a Pell pair, and say $d := \text{deg}(P)$. Then there exist phases $\phi_1, ..., \phi_{d-1}$, and another phase $\varphi$ such that:
    \begin{align}
        \prod_{j=1}^{d-1} ( R e^{i\phi_j Z} ) R = i^{-d}  \begin{bmatrix}e^{i\varphi} P & i Q \bar a \\ i Q^* \bar a & e^{-i\varphi} P^*  \end{bmatrix}  \text{ where } R := \begin{bmatrix} a & \bar a \\ \bar a & -a \end{bmatrix}.
\end{align}
\end{lemma}

\begin{proof} This follows from Theorem~3 and Corollary~8 of \cite{1806.01838} via\\ the substitution $e^{i \arccos(a) X} = i e^{-i\frac{\pi}{4}Z} R e^{-i \frac{\pi}{4}Z}$.
\end{proof}

\begin{proposition} \label{prop:pellian} Say $P$ is a Pellian polynomial. Then, there exist algorithms that:
    \begin{itemize}
        \item take any one of $\ket{\psi},\ket{\psi^\perp}, \ket{\Pi},\ket{\Pi^\perp}$ as input,
        \item apply some number of $e^{i\phi Z_\psi}$, $e^{i\phi Z_\Pi}$ gates,
        \item and finally measure in the $\ket{\psi},\ket{\psi^\perp}$ basis or the $\ket{\Pi},\ket{\Pi^\perp}$ basis. 
    \end{itemize}
    The $\ket{\psi}$ and $\ket{\Pi}$ outcomes occur with probability $|P(a)|^2$, and the $\ket{\psi^\perp}$ and $\ket{\Pi^\perp}$ outcomes occur with probability $1-|P(a)|^2$. The final measurement basis and the number of queries to $e^{i\phi Z_\psi}$, $e^{i\phi Z_\Pi}$ are determined by the degree of $P$ and the input state. The details are given by the transition diagrams in Figure~\ref{fig:transitions}. 
\end{proposition}

\begin{figure}[h]
     \centering
     \hfill
     \begin{subfigure}[b]{0.4\textwidth}
         \centering
         \includegraphics[width=0.7\textwidth]{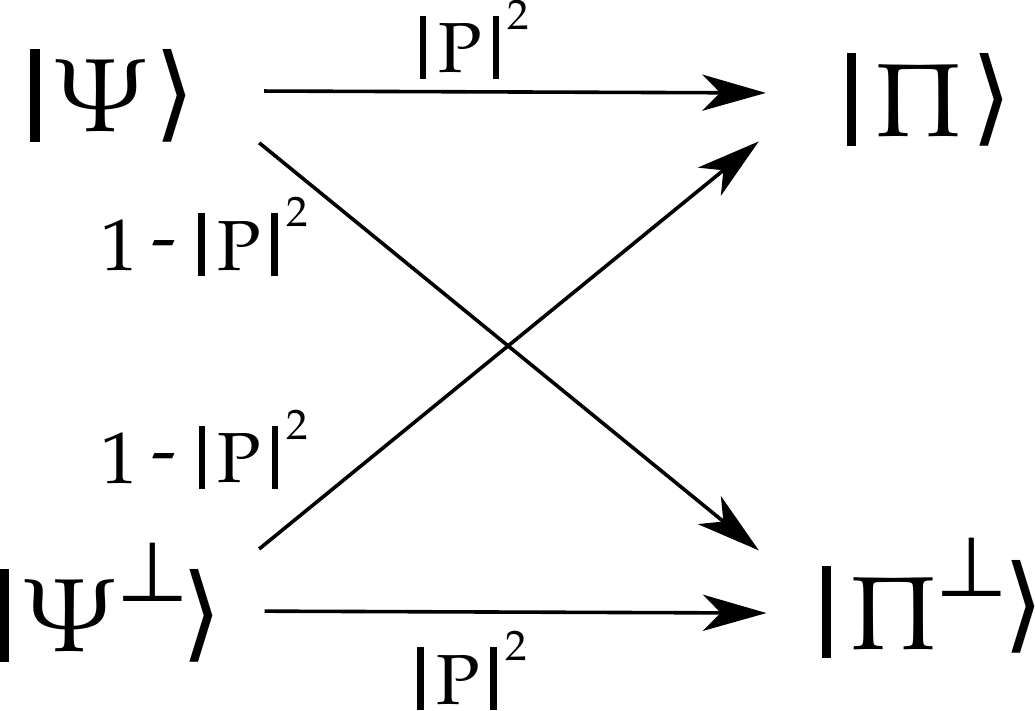}
         \caption{$\text{deg}(P) = 2k+1,\hspace{3mm}  Q_\psi = k,\hspace{3mm} Q_\Pi = k+1$}
         \label{fig:psi2pi}
     \end{subfigure}
     \hspace{5mm}
     \begin{subfigure}[b]{0.4\textwidth}
         \centering
         \includegraphics[width=0.7\textwidth]{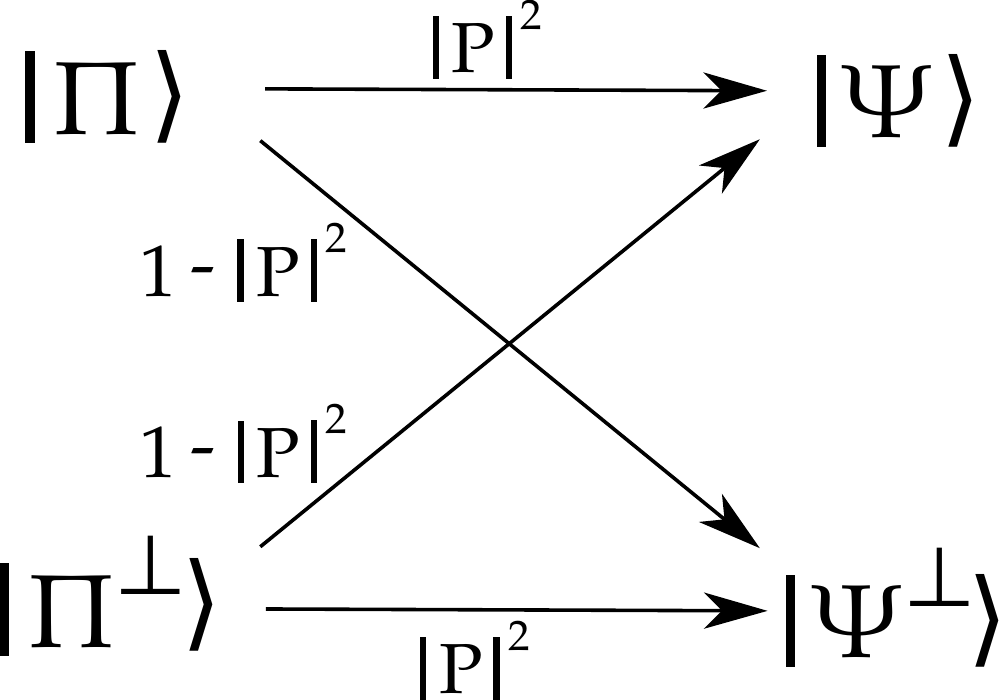}
         \caption{$\text{deg}(P) = 2k+1,\hspace{3mm}  Q_\psi = k+1,\hspace{3mm} Q_\Pi = k$}
         \label{fig:pi2psi}
     \end{subfigure}
     \hfill
     \vspace{10mm}
     \hfill
     \begin{subfigure}[b]{0.4\textwidth}
         \centering
         \includegraphics[width=0.7\textwidth]{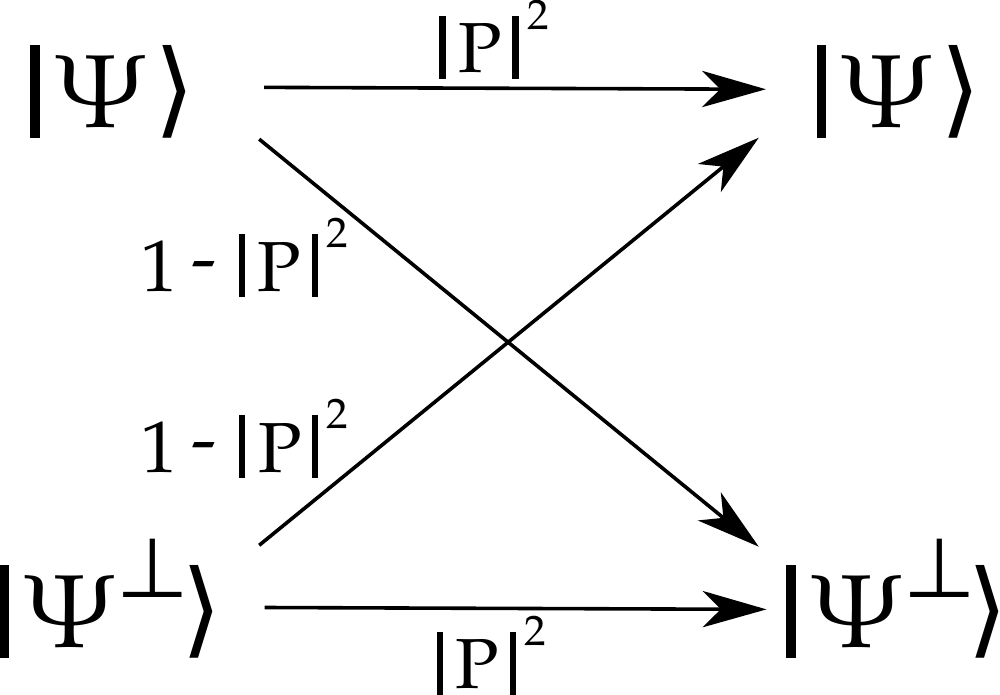}
         \caption{$\text{deg}(P) = 2k,\hspace{3mm}  Q_\psi = k,\hspace{3mm} Q_\Pi = k$}         \label{fig:psi2psi}
     \end{subfigure}
     \hspace{5mm}
     \begin{subfigure}[b]{0.4\textwidth}
         \centering
         \includegraphics[width=0.7\textwidth]{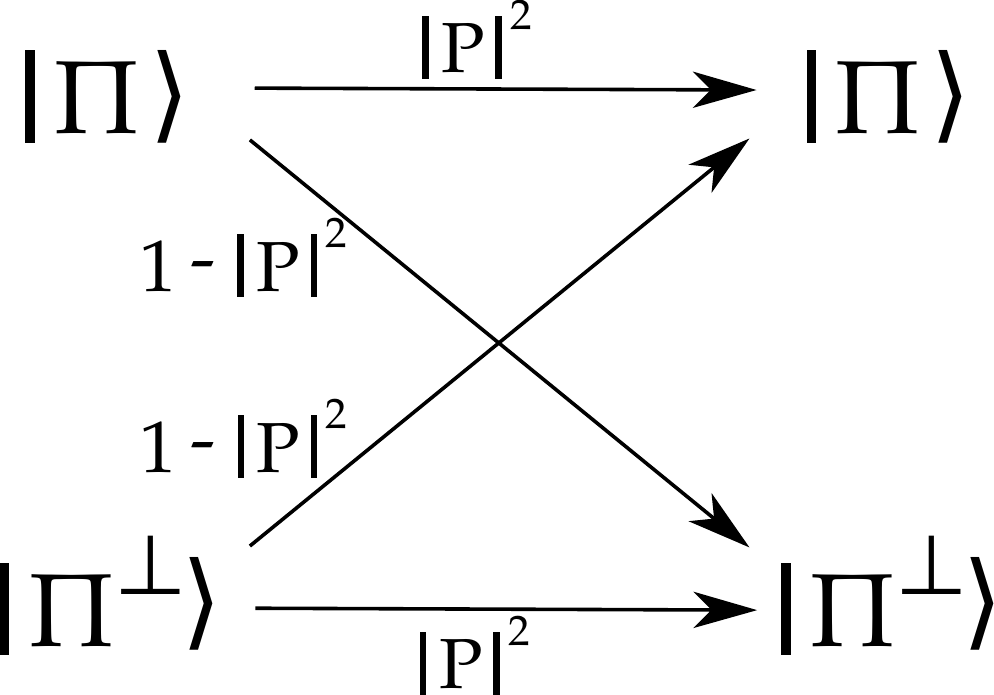}
         \caption{$\text{deg}(P) = 2k,\hspace{3mm}  Q_\psi = k,\hspace{3mm} Q_\Pi = k$}
         \label{fig:pi2pi}
     \end{subfigure}
     \hfill
     
        \caption{Transition diagrams implemented by the algorithms in Proposition~\ref{prop:pellian}. Given a Pellian polynomial $P$ and one of the states $\ket{\psi},\ket{\psi^\perp}, \ket{\Pi},\ket{\Pi^\perp}$, we can implement one of these transitions using some number of queries $Q_\psi, Q_\Pi$ to $e^{i\phi Z_\psi}, e^{i\phi Z_\Pi}$ respectively.}
        \label{fig:transitions}
\end{figure}

\begin{proof}
    The transitions in Figure~\ref{fig:transitions} are performed by applying one of the following unitaries, and then measuring in either the $\{\ket{\psi},\ket{\psi^\perp}\}$ basis or the $\{\ket{\Pi},\ket{\Pi^\perp}\}$ basis by using a Hadamard test on $Z_\psi$ or $Z_\Pi$. Say $P,Q$ form a Pell pair. Then, there exist phases $\phi_1,...,\phi_{\text{deg}(P)-1}$ such that:
\begin{align}
    U_{(a)} &:= \prod_{j=1}^{k} \left(  e^{i \phi_{2j-1} Z_\psi}  e^{i \phi_{2j} Z_\Pi}  \right)    = i^{-d}  \begin{bmatrix}e^{i\varphi} P \ket{\Pi}\bra{\psi} & i Q \bar a \ket{\Pi}\bra{\psi^\perp}  \\ i Q^* \bar a \ket{\Pi^\perp}\bra{\psi} & e^{-i\varphi} P^* \ket{\Pi^\perp}\bra{\psi^\perp}  \end{bmatrix}\\
    U_{(b)} &:= \prod_{j=1}^{k} \left(  e^{i \phi_{2j-1} Z_\Pi}  e^{i \phi_{2j} Z_\psi}  \right)    = i^{-d}  \begin{bmatrix}e^{i\varphi} P \ket{\psi}\bra{\Pi} & i Q \bar a \ket{\psi}\bra{\Pi^\perp}  \\ i Q^* \bar a \ket{\psi^\perp}\bra{\Pi} & e^{-i\varphi} P^* \ket{\psi^\perp}\bra{\Pi^\perp}  \end{bmatrix}\\
    U_{(c)} &:= \prod_{j=1}^{k-1} \left(  e^{i \phi_{2j-1} Z_\Pi}  e^{i \phi_{2j} Z_\psi}  \right) e^{i \phi_{2k-1} Z_\Pi}    = i^{-d}  \begin{bmatrix}e^{i\varphi} P \ket{\psi}\bra{\psi} & i Q \bar a \ket{\psi}\bra{\psi^\perp}  \\ i Q^* \bar a \ket{\psi^\perp}\bra{\psi} & e^{-i\varphi} P^* \ket{\psi^\perp}\bra{\psi^\perp}  \end{bmatrix}\\
    U_{(d)} &:= \prod_{j=1}^{k-1} \left(  e^{i \phi_{2j-1} Z_\psi}  e^{i \phi_{2j} Z_\Pi}  \right)    e^{i \phi_{2k-1} Z_\psi} = i^{-d}  \begin{bmatrix}e^{i\varphi} P \ket{\Pi}\bra{\Pi} & i Q \bar a \ket{\Pi}\bra{\Pi^\perp}  \\ i Q^* \bar a \ket{\Pi^\perp}\bra{\Pi} & e^{-i\varphi} P^* \ket{\Pi^\perp}\bra{\Pi^\perp}  \end{bmatrix}
\end{align}
    We will later show that these equalities hold -- suppose for now that they do. Then, the query complexities $Q_\psi$ and $Q_\Pi$ can just be read off from the products on the left hand sides plus the extra query for measurement, and the transition probabilities can be obtained by taking the magnitude squared of the matrix elements on the right hand sides. Recall that $|\bar a Q(a)|^2 = 1 - |P(a)|^2$ since $P,Q$ form a Pell pair. So all that remains to show is that these equalities are indeed true, that $U_{(a)}, U_{(b)}$ correspond to odd $P$, and that $U_{(c)},U_{(d)}$ correspond to even $P$.

    The calculation for all of these unitaries are extremely similar, so we will just prove the equality for $U_{(a)}$. Recall the analysis sketch above, where we considered another basis $\ket{0},\ket{0^\perp}$ for the Grover subspace, and let $V := \ket{\psi}\bra{0} + \ket{\psi^\perp}\bra{0^\perp}$. We then wrote $Z_0 := \ket{0}\bra{0} - \ket{0^\perp}\bra{0^\perp}$ and $e^{i \phi Z_\psi} = V e^{i \phi Z_0} V^\dagger$. Now we can rewrite the $U_{(a)}$ as: 
    \begin{align}
        U_{(a)}  := \prod_{j=1}^{k} \left(  e^{i \phi_{2j-1} Z_\psi}  e^{i \phi_{2j} Z_\Pi}  \right) =  \prod_{j=1}^{k} \left( V e^{i \phi_{2j-1} Z_0}  V^\dagger e^{i \phi_{2j} Z_\Pi}  \right) V \cdot V^\dagger \label{eqn:circuitrewrite}
    \end{align}
    Now we expand each of the components $V$, $e^{i \phi Z_0}$ and $e^{i\phi Z_\Pi}$ in the $\{\ket{0},\ket{0^\perp}\}$ and $\{\ket{\Pi},\ket{\Pi^\perp}\}$ bases:
    \begin{align}
        V &= \begin{bmatrix} a \ket{\Pi}\bra{0} & \bar a \ket{\Pi}\bra{0^\perp} \\ \bar a \ket{\Pi^\perp}\bra{0} & -a\ket{\Pi^\perp}\bra{0^\perp}   \end{bmatrix}\\[2mm]
            e^{i \phi Z_0} &= \begin{bmatrix} e^{i \phi} \ket{0}\bra{0} & \\ & e^{- i\phi} \ket{0^\perp}\bra{0^\perp}   \end{bmatrix}\\[2mm]
            e^{i \phi Z_\Pi} &= \begin{bmatrix} e^{i \phi} \ket{\Pi}\bra{\Pi} & \\ & e^{- i\phi} \ket{\Pi^\perp}\bra{\Pi^\perp}   \end{bmatrix}
    \end{align}
    We observe that the basis expansions of the components line up perfectly when plugged into (\ref{eqn:circuitrewrite}). That means that the unitary implemented by the algorithm can be obtained by matrix multiplication over $\mathbb{C}^2$:
    \begin{align}
        \prod_{j=1}^{k} \left( V e^{i \phi_{2j-1} Z_0}  V^\dagger e^{i \phi_{2j} Z_\Pi}  \right) V  &= \begin{bmatrix} \alpha \ket{\Pi}\bra{0} & \beta \ket{\Pi}\bra{0^\perp} \\ \gamma \ket{\Pi^\perp}\bra{0} &  \delta \ket{\Pi^\perp}\bra{0^\perp}  \end{bmatrix}\\
            \prod_{j=1}^{k} \left( R e^{i \phi_{2j-1} Z}  R^\dagger e^{i \phi_{2j} Z} \right) R    &=  \begin{bmatrix} \alpha & \beta \\ \gamma & \delta \end{bmatrix}
    \end{align}
    Since $R = R^\dagger$ this multiplication over $\mathbb{C}^2$ matches the expression in Lemma~\ref{lemma:circuits}. Recall $d:=\text{deg}(P) $.
    \begin{align}
        \prod_{j=1}^{k} \left( R e^{i \phi_{2j-1} Z}  R^\dagger e^{i \phi_{2j} Z} \right) R  &= \prod_{j=1}^{d} \left( R e^{i \phi_{j} Z}  \right) R= i^{-d}  \begin{bmatrix}e^{i\varphi} P & i Q \bar a \\ i Q^* \bar a & e^{-i\varphi} P^*  \end{bmatrix} \\
            \prod_{j=1}^{k} \left( V e^{i \phi_{2j-1} Z_0}  V^\dagger e^{i \phi_{2j} Z_\Pi}  \right) V  &=i^{-d}  \begin{bmatrix}e^{i\varphi} P \ket{\Pi}\bra{0} & i Q \bar a \ket{\Pi}\bra{0^\perp}  \\ i Q^* \bar a \ket{\Pi^\perp}\bra{0} & e^{-i\varphi} P^* \ket{\Pi^\perp}\bra{0^\perp}  \end{bmatrix}
    \end{align}
    So we have established:
    \begin{align}
        U_{(a)} :=   \prod_{j=1}^{k} \left(  e^{i \phi_{2j-1} Z_\psi}  e^{i \phi_{2j} Z_\Pi}  \right)    = i^{-d}  \begin{bmatrix}e^{i\varphi} P \ket{\Pi}\bra{\psi} & i Q \bar a \ket{\Pi}\bra{\psi^\perp}  \\ i Q^* \bar a \ket{\Pi^\perp}\bra{\psi} & e^{-i\varphi} P^* \ket{\Pi^\perp}\bra{\psi^\perp}  \end{bmatrix}
    \end{align}

    The calculations for $U_{(b)},U_{(c)},$ and $ U_{(d)}$ proceed similarly: plug in $e^{i \phi Z_\psi} = V e^{i \phi Z_0} V^\dagger$, and write out $V, e^{i\phi Z_0}$, and $e^{i \phi Z_\Pi}$ explicitly in the $\{\ket{0},\ket{0^\dagger}\}$ and $\{\ket{\Pi},\ket{\Pi^\perp}\}$ bases, while inserting $VV^\dagger$ as needed. All the bases will line up, so we can treat this as just regular multiplication of 2x2 matrices. Then invoke Lemma~\ref{lemma:circuits}.

    Observe also that the number of phases that appear in Lemma~\ref{lemma:circuits} is $\text{deg}(P)-1$. Since each phase corresponds to one call to either $e^{i\phi Z_\psi}$ or $e^{i\phi Z_\Pi}$, we see that $U_{(a)},U_{(b)}$ correspond to odd polynomials and $U_{(c)},U_{(d)}$ to even polynomials.

\end{proof}

\subsection{Sampling from Semi-Pellian Polynomials}

Some of the later sections in this paper require sampling access to a very broad family of polynomials. The polynomials are obtained from techniques that approximate arbitrary functions, and are then post-processed through shifting and scaling. To this end, Pellian polynomials are still rather restricted. The condition that $P(ix)P^*(ix) > 1$ for even $P$ is very difficult to ensure, essentially forcing us to use odd $P$ only. Furthermore, Pellian polynomials must also satisfy $|P(\pm 1)| = 1$, which is also not always desirable. Thus, we would like access to a more flexible family of polynomials.

Fortunately, there exists an implementable family of real polynomials that is not so restricted.

\begin{definition} A polynomial $P \in \mathbb{R}[x]$ is \textbf{semi-Pellian} if it is the real part of a Pellian polynomial.
\end{definition}

Indeed, \cite{1806.01838} not only characterize this family, but also show how to obtain the corresponding Pellian polynomials: Corollary~10 of \cite{1806.01838} states that a polynomial $P(x) \in \mathbb{R}[x]$ is semi-Pellian if and only if it has fixed parity, and $\forall x \in [-1,1]$ we have $|P(x)| \leq 1$.

To sample from semi-Pellian polynomials, we follow a similar technique to Corollary~18 of \cite{1806.01838}. Say $P$ is semi-Pellian, and say $\tilde P$ is a Pellian polynomial with $\text{Re}(\tilde P) = P$. We conditionally apply phase rotations $e^{i\phi Z}$ or $e^{-i\phi Z}$, which implements either $\tilde P$ or $\tilde P^*$. Then we use a linear combination of unitaries to average these together to obtain $P$. 

Linear combinations of unitaries require an additional flag qubit, and depending on the value of the flag we either obtain $\text{Re}(\tilde P)$ or $i\text{Im}(\tilde P)$. Thus, the bit $\mathbf{b}$ that we sample from depends on the joint outcome of the flag qubit and the output register. Independently of the flag qubit, the state found in the output register depends only on $|\tilde P|$ just like Proposition~\ref{prop:pellian}. The proof of Proposition~\ref{prop:semipellian} shows this calculation in detail.

To efficiently implement phase rotations with a conditional sign flip, we require oracles of the form:
\begin{align}
  e^{i \phi Z_\psi} \oplus e^{-i \phi Z_\psi} = \ket{0}\bra{0} \otimes e^{i \phi Z_\psi} +  \ket{1}\bra{1} \otimes e^{-i \phi Z_\psi}
\end{align}
and similarly for $\Pi$. Such an oracle can be implemented with one controlled and one regular query to $e^{i \phi Z_\psi}$. But, again, we claim that in most practical situations $e^{i \phi Z_\psi} \oplus e^{-i \phi Z_\psi}$ can be implemented with about the same query complexity as $Z_\psi$. As articulated above, most quantum algorithms take the following form: some computation flips a qubit depending on if a phase should be applied, applies $Z$ to that qubit, and then performs an uncomputation. The cost of the circuit is dominated by the computation and uncomputation. So, if the $Z$ is replaced with $e^{i \phi Z}$ or even $e^{i\phi Z}\oplus e^{-i\phi Z}$, the additional cost is negligible compared to the rest of the circuit.

\begin{proposition} \label{prop:semipellian} Say $P$ is a semi-Pellian polynomial, and say $\tilde P$ is a Pellian polynomial such that $P = \text{Re}(\tilde P)$. Then, there exist algorithms similar to the ones in Proposition~\ref{prop:pellian} that output a state $\ket{\text{out}} \in \{\ket{\psi},\ket{\psi^\perp}\}$ or  $\ket{\text{out}}\in \{\ket{\Pi},\ket{\Pi^\perp}\}$ as well as an additional state $\ket{\text{flag}} \in \{\ket{0},\ket{1}\}$. 
    
    Each of these algorithms have a target state $\ket{\text{target}}  \in \{\ket{\psi},\ket{\psi^\perp},\ket{\Pi},\ket{\Pi^\perp}\} $  so that:
    \begin{align}
        \text{Pr}\left[ \ket{\text{flag}}=\ket{0} \text{ and }\ket{\text{out}}= \ket{\text{target}}\right] &= \left|P(a)\right|^2 \\
        \text{Pr}\left[\ket{\text{out}}= \ket{\text{target}}\right] &= \left|\tilde P(a)\right|^2
    \end{align}

    The output basis, $\ket{\text{target}}$, as well as the query complexities to $e^{i\phi Z_\psi} \oplus e^{-i\phi Z_\psi}$ and $e^{i\phi Z_\Pi}\oplus e^{-i\phi Z_\Pi}$ are shown in the following table:

    \begin{center}
\def\arraystretch{1.2}
        \begin{tabular}{|c|c|c|c|c|c|}
            \hline $\text{deg}(P)$ & Input & Output Basis & $\ket{\text{target}}$ & $e^{i \phi Z_\psi} \oplus e^{-i \phi Z_\psi}$ &  $e^{i \phi Z_\Pi} \oplus e^{-i \phi Z_\Pi}$\\  \hline\hline
            2k+1 & $\ket{\psi}$ & $\ket{\Pi},\ket{\Pi^\perp}$ & $\ket{\Pi}$   & k+1 & k+2 \\ \hline
            2k+1 & $\ket{\psi^\perp}$ & $\ket{\Pi},\ket{\Pi^\perp}$ & $\ket{\Pi^\perp}$  & k+1 & k+2\\ \hline
            2k+1 & $\ket{\Pi}$ & $\ket{\psi},\ket{\psi^\perp}$ & $\ket{\psi}$  & k+2 & k+1 \\ \hline
            2k+1 & $\ket{\Pi^\perp}$ & $\ket{\psi},\ket{\psi^\perp}$ & $\ket{\psi^\perp}$  &  k+2 & k+1 \\ \hline
            2k & $\ket{\psi}$ & $\ket{\psi},\ket{\psi^\perp}$ & $\ket{\psi}$ &  k+3 & k\\ \hline
            2k & $\ket{\psi^\perp}$ & $\ket{\psi},\ket{\psi^\perp}$ & $\ket{\psi^\perp}$  &  k+3 & k\\ \hline
            2k & $\ket{\Pi}$ & $\ket{\Pi},\ket{\Pi^\perp}$ & $\ket{\Pi}$  &  k & k+3\\ \hline
            2k & $\ket{\Pi^\perp}$ & $\ket{\Pi},\ket{\Pi^\perp}$ & $\ket{\Pi^\perp}$ &  k & k+3\\ \hline
        \end{tabular}
    \end{center}
\end{proposition}

\begin{proof}
    Just as in Proposition~\ref{prop:pellian} there are actually only four algorithms since the algorithm only depends on the input basis and the parity of the degree. We only work though the case when $\text{deg}(P) = 2k+1$ and the input is $\ket{\psi}$ or $\ket{\psi^\perp}$, since the other cases follow the same pattern. Let $\phi_1,...,\phi_{2k}$ and $\varphi$ be the phases from Lemma~\ref{lemma:circuits} corresponding to $\tilde P$ and its Pell-complementary polynomial $\tilde Q$. Select $\phi_0 = \phi_{2k+1} = -\varphi/2$ to achieve:
    \begin{align}
        e^{i\phi_0Z} \prod_{j=1}^{2k+1} ( R e^{i \phi_j Z} ) = (-i)^{2k+1} \begin{bmatrix}  \tilde P & i \tilde Q \bar a \\ i \tilde Q \bar a & \tilde P  \end{bmatrix}
    \end{align}
    
    Using the same techniques from Proposition~\ref{prop:pellian} we show:
    \begin{align}
        i^{2k+1}  e^{i \phi_{0} Z_\Pi}  \prod_{j=1}^{k} (  e^{i \phi_{2j-1} Z_{\psi}}  e^{i \phi_{2j} Z_{\Pi}} )  e^{i \phi_{2k+1} Z_\psi}  &= \begin{bmatrix} \tilde P \ket{\Pi}\bra{\psi}  & i \tilde Q \bar a\ket{\Pi}\bra{\psi^\perp} \\ i \tilde Q^* \bar a\ket{\Pi^\perp}\bra{\psi} & \tilde P^*\ket{\Pi^\perp}\bra{\psi^\perp}  \end{bmatrix} =: A \label{eqn:defA}\\
            (-i)^{2k+1}  e^{-i \phi_{0} Z_\Pi}  \prod_{j=1}^{k} (  e^{-i \phi_{2j-1} Z_{\psi}}  e^{-i \phi_{2j} Z_{\Pi}} )  e^{-i \phi_{2k+1} Z_\psi}  &= \begin{bmatrix} \tilde P^* \ket{\Pi}\bra{\psi}  & -i \tilde Q^* \bar a\ket{\Pi}\bra{\psi^\perp} \\ -i \tilde Q \bar a\ket{\Pi^\perp}\bra{\psi} & \tilde P\ket{\Pi^\perp}\bra{\psi^\perp}  \end{bmatrix} =:B \label{eqn:defB}
    \end{align}

   Now we apply the trick from Corollary~18 of \cite{1806.01838}: we use a linear combinations of unitaries circuit to perform the average of two circuits $A,B$, so the top left matrix element becomes $P$. The circuit involves preparing an ancilla qubit in the $\ket{+}$ state, applying either $A$ or $B$ depending on the ancilla, and then applying a Hadamard gate to the ancilla:
    \begin{align}
        (H \otimes I)  (A \oplus B) (\ket{+} \otimes I) = \ket{0} \otimes \frac{A+B}{2} + \ket{1} \otimes \frac{A-B}{2}
    \end{align}
    This extra register on the left is $\ket{\text{flag}}$. We find: 
    \begin{align}
        \frac{A+B}{2} &=  \begin{bmatrix} \text{Re}(\tilde P) \ket{\Pi}\bra{\psi}  & - \text{Im}(\tilde Q) \bar a\ket{\Pi}\bra{\psi^\perp} \\ \text{Im}( \tilde Q) \bar a\ket{\Pi^\perp}\bra{\psi} & \text{Re}(\tilde P)\ket{\Pi^\perp}\bra{\psi^\perp}  \end{bmatrix}\\
            \frac{A-B}{2} &=  \begin{bmatrix} i\text{Im}(\tilde P) \ket{\Pi}\bra{\psi}  &  i\text{Re}(\tilde Q) \bar a\ket{\Pi}\bra{\psi^\perp} \\ i\text{Re}( \tilde Q) \bar a\ket{\Pi^\perp}\bra{\psi} &-i \text{Im}(\tilde P)\ket{\Pi^\perp}\bra{\psi^\perp} \end{bmatrix}
    \end{align}
    If we begin in the state $\ket{\psi}$, then the $\ket{0}\ket{\Pi}$ component of the output state has amplitude $\text{Re}(\tilde P) = P$, so we obtain this case with probability $|P|^2$. The $\ket{1}\ket{\Pi}$ component has amplitude $i\text{Im}(\tilde P)$, so the total probability of seeing $\ket{\text{out}} = \ket{\Pi}$ regardless of $\ket{\text{flag}}$ is $ |\text{Re}(\tilde P)|^2 + |i\text{Im}(\tilde P)|^2 = |\tilde P|^2 $. A similar pattern holds for the input state $\ket{\psi^\perp}$.

    All that remains to show is how to actually implement the unitary $A\oplus B$ and to count the total query complexity. Using the identity $CD \oplus EF = (C \oplus E)(D \oplus F)$ we see from (\ref{eqn:defA}, \ref{eqn:defB}) that $A \oplus B$ factors into a product of a $(i^{2k+1} \oplus (-i)^{2k+1})$ gate, and $k+1$ many $(e^{i \phi Z_\psi} \oplus e^{-i \phi Z_\psi})$ and $(e^{i \phi Z_\Pi} \oplus e^{-i \phi Z_\Pi})$ gates each. Finally, we require one more query to $(e^{i \phi Z_\Pi} \oplus e^{-i \phi Z_\Pi})$ to measure in the $\{\ket{\Pi},\ket{\Pi}^\perp\}$ basis.
\end{proof}

\subsection{Previous Results in this Framework}

Having established Propositions~\ref{prop:pellian} and~\ref{prop:semipellian}, we can state the full version of the theorem. 

    \begin{theorem}  \label{thm:polysamp} \textbf{Sampling from a polynomial.} Say $P(a)$ is a Pellian or a semi-Pellian polynomial. Then, there exists a quantum algorithm that samples from a random variable $\mathbf{b}_{P} \in \{0,1\}$ such that:
        \begin{align}
            \text{Pr}[\mathbf{b}_P = 1] = |P(a)|^2.
        \end{align}
        This algorithm takes a copy of $\ket{\psi},\ket{\psi^\perp},\ket{\Pi}$ or $\ket{\Pi^\perp}$ as input, and also outputs one of these four states at random. This algorithm makes $O(\text{deg}(P))$ oracle queries to $e^{i\phi Z_\psi}, e^{i \phi Z_\Pi}$ if $P$ is Pellian, and $(e^{i \phi Z_\psi} \oplus e^{-i \phi Z_\psi}),$\hspace{1mm} $(e^{i \phi Z_\Pi} \oplus e^{-i \phi Z_\Pi})$ if $P$ is semi-Pellian.
\end{theorem}

We call invoking this theorem on a particular polynomial $P$ to \textbf{sample from the polynomial} $P$. The algorithm is specifically designed to be used repeatedly. It accepts any of the four states $\ket{\psi},\ket{\psi^\perp},\ket{\Pi},\ket{\Pi^\perp}$ as input, and also only ever outputs one of these. So, regardless of the actual measurement outcomes and regardless of the sequence of polynomials  sampled from, we can always chain invocations of Theorem~\ref{thm:polysamp} into a very long circuit. 

This is only useful in situations where an input state $\ket{\psi}$ is expensive to prepare. In that case we might want to reuse that state as much as possible, and indeed Theorem~\ref{thm:polysamp} can in principle be applied arbitrarily many times given only one copy of $\ket{\psi}$. But we are by no means forced to do this: at any point, we can discard the output of the algorithm and re-prepare $\ket{\psi}$. Furthermore, the schedule of polynomials is completely independent of this choice.

Recall that Chebyshev polynomials of the first kind $T_n(a) = \cos( n \arccos(a))$ are Pellian, and can be sampled from by setting all the $\phi_j$ to $\pi/2$, in which case $e^{i \phi_j Z_\psi} = Z_\psi$ and $e^{i \phi_j Z_\Pi}= Z_\Pi$. In this case, $U_{(a)}$ from Proposition~\ref{prop:pellian} simply becomes Grover's algorithm: alternating applications of $Z_\Pi$ and $Z_\psi$, followed by a $\{\ket{\Pi},\ket{\Pi^\perp}\}$-basis measurement. So we see that several previous results in amplitude estimation can be viewed as invoking a special case of Theorem~\ref{thm:polysamp}. We will list these momentarily. 

To aid in comparing to previous results and also describe our new results in later sections, we set up some standard nomenclature about amplitude estimation algorithms. In particular, we define the random variables $\mathbf{\hat a}, \mathbf{Q}_\Pi, \mathbf{Q}_\psi, \mathbf{d}$ and $\mathbf{D}$. The query complexities $ \mathbf{Q}_\Pi, \mathbf{Q}_\psi,$ and degrees $\mathbf{d}, \mathbf{D}$ may be random since the algorithms are adaptive sometimes and can vary their runtime based on the results of measurement outcomes.

\begin{definition} \label{def:amp_est_alg}
    An \textbf{amplitude estimation algorithm} is an algorithm that, starting with $\ket{\psi}$, repeatedly samples from polynomials until it finally outputs an estimate of $a$ as well as one of $\ket{\psi},\ket{\psi^\perp},\ket{\Pi},\ket{\Pi^\perp}$. The estimate is denoted by $\mathbf{\hat{a}}$ and is a random variable over $\mathbb{R}$. The total number of queries to $e^{i \phi Z_\Pi}$, $e^{i \phi Z_\psi}$ or $(e^{i \phi Z_\psi} \oplus e^{-i \phi Z_\psi}),$\hspace{1mm} $(e^{i \phi Z_\Pi} \oplus e^{-i \phi Z_\Pi})$ are denoted by the random variables $\mathbf{Q}_\Pi, \mathbf{Q}_\psi \in \mathbb{Z}^+$ respectively. Finally, the highest degree of a polynomial ever sampled from is $\mathbf{d} \in \mathbb{Z}^+$ and the sum of the degrees of all the sampled polynomials is $\mathbf{D} \in \mathbb{Z}^+$.
\end{definition}

$\mathbf{D}$ can be regarded as the overall query complexity of the algorithm. If only one state $\ket{\psi}$ is used, then $\mathbf{D}$ is also the circuit depth. But if the state $\ket{\psi}$ is reset every time, then maximum circuit depth is only $\mathbf{d}$, even though the total query complexity is still $\mathbf{D}$.

\begin{proposition} Say an amplitude estimation algorithm only ever samples from Pellian polynomials. Then $ \mathbf{Q}_\Pi + \mathbf{Q}_\psi  = \mathbf{D}$. 
\end{proposition}

Now we list some previous results on amplitude estimation using the notation above.

\begin{description}
    \item[Relative-error estimation.] (\cite{1908.10846}, Theorem 3) For any $\eps,\delta \in [0,1/2]$ there exists an amplitude estimation algorithm such that:
    \begin{align}
        \text{Pr}[a(1-\eps) \leq \mathbf{\hat a} \leq a(1+\eps)] \geq 1-\delta
    \end{align}\\[-1.3cm]
    \begin{align}
        \exists f(a,\eps,\delta) \in O\left( a^{-1} \eps^{-1}\log(\delta^{-1}) \right) \text{ such that } \text{Pr}[ \mathbf{D} \leq f(a,\eps,\delta) ] \geq 1-\delta.
    \end{align} 
    \item[Fast additive-error estimation.] (\cite{1912.05559}, Theorem 1) For any $\eps,\delta \in [0,1/2]$ there exists an amplitude estimation algorithm such that:
    \begin{align}
        \text{Pr}[|\mathbf{\hat a}  - a| \leq \eps] &\geq 1- \delta
    \end{align}\\[-1.3cm]
    \begin{align}
        \text{Pr}\left[ \mathbf{Q_\Pi} \leq \frac{14}{\eps} \log\left( \frac{2}{\delta} \log_2 \left( \frac{\pi}{4\eps} \right) \right)  \right] &\geq 1-\delta
    \end{align}
    \item[Non-adaptive relative-error approximate counting.] (\cite{2010.04370}, Algorithm 1) Suppose $a \geq 1/N$ for some known $N \in \mathbb{Z}^+$. For any $\eps,\delta \in [0,1/2]$ there exists an amplitude estimation algorithm such that all polynomials depend exclusively on $\eps,\delta,N$ and are thus all known beforehand. It satisfies:
    \begin{align}
        \text{Pr}[a(1-\eps) \leq \mathbf{\hat a} \leq a(1+\eps)] \geq 1-\delta
    \end{align}\\[-1.3cm]
    \begin{align}
        \mathbf{D} \text{ is not random and } \mathbf{D} \in O\left(\sqrt{N/\eps}\hspace{1mm} \log(\delta^{-1})  \right)
    \end{align}
    \item[Hybrid quantum-classical estimation.] (\cite{2012.03348}, QoPrime Algorithm.) For any $\eps,\delta \in [0,1/2]$, and for any $0 < \beta  \leq 1$ there exists an amplitude estimation algorithm such that: 
    \begin{align}
        \text{Pr}[|\mathbf{\hat a}  - a| \geq \eps] &\leq \delta
    \end{align}\\[-1.3cm]
    \begin{align}
        \mathbf{D} \text{ is not random and } \mathbf{D} &\in O\left( \eps^{-(1+\beta)}  \hspace{1mm} \log(\delta^{-1})  \right)\\
        \mathbf{d} \text{ is not random and } \mathbf{d} &\in O\left( \eps^{-(1-\beta)} \right)
    \end{align}
\end{description}

We focused here on results that proved the correctness of their algorithms, which leaves out \cite{1904.10246}. All of the above algorithms, \cite{1904.10246} included, repeatedly invoke the special case of Theorem~\ref{thm:polysamp} where $P(a) = T_n(a)$. The research papers presenting these algorithms describe discarding and re-preparing the input state $\ket{\psi}$ every single time. But since Theorem~\ref{thm:polysamp} does not require this, these algorithms could all also work with just a single copy of $\ket{\psi}$. 

This means our framework has already extended the capabilities of previous works. Although, for the result of \cite{2012.03348} in particular, the whole point of separating $\mathbf{d}$ of $\mathbf{D}$ was to reduce the maximum circuit depth. If a single copy of $\ket{\psi}$ were re-used, the maximum depth would be $\mathbf{D}$, not $\mathbf{d}$. So this extension is not always sensible.

We also did not list \cite{0005055} and \cite{1907.09965} because these do not fit into the framework of Theorem~\ref{thm:polysamp}. Both of these algorithms involve phase estimation of the Grover operator, rather than simply repeated application. However, this complicated and expensive technique has largely been superseded by the faster, simpler, and more versatile algorithms listed above. There is only one capability that has not yet been `modernized': the non-destructive estimation algorithm from \cite{1907.09965}. This task is the subject of the next section.

\ifdefined\maindocument
\else
    
    \end{document}
\fi

\ifdefined\maindocument
    \section{State Repair \label{sec:repair}}
\else
    \documentclass[11pt]{article}
    \usepackage[margin=0.4in]{geometry}
    
    \begin{document}
    \begin{center}
        {\Large State Repair}
    \end{center}
\fi

    In the traditional setting for Grover's algorithm where we are searching for a marked item, the state $\ket{\psi}$ is just a uniform superposition over all items. In this case, $\ket{\psi}$ is trivial to prepare and to reflect about, at least compared to querying which items are marked. 

    However, there also exist settings where $\ket{\psi}$ is extremely expensive to prepare despite being relatively easy to reflect about. In this situation, one might hope for an amplitude estimation algorithm that could make do with a single copy of $\ket{\psi}$, and ideally also give that copy of $\ket{\psi}$ back once an estimate of $|\Pi\ket{\psi}|$ has been extracted. $\ket{\psi}$ could then be used for another part of the algorithm. We call this task \textbf{non-destructive amplitude estimation}.

    We know of two examples of such situations in the literature. One of them stems from the study of physical systems: say we have a Hamiltonian $H$ with a ground state $\ket{\psi_0}$. Our goal is to estimate the expectations of several observables $\bra{\psi_0}O_1\ket{\psi_0}, \bra{\psi_0}O_2\ket{\psi_0}$, etc. Suppose for the moment that the ground space is non-degenerate. In that case, following \cite{2002.12508}, we can design a polynomial $p(x)$ that is $\approx 1$ in the ground space and $\approx -1$ elsewhere, so that $p(H) \approx Z_{\psi_0}$. Assuming the spectral gap of $H$ is not too small, we can now efficiently reflect about $\ket{\psi_0}$. In order to prepare $\ket{\psi_0}$ however, we have to guess some trial state $\ket{\phi}$ that hopefully has large overlap $|\braket{\phi|\psi_0}|$ with $\ket{\psi_0}$, and use amplitude amplification to transform $\ket{\phi}\to\ket{\psi_0}$. Since ground state finding is QMA-complete, this step may take exponential time in general. So if it is really worth investing the time to prepare $\ket{\psi_0}$, we better only need to do so once. The previous section already describes an algorithm that requires only one copy of $\ket{\psi_0}$ to estimate a quantity, but a non-destructive algorithm could be used to extract several expectations from a single state\footnote{To be more explicit here, say we can prepare $\ket{\psi_0}$ using $P$ reflections, and we can repair $\ket{\psi_0}$ using $R$ reflections. If the trial state $\ket{\phi}$ has small overlap with $\ket{\psi_0}$ then $P \gg R$. A naive approach for estimating $M$ many observables would then cost $O(MP)$, while an approach based on state repair only costs $O(P+MR)$.}.
    
    Another example is the kind of quantum algorithm studied by \cite{1907.09965} and \cite{2009.11270}. These works consider a classical Hamiltonian $H(x)$, and are interested in preparing a qsample of its thermal distribution $\ket{\psi_\beta}$, where $\braket{x|\psi_\beta} \propto e^{-\beta H(x)/2}$. We are given reflection operators $ 2\ket{\psi_\beta}\bra{\psi_\beta} - I$ not just for the target temperature $\beta^*$ but also all intermediate temperatures. The idea is to find a sequence of temperatures $\beta_{0},\beta_{1},...\beta^*$ called a `Chebyshev cooling schedule' that guarantees $\braket{\psi_{\beta_i}|\psi_{\beta_{i+1}}}$ is never too small. With $\beta_0 = 0$, $\ket{\psi_{\beta_0}}$ is just the uniform superposition which is easy to prepare. Then we can use amplitude amplification to transform $\ket{\psi_{\beta_0}}\to\ket{\psi_{\beta_1}}\to...\to\ket{\psi_{\beta^*}}$. The remaining challenge is to identify the temperatures $\beta_1,\beta_2,...$ such that  $\braket{\psi_{\beta_i}|\psi_{\beta_{i+1}}}$  is never too small. If we have a copy of $\ket{\psi_{\beta_i}}$, we can identify $\beta_{i+1}$ via binary search, invoking amplitude estimation at each candidate temperature to estimate the overlap $\braket{\psi_{\beta_i}|\psi_{\beta_{i+1}}}$. This crucially requires amplitude estimation to be non-destructive, since the state $\ket{\psi_{\beta_i}}$ was expensive to prepare and needs to be re-used for several estimations and finally transformed into $\ket{\psi_{\beta_{i+1}}}$.

    To this end,  \cite{1907.09965} developed a non-destructive amplitude estimation algorithm based on the methods of \cite{0005055}. After running \cite{0005055} to obtain an estimate, we obtain an eigenstate of the Grover operator:
        \begin{align}
            \ket{\psi_{\pm}} := \frac{1}{\sqrt{2}}\left( \ket{\psi} \pm i \ket{\psi^\perp} \right)
        \end{align}
    Measuring this state in the $\{\ket{\psi},\ket{\psi^\perp}\}$ basis, we have `repaired' a copy $\ket{\psi}$ with probability 1/2. If we obtain $\ket{\psi^\perp}$, we can just repeat the procedure until we succeed.

    \begin{proposition} \textbf{State repair given bounds on $a$.} (\cite{1907.09965}, Theorem 18.) Say we are given any quantum state in the Grover subspace $\text{span}(\ket{\psi},\Pi\ket{\psi})$, and suppose $a := |\Pi\ket{\psi}|$ satisfies $\kappa < a < \sqrt{1-\kappa^2}$ for some known $\kappa > 0$. Then, for any $\delta >0$, there exists a quantum algorithm that, with success probability $\geq 1-\delta$, prepares either $\ket{\psi}$ or $\ket{\psi^\perp}$, each with probability 1/2. It makes $O( \kappa^{-1}  \log(\delta^{-1}) )$ queries to $Z_\psi$ and $Z_\Pi$.
\end{proposition}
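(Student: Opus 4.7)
I would split the proof into two phases. Phase~1 collapses an arbitrary state in the Grover subspace onto an eigenstate of the Grover operator $G := -Z_\psi Z_\Pi$; phase~2 measures that eigenstate in the $\{\ket{\psi},\ket{\psi^\perp}\}$ basis, exploiting the identity already highlighted in the excerpt that $|\braket{\psi|\psi_\pm}|^2 = |\braket{\psi^\perp|\psi_\pm}|^2 = 1/2$.

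For phase~1, recall that the eigenvalues of $G$ on the Grover subspace are $e^{\pm 2i\theta}$ with $\sin\theta = a$. The hypothesis $\kappa < a < \sqrt{1-\kappa^2}$ forces $\theta \in (\arcsin\kappa,\, \pi/2 - \arcsin\kappa)$, so the two eigenvalues are separated from each other by angle at least $\min(4\theta,\, 2\pi - 4\theta) \geq 4\arcsin\kappa = \Omega(\kappa)$, and both are bounded away from $\pm 1$ by the same amount. I would therefore apply quantum phase estimation on $G$ with precision $\Theta(\kappa)$, which costs $O(\kappa^{-1})$ calls to $G$, hence $O(\kappa^{-1})$ queries each to $Z_\psi$ and $Z_\Pi$. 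A single run returns a bit identifying $+$ versus $-$ with constant success probability; to boost this to $\geq 1 - \delta$, I would run $O(\log \delta^{-1})$ independent rounds and take a majority (or equivalently use a QSVT eigenstate-filtering polynomial of the same degree, in the style of \cite{2002.12508}). The total cost is $O(\kappa^{-1}\log\delta^{-1})$, matching the claim.

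Phase~2 is cheap. Implement the projective measurement $\{\ket{\psi}\bra{\psi},\, I - \ket{\psi}\bra{\psi}\}$ by an ancilla-controlled $Z_\psi$ followed by a Hadamard-basis measurement of the ancilla ($O(1)$ queries to $Z_\psi$). Because $\ket{\psi_\pm} = \tfrac{1}{\sqrt{2}}(\ket{\psi} \pm i\ket{\psi^\perp})$, the two outcomes occur with probability exactly $1/2$ each, leaving the register in $\ket{\psi}$ or (up to a harmless global phase $\pm i$) $\ket{\psi^\perp}$.

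\textbf{Main obstacle.} The delicate point is that phase estimation only approximately projects onto eigenstates: after boosting, the post-phase-1 state deviates from the announced eigenstate by $O(\delta)$. Fortunately, both $\ket{\psi_+}$ and $\ket{\psi_-}$ produce the $\{\ket{\psi},\ket{\psi^\perp}\}$ outcomes with the \emph{same} $(1/2,1/2)$ marginal, so even a biased mixture of the two gives the correct output distribution. The $\delta$ imprecision therefore only contributes to the outright failure event (a mislabeled $\pm$ announcement, which is irrelevant to the user since the output state is still correct), and the stated success probability $\geq 1 - \delta$ follows. A further sanity check is that the analysis does not depend on how the input state decomposes in the $\{\ket{\psi_+},\ket{\psi_-}\}$ basis, as required by the statement.
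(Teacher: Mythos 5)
Your proposal is correct and follows essentially the same route as the source: the paper does not reprove this proposition but cites it from \cite{1907.09965}, and its surrounding discussion describes exactly your two phases — phase estimation on the Grover operator \`a la \cite{0005055}, with $\kappa$ setting the precision needed to resolve the eigenphases of $\ket{\psi_\pm}$, followed by a measurement in the $\{\ket{\psi},\ket{\psi^\perp}\}$ basis whose outcomes are $1/2$ each. Your observation that both eigenstates yield the same $(1/2,1/2)$ marginal, so only the failure-to-collapse event (controlled by the $\kappa$-dependent precision) matters, is precisely the point the paper emphasizes when explaining why the bound $\kappa < a < \sqrt{1-\kappa^2}$ is a necessary hidden assumption.
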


    However, \cite{1907.09965}, featured a hidden assumption. We made this assumption explicit in the above: that some bounds $\kappa < a < \sqrt{1-\kappa^2}$ are known. $\kappa$ informs the precision of phase estimation on the Grover operator: if precision is not high enough, then we fail to distinguish the eigenvectors $\ket{\psi_\pm}$, so measuring the output fails to collapse the superposition over them. That means we do not have a copy of $\ket{\psi_{\pm}}$ after running \cite{0005055}, so the probability of obtaining $\ket{\psi}$ after measuring might be very small\footnote{In particular, looking at \cite{1907.09965}'s informal discussion on page 17, the states $F_M^{-1}\ket{S_M(\theta/\pi)}$ and $F_M^{-1}\ket{S_M(1-\theta/\pi)}$ are only orthogonal when $\theta$ is sufficiently far from 0 or $\pi$. }. In their analysis, \cite{1907.09965} implicitly focus on the case where $\eps < a < 1-\eps$, in which case the procedure always works. But if no bounds on $a$ are known, then it is not clear how to use their result to achieve non-destructive estimation.

In this section we present an amplitude estimation algorithm that overcomes this limitation. In Theorem~\ref{thm:repair} we show that any amplitude estimation algorithm with total degree $\mathbf{D}$ can be modified to also output a copy of the input state $\ket{\psi}$ at the end with probability $\geq 1-\delta$ for any $\delta > 0$. This new algorithm has total degree $O(  \delta^{-1/2}\hspace{1mm} \mathbf{D})$.

    Just like \cite{1907.09965}, the main idea is to add a `repair step' to the end of another amplitude estimation algorithm that transforms whatever state we have back into $\ket{\psi}$. However, the only piece of information that the repair step needs is the \emph{length} of the estimation algorithm that was just executed. We emphasize the flexibility of this result: our repair procedure can be appended to any amplitude estimation algorithm in the framework of Theorem~\ref{thm:polysamp}, which includes several algorithms from previous works as well as the other new algorithms we present in later sections of this work.

    We now outline the proof. Since $a$ is unknown, the main idea is to split the analysis into two cases. Let $\kappa$ be some threshold depending on $\delta$ and $\mathbf{D}$, and let $\bar \kappa := \sqrt{1-\kappa^2}$ as usual.
    \begin{itemize}
        \item Say $a \leq \kappa$ or $\bar \kappa \leq a$. Then the estimation algorithm is unlikely to have damaged $\ket{\psi}$ at all.
        \item Say $\kappa < a < \bar \kappa$. Then $\ket{\psi}$ can be repaired using $O(\kappa^{-1})$ rounds of amplitude amplification.
    \end{itemize}
    In other words, we have used the runtime $\mathbf{D}$ of the previous procedure to essentially obtain bounds on $a$. This is possible because we need $\mathbf{D}$ to be $\sim a^{-1}$ in order to perturb $\ket{\psi}$. We establish this fact first, and then we discuss amplitude amplification.

    Amplitude estimation algorithms in the framework of Theorem~\ref{thm:polysamp} may appear difficult to analyze in such generality because they could sample from so many polynomials $P$. But we can simplify things via two observations. First, consider sampling from a semi-Pellian polynomial $P$ that is the real part of a Pellian polynomial $\tilde P$. Looking at Proposition~\ref{prop:semipellian}, the actual transition probabilities between states are determined by $\tilde P$, not $P$. So we can assume without loss of generality that we only ever sample from Pellian polynomials. Second, it turns out that, among Pellian polynomials, Chebyshev polynomials essentially maximize the probability of damaging $\ket{\psi}$ near when $a \leq \kappa$ or $\bar \kappa\leq a$. That means that we can assume without loss of generality that the previous algorithm only ever sampled from Chebyshev polynomials.

    Our first goal is to use properties of Chebyshev polynomials to determine good choices for bounds on $a$, that is, to determine $\kappa$. To do so, we employ a geometrical argument involving small rotations on the Bloch sphere. Chebyshev polynomials are commonly interpreted in terms of rotation: For $|a| \leq 1$, it is a standard fact that $T_d(a) = \cos(d \arccos(a))$. However, for small $a$, $\arccos(a)$ is a large angle, not a small angle. We would prefer to consider rotations about $\theta := \arcsin(a)$, which is small whenever $a$ is small. Fortunately we also have the following:

    \begin{fact} \label{fact:sincheb} Say $|x| \leq 1$. If $d$ is odd then $|T_d(x)|^2 = |\sin(d \arcsin(x))|^2$. If $d$ is even, then $|T_d(x)|^2 = 1 - |\sin(d \arcsin(x))|^2$
\end{fact}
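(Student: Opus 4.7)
The plan is to reduce this to the standard Chebyshev identity $T_d(\cos\phi) = \cos(d\phi)$ via the substitution that converts $\arccos$ into $\arcsin$. Set $\theta := \arcsin(x)$, so that $x = \sin\theta = \cos(\pi/2 - \theta)$ for $|x| \leq 1$. Then
\begin{align}
    T_d(x) \;=\; T_d(\cos(\pi/2 - \theta)) \;=\; \cos(d\pi/2 - d\theta).
\end{align}

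Expanding via the cosine angle-subtraction formula yields
\begin{align}
    T_d(x) \;=\; \cos(d\pi/2)\cos(d\theta) + \sin(d\pi/2)\sin(d\theta).
\end{align}
I would then split on the parity of $d$. When $d$ is odd, $\cos(d\pi/2) = 0$ and $\sin(d\pi/2) = \pm 1$, so $T_d(x) = \pm \sin(d\theta) = \pm \sin(d\arcsin(x))$, giving $|T_d(x)|^2 = |\sin(d\arcsin(x))|^2$. When $d$ is even, $\sin(d\pi/2) = 0$ and $\cos(d\pi/2) = \pm 1$, so $T_d(x) = \pm\cos(d\theta)$, and the Pythagorean identity $\cos^2 = 1 - \sin^2$ gives $|T_d(x)|^2 = 1 - |\sin(d\arcsin(x))|^2$.

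There is no real obstacle here; the fact is essentially a rephrasing of a well-known trigonometric identity, and the proof is a short one-line calculation after the substitution $\theta = \arcsin(x)$. The only thing to keep in mind is to work with $|T_d(x)|^2$ rather than $T_d(x)$ itself, since the sign $\pm 1$ coming from $\sin(d\pi/2)$ or $\cos(d\pi/2)$ depends on $d \bmod 4$ and is irrelevant to the squared-magnitude statement.
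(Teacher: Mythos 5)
Your proof is correct, and the paper states this as a bare Fact with no proof of its own, so there is nothing to diverge from: the substitution $x = \cos(\pi/2 - \arcsin x)$ into $T_d(\cos\phi) = \cos(d\phi)$ followed by the parity split on $\cos(d\pi/2)$ and $\sin(d\pi/2)$ is exactly the standard one-line argument the authors are implicitly relying on. Your remark that the sign depending on $d \bmod 4$ is washed out by taking squared magnitudes is the right thing to flag, and the domain check ($\pi/2 - \arcsin x \in [0,\pi]$ so the cosine form of $T_d$ applies) is implicit but fine.
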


In the previous section, we expressed Pellian polynomials in terms of alternating reflections and rotations. However, it is also possible to express them in terms of alternating rotations alone. Rephrasing part of Theorem~3 of \cite{1806.01838}, we find that if $P,Q \in\mathbb{C}[a]$ is a Pell pair, then there exist phases $\phi_1,...,\phi_{d}$ such that:
\begin{align}
\bra{0} e^{i \phi_0 Z} \prod_{j=1}^{d} ( W  e^{i \phi_j Z}) \ket{0} = P \text{ where } W := \begin{bmatrix} a & i\bar a \\ i\bar a & a \end{bmatrix}. \label{eqn:rotpellian}
\end{align}

    Notice that $W = e^{iX\arccos(a)}$, but again we would like to write things in terms of $\theta := \arcsin(a)$. This is achieved by adding an extra factor of $iX$:
\begin{align}
W = \begin{bmatrix} \sin(\theta) & i\cos(\theta) \\ i\cos(\theta) & \sin(\theta) \end{bmatrix} = \begin{bmatrix} \cos\left(\frac{\pi}{2}-\theta\right) & i\cos\left(\frac{\pi}{2}-\theta\right) \\ i\cos\left(\frac{\pi}{2}-\theta\right) & \sin\left(\frac{\pi}{2}-\theta\right) \end{bmatrix} = e^{i \left(\frac{\pi}{2}-\theta\right) X} = i X e^{i\theta X}
\end{align}

Now we can express both general Pellian polynomials and Chebyshev polynomials in terms of small rotations $\theta$. This allows us to show that Chebyshev polynomials always bound Pellian polynomials near $a\approx 0$ or $a \approx 1$.

\begin{lemma} \label{lemma:chebyextreme} \textbf{Chebyshev polynomials extremize Pellian polynomials.} Say $P$ is a Pellian polynomial of degree $d$, and say $a \leq \sin\left(\frac{\pi}{2d} \right)$. If $d$ is odd, then $|P(a)| \leq |T_d(a)|$. If $d$ is even, then $|P(a)| \geq |T_d(a)|$.

Similarly, say $\cos\left(\frac{\pi}{2d}\right) \leq a$. Then $|P(a)| \geq |T_d(a)|$, regardless of if $d$ is even or odd.
\end{lemma}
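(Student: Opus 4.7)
My plan is to pass to the rotation form \eqref{eqn:rotpellian} and argue geometrically on the Bloch sphere. The first step is to absorb the interior $Z$-rotations $e^{i\phi_j Z}$ into conjugations so that the circuit becomes a product of $d$ rotations by angle $\pi - 2\theta$ about axes $\hat n_1, \ldots, \hat n_d$ in the $XY$-plane, applied to $\ket{0}$, with the outer $Z$-rotations contributing only a global phase to $\braket{0|\psi}$. Different choices of the phases $\phi_j$ correspond to different choices of axes; the Chebyshev case corresponds to taking all axes equal, which a direct calculation of $\bra{0} W^d \ket{0}$ confirms agrees with Fact~\ref{fact:sincheb}.

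For the small-$a$ regime $\theta \leq \pi/(2d)$, the key trick is the decomposition $R_{\hat n_k}(\pi - 2\theta) = R_{\hat n_k}(\pi) R_{\hat n_k}(-2\theta)$. A $\pi$-rotation about any axis in the $XY$-plane swaps the north and south poles, so the ``ideal'' trajectory obtained by keeping only the $\pi$-rotation factors deterministically sends $\ket{0}$ to $\ket{1}$ for $d$ odd and back to $\ket{0}$ for $d$ even, regardless of the axes. Comparing the actual and ideal trajectories by induction, each correction $R_{\hat n_k}(-2\theta)$ displaces any point by geodesic distance at most $2\theta$ on the Bloch sphere, and the intervening $\pi$-rotations are Euclidean isometries that transport this accumulated deviation without amplifying it; hence the actual and ideal final states lie within geodesic distance $2d\theta$ of each other. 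Keeping this bookkeeping clean when the axes $\hat n_k$ differ --- and in particular making sure the per-step errors compose additively rather than compounding --- is the main technical point I expect to be careful about, and it rests entirely on that isometry observation.

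To convert the geometric bound into the desired polynomial inequality, I would use the identity $|\braket{0|\psi}|^2 = \cos^2(\beta/2)$, where $\beta \in [0,\pi]$ is the polar angle of $\ket\psi$ on the Bloch sphere. For $d$ odd, the ideal ends at the south pole, so $\beta \geq \pi - 2d\theta$ and $|P(a)|^2 \leq \sin^2(d\theta) = |T_d(a)|^2$. For $d$ even, the ideal ends at the north pole, so $\beta \leq 2d\theta$ and $|P(a)|^2 \geq \cos^2(d\theta) = |T_d(a)|^2$. The hypothesis $a \leq \sin(\pi/(2d))$ is exactly what forces $2d\theta \leq \pi$, so these bounds are non-trivial. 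The large-$a$ case $\cos(\pi/(2d)) \leq a$ is handled more directly: writing $\epsilon := \pi/2 - \theta \leq \pi/(2d)$, each rotation angle $2\epsilon$ is small, so the triangle inequality on rotation angles alone places the final state within geodesic distance $2d\epsilon$ of $\ket{0}$, yielding $|P(a)|^2 \geq \cos^2(d\epsilon) = \cos^2(d\arccos a) = |T_d(a)|^2$ for both parities of $d$.
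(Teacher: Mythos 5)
Your proposal is correct and follows essentially the same route as the paper: the rotation form of Pellian polynomials from \cite{1806.01838}, conjugated small rotations about axes in the $XY$-plane, a geodesic triangle inequality on the Bloch sphere bounding the total displacement by $d$ times the per-step angle, and Fact~\ref{fact:sincheb} to convert that angle bound into the Chebyshev comparison. The only difference is cosmetic: where you split each step as $R_{\hat n_k}(\pi)R_{\hat n_k}(-2\theta)$ and transport the accumulated deviation through the $\pi$-rotations as isometries, the paper commutes the factor $iX$ to the front algebraically (via $e^{i\phi Z}X = Xe^{-i\phi Z}$), leaving only the small rotations to analyze.
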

\begin{proof} We begin by plugging $W = iX e^{i\theta X}$ into (\ref{eqn:rotpellian}). Observe that $e^{i\phi Z}X = Xe^{-i\phi Z}$, so we can find new $\phi_j'$ such that:
    \begin{align}
P = \bra{0} e^{i \phi_0 Z} \prod_{j=1}^{d} (  i X e^{i \theta X} e^{i \phi_j Z} ) \ket{0} = \bra{0} (iX)^d e^{i \phi_0' Z} \prod_{j=1}^{d} ( e^{i\theta X} e^{i \phi_j' Z}  ) \ket{0}
\end{align}
   Next, we repeatedly insert $e^{i \phi Z} e^{- i \phi Z}$ for various values of $\phi$, and absorb these rotations into the $\phi_j'$ to make $\phi_j''$ such that:
\begin{align}
    P = \bra{0} (iX)^d e^{i  \phi_0'' Z} \prod_{j=1}^{d} ( e^{i \phi_j'' Z} e^{i\theta X} e^{-i  \phi_j'' Z}  ) \ket{0} \label{eqn:blochrots}
\end{align}

Now we make a geometric argument involving the Bloch sphere to bound $|P|^2$.  We begin with the $a \leq \sin\left(\frac{\pi}{2d}\right)$ case. Our goal is to show that if $d$ is odd, then $|P(a)| \leq |T_d(a)|$. If $d$ is even, then $|P(a)| \geq |T_d(a)|$.

Consider the dynamics of (\ref{eqn:blochrots}) on the Bloch sphere, as depicted in Figure~\ref{fig:bloch}. We begin in the $\ket{0}$ state, and repeatedly make rotations $e^{i \theta X_{\phi}}$ around axes $X_{\phi} := e^{i \phi Z} X e^{-i \phi Z} $ that are confined to the XY-plane. Finally, $|P|^2$ is the probability of measuring $\ket{0}$ in the even case, and $\ket{1}$ in the odd case.
 
    Say we are in the odd case, where we want to give an upper bound on $|P|^2$. This is the probability of measuring $\ket{1}$, which corresponds to how close to the south pole of the Bloch sphere we are.  Each rotation $e^{i \theta X_{\phi}}$ moves us an arclength of $\theta$, satisfying $\theta \leq \frac{\pi}{2d}$. While these could cancel each other out depending on the $\phi_j''$, they achieve the greatest distance when they all rotate in the same direction, in which case they achieve a total arclength of $d\theta$. Since $d\theta \leq \frac{\pi}{2}$, we can never overshoot the south pole $\ket{1}$ if we do this. So the final state's amplitude on $\ket{1}$ is at most $\sin(d \theta)$, so $|P|^2 \leq \sin^2(d\theta) = |T_d(a)|^2$. 
    
In the even case we want to give a lower bound on $|P|^2$. This is the probability of measuring $\ket{0}$, which corresponds to how close to the north pole we are. But a lower bound on $|P|^2$ can be obtained from an upper bound on $1 - |P|^2$, which measures how close to the south pole we are. But this is exactly the quantity we bounded in the previous paragraph: we have $1 - |P|^2 \leq \sin^2(d \theta)$. With some trigonometric identities this yields $|P|^2 \geq |T_d(a)|^2$.
 
 The $\cos\left( \frac{\pi}{2d}\right)\leq a$ case is very similar. We first notice that $\bar a \leq \sin\left( \frac{\pi}{2d} \right)$, so we define $\bar\theta := \arcsin(\bar a) = \arccos(a)$, and observe that $W = e^{i \bar \theta X}$. We repeat the calculation without needing to propagate forward an $iX$, and arrive at the existence of some $\bar \phi_j''$ such that:
  \begin{align}
P = \bra{0} e^{i \bar \phi_0'' Z} \prod_{j=1}^{d} ( e^{i \bar \phi_j'' Z} e^{i\bar \theta X} e^{-i \bar \phi_j'' Z}  ) \ket{0}
\end{align}
 So, regardless of the parity of $d$, we arrive in a similar situation: we start at the north pole of the Bloch sphere with $\ket{0}$, apply $d$ small rotations of angle $\bar\theta \leq \frac{\pi}{2d}$ around various axes $X_{\bar\phi_k''}$. Afterwards, $|P|^2$ is the probability that we measure $\ket{0}$, or equivalently $1- |P|^2$ is the probability that we measure $\ket{1}$. As we argued before we have $1-|P|^2 \leq \sin^2(d\bar \theta)$, so we have $|P|^2 \geq |T_n(a)| $.
 
\end{proof}

\begin{figure}[h]
     \centering
         \includegraphics[width=0.4\textwidth]{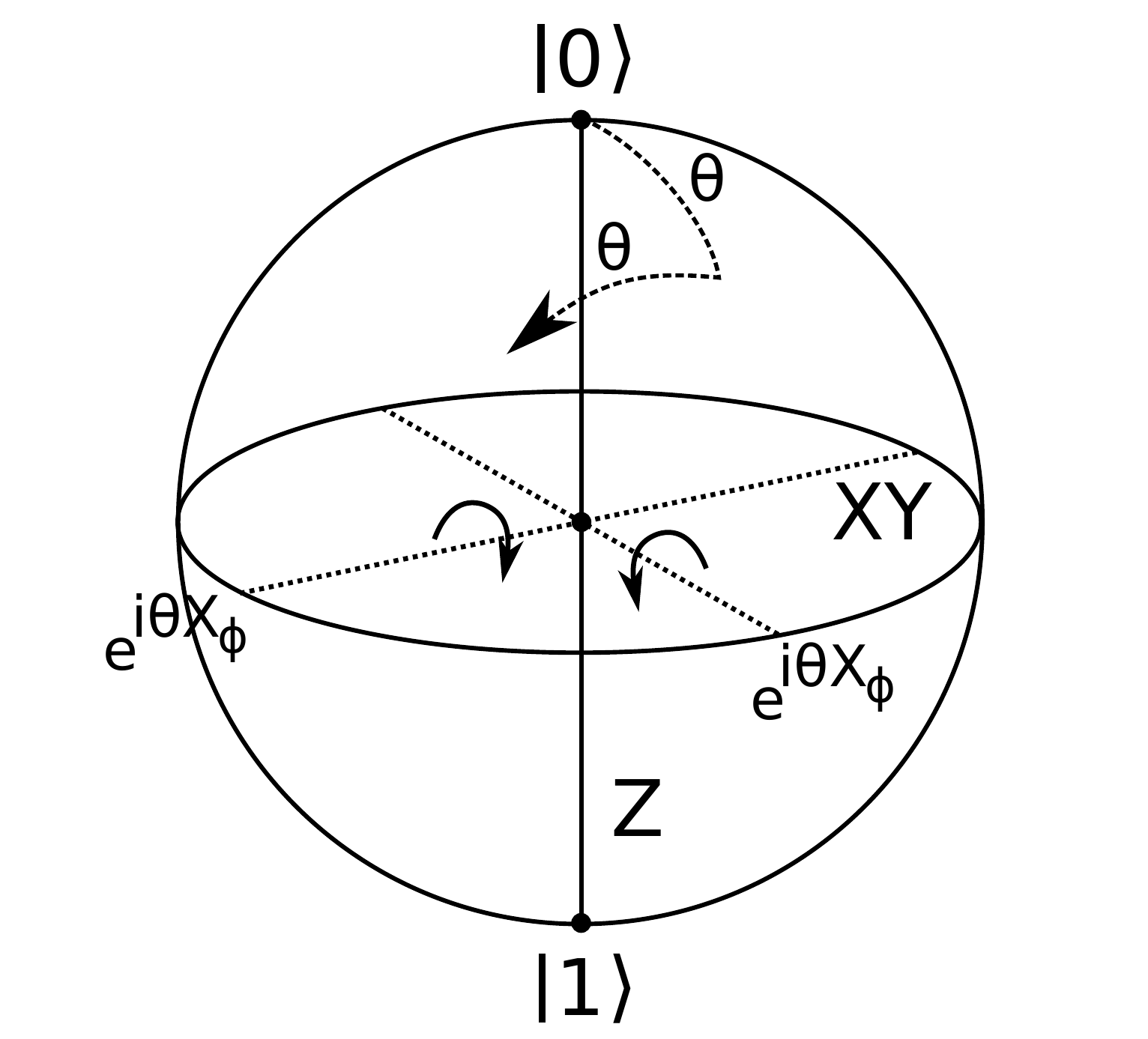}
        \caption{Visualization of the proof of Lemma~\ref{lemma:chebyextreme}. In particular, consider (\ref{eqn:blochrots}). The quantum state starts at $\ket{0}$, and is then acted on by several rotations $e^{i \theta X_{\phi}}$ around axes $X_{\phi} := e^{i \phi Z} X e^{-i \phi Z} $. The arclength is always $\theta$, and the axes are confined to the XY-plane. We see that the furthest angle we could achieve after $d$ many such rotations is $d\theta$. }
        \label{fig:bloch}
\end{figure}

We have demonstrated that for sufficiently extreme $a$ (that is, $a \leq \kappa$ or $\bar \kappa \leq a$), we can consider Chebyshev polynomials without loss of generality. We want to argue that for extreme $a$, the probability that we damage $\ket{\psi}$ is very small. However, recalling Figure~\ref{fig:transitions}, sampling from an odd polynomial essentially forces us to obtain a copy of $\ket{\Pi}$ or $\ket{\Pi^\perp}$, therefore unavoidably damaging $\ket{\psi}$. We argue that this is fine, since for extreme $a$, one of these is always very close to $\ket{\psi}$ and the other is very far from $\ket{\psi}$. In particular, if $a \leq \kappa$, then as we sample from odd polynomials we will just bounce back and forth between $\ket{\psi}$ and $\ket{\Pi^\perp}$ with high probability. Figure~\ref{fig:nearbystates} summarizes the two cases.

\begin{figure}[h]
     \centering
     \begin{subfigure}[b]{0.3\textwidth}
         \centering
         \includegraphics[width=0.9\textwidth]{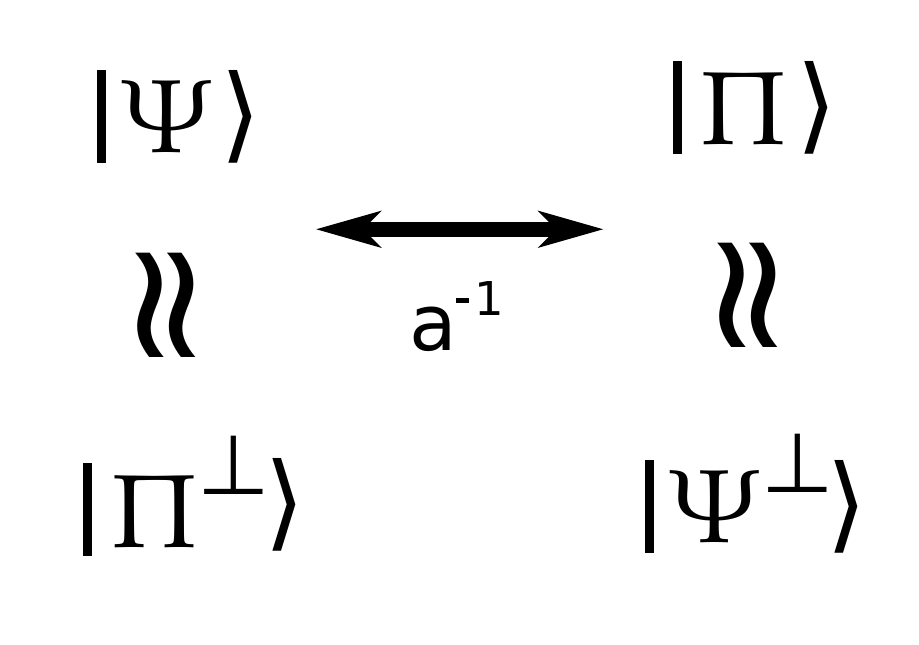}
         \caption{$a$ is small: $a \leq \kappa$}
     \end{subfigure}
     \hspace{5mm}
     \begin{subfigure}[b]{0.3\textwidth}
         \centering
         \includegraphics[width=0.9\textwidth]{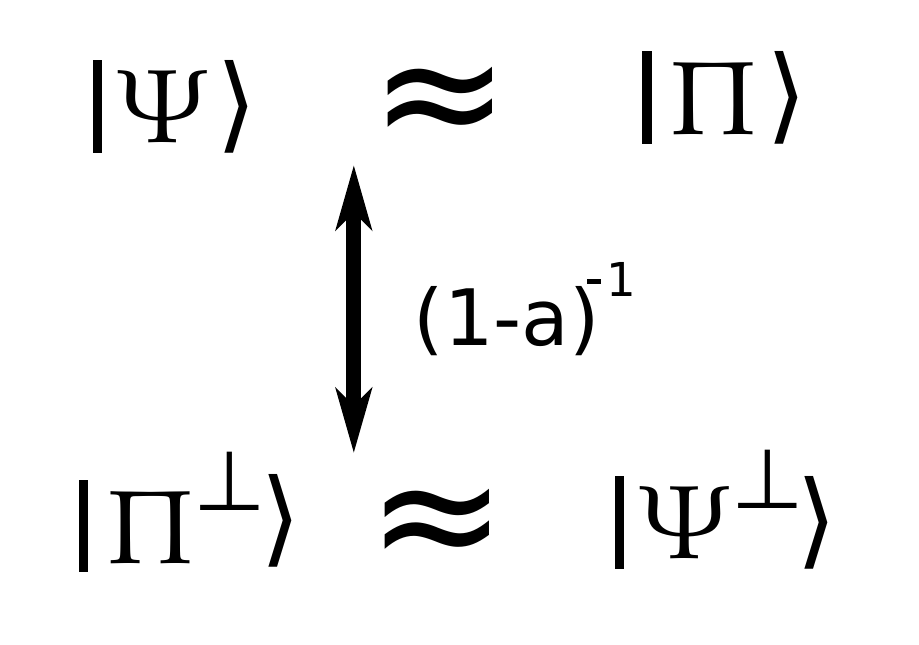}
         \caption{$a$ is large: $\bar \kappa \leq a$}
     \end{subfigure}
     \hspace{5mm}
     \begin{subfigure}[b]{0.3\textwidth}
         \centering
         \includegraphics[width=0.8\textwidth]{figures/pi2psi.pdf}
        \vspace{3mm}
         \caption{Reprint of Figure~\ref{fig:pi2psi} \label{fig:pi2psi_again}}
        \vspace{-3mm}
     \end{subfigure}
     \hfill
        \caption{For extreme $a$, the states $\ket{\psi},\ket{\psi^\perp}, \ket{\Pi}, \ket{\Pi^\perp}$ group into two pairs. The states in each pair are very close together, and the two pairs are very far apart. That means that if we sample from an odd polynomial, then unless that polynomial has very high degree we are going to just bounce back and forth between $\{\ket{\psi},\ket{\Pi^\perp}\}$ for $a\leq \kappa$, or between $\{\ket{\psi},\ket{\Pi}\}$ when $a \geq \bar \kappa$.}
        \label{fig:nearbystates}
\end{figure}

    \begin{proposition} \label{prop:extreme}  \textbf{For extreme $a$, we are unlikely to damage $\ket{\psi}$}. Say we just ran an amplitude estimation algorithm such that the sum of the degrees of all the sampled polynomials is $\mathbf{D}$. Then, the following holds for any $\delta > 0$:

\begin{itemize}
    \item If $a \leq \sin(\sqrt{\delta}/\mathbf{D})$, then the algorithm returns a copy of $\ket{\psi}$ or $\ket{\Pi^\perp}$ with probability $\geq 1-\delta$.
    \item If $a \geq \cos(\sqrt{\delta}/\mathbf{D}) $, then the algorithm returns a copy of $\ket{\psi}$ or $\ket{\Pi}$ with probability $\geq 1-\delta$.
\end{itemize}

\end{proposition}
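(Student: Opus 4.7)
The plan is to reduce the analysis to Chebyshev polynomials, bound the per-step probability of transitioning out of a ``good pair'' of nearby states, and then union-bound over the $k$ sampling steps. I will focus on the $a \leq \sin(\sqrt{\delta}/\mathbf{D})$ case; the $a \geq \cos(\sqrt{\delta}/\mathbf{D})$ case is symmetric, with $\ket{\Pi}$ and $\ket{\Pi^\perp}$ swapped.

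For the reduction, as the outline already observes, a semi-Pellian sample has the same transition dynamics as its Pellian lift, so WLOG every sample is drawn from a Pellian polynomial. Let the degrees of the samples be $d_1,\dots,d_k$ with $\sum_i d_i = \mathbf{D}$. Assuming $\delta \leq \pi^2/4$ (else the claim is vacuous since the hypothesis on $a$ is empty), we have $\arcsin(a) \leq \sqrt{\delta}/\mathbf{D} \leq \pi/(2 d_i)$ for every $i$, so $a \leq \sin(\pi/(2 d_i))$ and Lemma~\ref{lemma:chebyextreme} applies: replacing each Pellian polynomial by the Chebyshev polynomial $T_{d_i}$ of the same degree gives the adversarial worst case for damaging $\ket{\psi}$. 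So it suffices to prove the bound when the $i$-th sample uses $T_{d_i}$.

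For the per-step bound, set $\theta := \arcsin(a) \leq \sqrt{\delta}/\mathbf{D}$, and call $\{\ket{\psi},\ket{\Pi^\perp}\}$ the \emph{good pair} (these coalesce as $a \to 0$; see Figure~\ref{fig:nearbystates}) and $\{\ket{\psi^\perp},\ket{\Pi}\}$ the \emph{bad pair}. Combining Proposition~\ref{prop:semipellian} with Fact~\ref{fact:sincheb}, I would check that sampling $T_{d_i}$ from any good-pair state leaves the good pair with probability at most $\sin^2(d_i \theta)$: for odd $d_i$ the sample produces $\ket{\Pi}$ with probability $|T_{d_i}(a)|^2 = \sin^2(d_i\theta)$, and for even $d_i$ it produces $\ket{\psi^\perp}$ with probability $1 - |T_{d_i}(a)|^2 = \sin^2(d_i\theta)$. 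Since $d_i\theta \leq \sqrt{\delta} \leq \pi/2$, the bound $\sin(d_i\theta) \leq d_i \theta$ gives a per-step escape probability of at most $d_i^2 \delta/\mathbf{D}^2$.

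A union bound then gives a total escape probability of at most $\sum_i d_i^2 \cdot \delta/\mathbf{D}^2 \leq (\sum_i d_i)^2\,\delta/\mathbf{D}^2 = \delta$, so with probability $\geq 1-\delta$ the algorithm ends in $\ket{\psi}$ or $\ket{\Pi^\perp}$, as required. The main obstacle I anticipate is making the per-step bound rigorous: one must carefully treat the two parities in Fact~\ref{fact:sincheb} and verify that in both cases it is the probability of \emph{escaping} the good pair (not the probability of some fixed specific transition) that is controlled by $\sin^2(d_i\theta)$. Pleasantly, the condition $d_i\theta \leq \pi/2$ that makes Lemma~\ref{lemma:chebyextreme}'s Chebyshev extremization kick in is exactly the regime in which the small-angle bound $\sin(d_i\theta) \leq d_i\theta$ holds, so the reduction step and the arithmetic step fit together without slack.
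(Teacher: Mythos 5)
Your proposal is correct and follows essentially the same route as the paper: reduce to Pellian samples, use Lemma~\ref{lemma:chebyextreme} to bound the per-step escape probability by $\sin^2(d_i\arcsin(a)) \leq d_i^2\delta/\mathbf{D}^2$ (with the parity split handled exactly as the paper does), then union bound via $\sum_i d_i^2 \leq \mathbf{D}^2$. Your explicit check that $\sqrt{\delta}/\mathbf{D} \leq \pi/(2d_i)$ so the lemma applies is a small point of care the paper glosses over; the only caveat is that the large-$a$ case is not a literal swap of $\ket{\Pi}$ and $\ket{\Pi^\perp}$ (there the lemma gives $|P|\geq|T_d|$ for both parities and one works with $\arccos(a)$), but the argument goes through identically.
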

\begin{proof} For the purposes of transitions between states $\{\ket{\psi},\ket{\psi^\perp},\ket{\Pi},\ket{\Pi^\perp}\}$ we can without loss of generality assume that the algorithm sampled from Pellian polynomials only. This is because sampling from a semi-Pellian polynomial $P$ entails finding a Pellian polynomial $\tilde P$ such that $\text{Re}(\tilde P) = P$, and the transitions between states are determined by $\tilde P$ in exactly the same way (see Proposition~\ref{prop:semipellian}).

While in general $\mathbf{D}$ is a random variable, the randomness does not really matter for this proof so we drop the boldface and just write $D$.  We begin with the $a \leq \sin(\sqrt{\delta}/D)$ case. Looking at Figure~\ref{fig:transitions}, we will show that with probability $\geq 1-\delta$, we only see the transitions $\ket{\psi}\to\ket{\Pi^\perp}, \hspace{1mm} \ket{\Pi^\perp}\to\ket{\psi},\hspace{1mm} \ket{\psi}\to\ket{\psi}$ or $\ket{\Pi^\perp}\to\ket{\Pi^\perp}$. That way, we can only finish with $\ket{\psi}$ or $\ket{\Pi^\perp}$.

    Say we sampled from $m$ polynomials total, and the $j$'th polynomial had degree $d_j$. That way $\sum_{j=1}^m d_j =D$. If the $j$'th polynomial $P_j$ is odd, we must show that the $\ket{\psi}\to\ket{\Pi}$ and  $\ket{\Pi^\perp}\to\ket{\psi^\perp}$ transitions are unlikely. These each occur with probability $|P_j|^2$. Since $\arcsin(a) \leq \sqrt{\delta}/D$, we can invoke Lemma~\ref{lemma:chebyextreme} to obtain $|P_j(a)| \leq |T_{d_j}(a)|$:
\begin{align}
 \text{(odd $d_j$) } \to \hspace{1mm} |P_j(a)|^2 \leq |T_{d_j}(a)|^2 \leq |\sin(d_j \arcsin(a)  )   |^2 \leq | \sin(\sqrt{\delta} \cdot d_j/ D) |^2 \leq \delta \cdot d_j^2 / D^2
\end{align}
If the $j$'th polynomial $P_j$ has even degree $d_j$, then we must show that the $\ket{\psi}\to\ket{\psi^\perp}$ and  $\ket{\Pi^\perp}\to\ket{\Pi}$ transitions are unlikely. These occur with probability $1-|P_j|^2$. Lemma~\ref{lemma:chebyextreme} gives us $|P_j(a)| \geq |T_{d_j}(a)|$.
\begin{align}
 \text{(even $d_j$) } \to \hspace{1mm} 1-|P_j(a)|^2 \leq 1 - |T_{d_j}(a)|^2 \leq  |\sin(d_j \arcsin(a)  )   |^2  \leq \delta \cdot d_j^2 / D^2
\end{align}
Finally, we observe that $\sum_j d_j^2 \leq \sum_{j,k} d_j d_k = D^2 $ to complete the proof with a union bound:
\begin{align}
\text{Pr}[\text{ever create }\ket{\psi^\perp},\ket{\Pi^\perp}] \leq \sum_{j=1}^m \delta \cdot d_j^2 / D^2 \leq \delta.
\end{align}

    Now we consider $a \geq \cos(\sqrt{\delta}/D) $, which implies $\arccos(a) \leq \sqrt{\delta}/D$. In this case we want to demonstrate that we only ever see the transitions $\ket{\psi}\to\ket{\Pi}, \hspace{1mm} \ket{\Pi}\to\ket{\psi},\hspace{1mm} \ket{\psi}\to\ket{\psi}$ or $\ket{\Pi}\to\ket{\Pi}$, so we finish only with $\ket{\psi}$ or with $\ket{\Pi}$. Since $a \geq \cos( \sqrt{\delta}/D )$, Lemma~\ref{lemma:chebyextreme} gives us $|P_j(a)| \geq |T_{d_j}(a)|$ regardless of $d_j$.

If $d_j$ is odd, then we want to show that the $\ket{\psi}\to\ket{\Pi^\perp}$ and $\ket{\Pi}\to\ket{\psi^\perp}$ transitions are unlikely. These each occur with probability $1-|P_j|^2$, so:
    \begin{align}
        \text{(odd $d_j$)} \to\hspace{1mm}    1-|P_j(a)|^2 &\leq 1 - |T_{d_j}(a)|^2 = 1 - |\cos( d_j \arccos(a))|^2 \\
        &= |\sin( d_j \arccos(a) )|^2 \leq |\sin( \sqrt{\delta} \cdot d_j/D   )|^2 \leq \delta \cdot d_j^2/D
    \end{align}

    Similarly, if $d_j$ is even then we want the $\ket{\psi}\to \ket{\psi^\perp}$ and $\ket{\Pi}\to \ket{\Pi^\perp}$ transitions to be unlikely. These occur with probability $1 - |P_j|^2$, so:
    \begin{align}
 \text{(even $d_j$) } \to \hspace{1mm} 1-|P_j(a)|^2 \leq 1 - |T_{d_j}(a)|^2 \leq  |\sin(d_j \arccos(a)  )   |^2  \leq \delta \cdot d_j^2 / D^2.
    \end{align}
    The same union bound argument shows that we never create $\ket{\psi^\perp}, \ket{\Pi}$ with probability $\geq 1- \delta$.
\end{proof}

This completes the first part of the argument: if $a$ is sufficiently close to $0$ or $1$ then it is probably not necessary to repair the state at all, since we either already have $\ket{\psi}$ or something very close to it. Next, we simply turn this argument on its head: say we do not have $\ket{\psi}$ or something close to it. Then $a$ is probably far from $0$ or $1$, meaning it is easy to repair using amplitude amplification.

We just ran an amplitude estimation algorithm, so we probably have a decent estimate of $a$. This means we could just use that estimate to inform a very standard amplitude estimation protocol based on Grover rotations only. However, part of the strength of this section's state repair algorithm is that it can be appended to \emph{any} amplitude estimation algorithm, including those that give pretty weak guarantees on the quality of the final estimation. So, we will stick to our polynomial sampling framework and construct an algorithm for fixed-point amplitude estimation within it.  We can construct Pellian polynomials that, when sampled from, will essentially force the desired transitions in Figure~\ref{fig:pi2psi}, reprinted for convenience in Figure~\ref{fig:pi2psi_again}. It turns out that this method is extremely similar to \cite{1409.3305}.

To construct the desired polynomials, we require a characterization of Pell-complementary polynomials from Theorem~3 of \cite{1806.01838} we know that a polynomial $Q \in \mathbb{C}[x]$ is Pell-complementary if and only if:
\begin{itemize}
    \item $\forall x \in [-1,1]$, we have $\bar x|Q(x)| \leq 1$
    \item If $Q$ is even, then $\forall x\in \mathbb{R}$ we have $ (1+ x^2)  Q(ix)Q^*(ix) \geq 1$.
\end{itemize}

Next, we will leverage this fact to construct the two polynomials we need. We will need two: we sample from $J(a)$ when $a \leq \bar \kappa$, and we will sample from $K(a)$ when $\kappa \leq a$. The polynomials are plotted in Figure~\ref{fig:amppolys}.  The construction of $J(a)$ already appears in the literature due to its applications of fixed-point Grover search, but the construction of $K(a)$ is novel as far as we know.

\begin{figure}[h]
     \centering
        \includegraphics[width=0.8\textwidth]{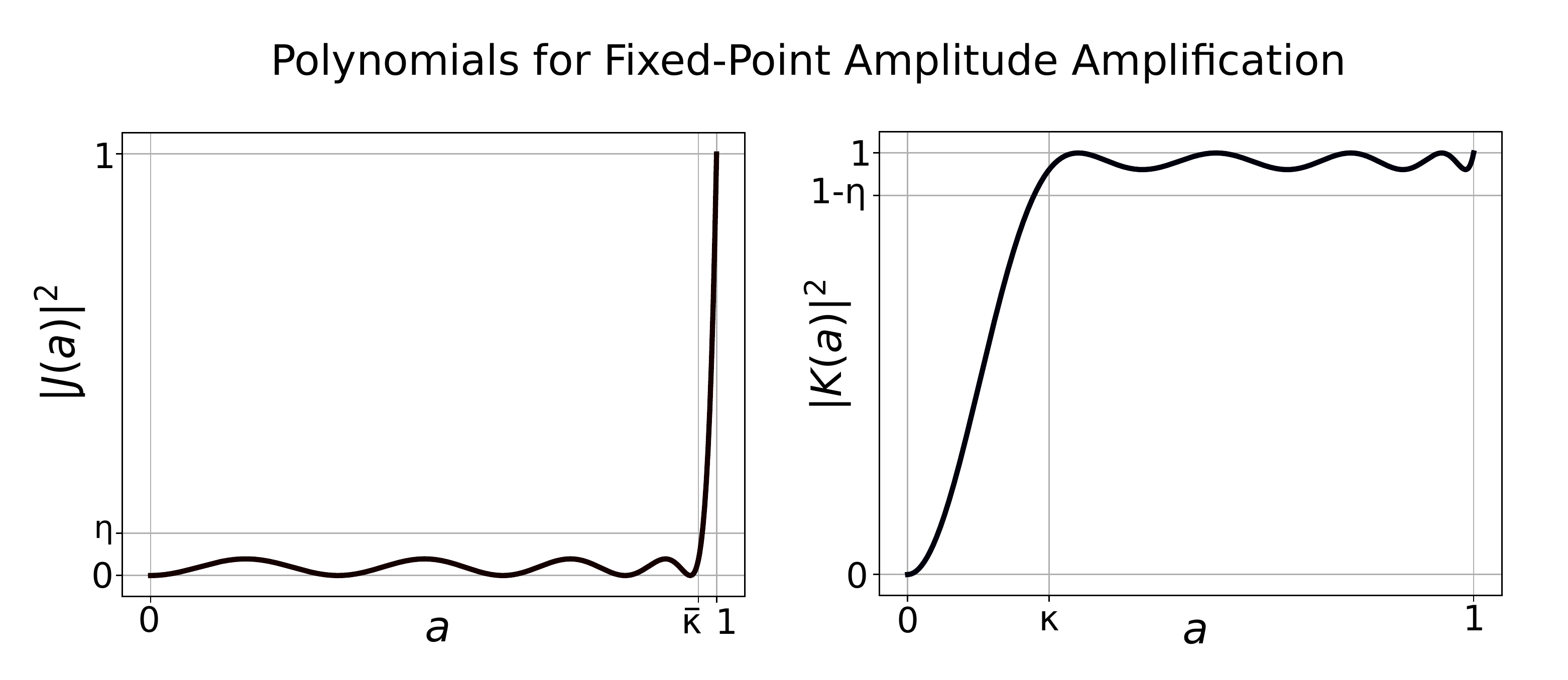}
        \caption{Polynomials constructed in Lemma~\ref{lemma:repairpolys} and used in Proposition~\ref{prop:repair}. Here we selected $\eta = 0.1$ and $\kappa = 0.25$, which makes these polynomials have degree $l = 9$.}
        \label{fig:amppolys}
\end{figure}

\begin{lemma} \label{lemma:repairpolys} \textbf{Polynomials for fixed-point amplitude amplification.} Take any $\kappa,\eta \in(0,1)$, and let $\bar\kappa := \sqrt{1-\kappa^2}$. There exists an odd Pellian polynomial $J(x)$ satisfying $|J(x)|^2 < \eta$ for $|x| < \bar\kappa$. There also exists an odd Pellian polynomial $K(x)$ satisfying $|K(x)|^2 > 1-\eta$ for $|x| > \kappa$. The degree of these polynomials is the smallest odd number $\geq \kappa^{-1} \ln\left( 2/\sqrt{\eta}  \right)$. \end{lemma}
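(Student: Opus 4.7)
The plan is to build $K$ using the Yoder-Low-Chuang fixed-point amplitude amplification polynomial from \cite{1409.3305}, and then obtain $J$ from $K$'s Pell-complement by a substitution that effectively swaps the roles of $x$ and $\bar x$.

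For $K$: I would directly invoke \cite{1409.3305}, whose construction produces a phase sequence of exactly the form in (\ref{eqn:rotpellian}) whose resulting amplitude $K(a)$ is an odd polynomial with $|K(a)|^2 \geq 1-\eta$ whenever $|a| \geq \kappa$, of degree equal to the stated smallest odd integer $\geq \kappa^{-1}\ln(2/\sqrt{\eta})$. Because $K$ arises as the amplitude of a circuit of the form (\ref{eqn:rotpellian}), it is by definition Pellian, and so admits some even Pell-complement $Q_K$.

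For $J$: I would exploit the parities of $K$ and $Q_K$. Since $K$ is odd and $Q_K$ is even, write $K(y) = y\,K_0(y^2)$ and $Q_K(y) = Q_{K,0}(y^2)$ for some polynomials $K_0, Q_{K,0}$. The Pell pair identity $|K(y)|^2 + (1-y^2)|Q_K(y)|^2 = 1$ rearranges to
\[
y^2\,|K_0(y^2)|^2 + (1-y^2)\,|Q_{K,0}(y^2)|^2 = 1.
\]
Substituting $y^2 = 1 - x^2$ gives
\[
x^2\,|Q_{K,0}(1-x^2)|^2 + (1-x^2)\,|K_0(1-x^2)|^2 = 1,
\]
which holds for all $x\in[-1,1]$ and therefore as a polynomial identity on all of $\mathbb{C}$. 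Defining $J(x) := x\,Q_{K,0}(1-x^2)$ and $B(x) := K_0(1-x^2)$, the pair $(J, B)$ satisfies the Pell pair identity, so $J$ is odd and Pellian (with $B$ as its Pell-complement). From the identity, $|J(x)|^2 = 1 - (1-x^2)|K_0(1-x^2)|^2 = 1 - |K(\bar x)|^2$, so whenever $|x| < \bar\kappa$ we have $\bar x > \kappa$ and therefore $|J(x)|^2 < \eta$. Finally, $\deg J = 1 + 2\deg Q_{K,0} = 1 + \deg Q_K = \deg K$, so $J$ and $K$ share the same degree, matching the stated bound.

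The main subtle point is checking that this substitution trick actually delivers a legitimate Pellian polynomial: a priori $K(\bar x)$ is not polynomial in $x$, and one could worry that the square root contaminates $J$. The key observation is that the oddness of $K$ (equivalently, the evenness of $Q_K$) guarantees every appearance of $\bar x$ enters only through $\bar x^2 = 1-x^2$, which is polynomial. Once this is spotted, the verifications of Pellianness, the size bound, and the degree are all direct computations; the remaining content is entirely outsourced to the YLC construction and the Pell-complement characterization from \cite{1806.01838}.
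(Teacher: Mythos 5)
Your construction is correct, and it is essentially the paper's proof run in the opposite direction. The paper builds $J(x) = T_l(x/\bar\kappa)/T_l(1/\bar\kappa)$ explicitly (following Lemma~4.1 of \cite{1301.1162}), verifies it is Pellian via Theorem~3 of \cite{1806.01838}, derives the degree bound from $\cosh$/$\tanh$ estimates, and then obtains $K$ by exactly your $x \leftrightarrow \bar x$ substitution applied to the Pell complement $E(x) = J(\bar x)/\bar x$; you instead take $K$ off the shelf from \cite{1409.3305} and run the same substitution backwards to get $J$. The two routes are interchangeable because the YLC response function is precisely $1 - |T_l(\bar a/\gamma)|^2/|T_l(1/\gamma)|^2$ with $\gamma = \bar\kappa$, i.e.\ the identical rescaled Chebyshev polynomial in disguise, so nothing is gained or lost mathematically; your version simply outsources the degree bound $l \geq \kappa^{-1}\ln(2/\sqrt{\eta})$ to the citation rather than rederiving it, and your parity bookkeeping ($K$ odd $\Rightarrow$ $Q_K$ even $\Rightarrow$ everything enters through $\bar x^2 = 1-x^2$) is the same observation the paper makes when showing $E$ is an honest polynomial in $x$. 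Two details worth making explicit: (i) to conclude from the identity $|J(x)|^2 + (1-x^2)|B(x)|^2 = 1$ that $(J,B)$ is a Pell pair, Theorem~3 of \cite{1806.01838} also requires the parity and degree side conditions ($J$ odd of degree $\leq l$, $B$ even of degree $\leq l-1$); these hold here but do not follow from the identity alone, so they should be checked rather than asserted. (ii) Your claim $\deg J = \deg K$ needs $Q_K$ to have full degree $l-1$, which is true for the Chebyshev construction but is also unnecessary --- $\deg J \leq l$ is all the lemma's application requires.
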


\begin{proof} For the construction of $J(x)$ we follow Lemma~4.1 in \cite{1301.1162}, but note that \cite{1409.3305} do something similar. For some odd $l \in \mathbb{Z}^+$ and some $\gamma \in [0,1]$, let:
    \begin{align}
        J(x) := \frac{T_l(x/\gamma)}{T_l(1/\gamma)}
    \end{align}
    First, we show that for any such $l,\gamma$, there is a polynomial $Q$ such that $J,Q$ form a Pell pair because $J$ satisfies the first set of constraints of Theorem~3 of \cite{1806.01838}.  Chebyshev polynomials $T_l(x)$ are already Pellian. Since $1/\gamma \geq 1$, we have $1/|T_l(1/\gamma)| \leq 1$. Then, for $x/\gamma \leq 1$ we have $|T_l(x/\gamma)| \leq 1$, so $|J(x)| \leq 1$. As for $1 \leq x/\gamma$, we utilize the fact that $|T_l(x/\gamma)|$ is increasing. That means for $x \leq 1$ we have $|T_l(x/\gamma)| \leq |T_l(1/\gamma)|$, so we also have   $|J(x)| \leq 1$. But for $x \geq 1$ we have  $|T_l(x/\gamma)| \geq |T_l(1/\gamma)|$, so $|J(x)| \geq 1$.

    \newcommand{\acosh}{\text{arccosh}}

    Second, we show how to actually select $l,\gamma$ such that $|J(x)|^2 \leq \eta$ for $|x| \leq \bar\kappa$. Certainly for $|x| \leq \gamma$ we have $|T_l(x/\gamma)|\leq 1$, so therefore $|J(x)| \leq 1/|T_l(1/\gamma)|$. We select $\gamma := \bar\kappa$ and pick the minimum odd $l$ such that $1/|T_l(\gamma^{-1})| \leq \sqrt{\eta}$. We have $T_l(\gamma^{-1}) = \cosh(l \acosh(\gamma^{-1})) \geq \frac{1}{2}\exp(l \acosh(\gamma^{-1}))$. Also, we have:
    \begin{align}
        \acosh(\gamma^{-1}) \geq 2\tanh( \acosh(\gamma^{-1})/2 ) \geq 2\sqrt{ \frac{\gamma^{-1}-1}{\gamma^{-1}+1}} 
    \end{align}
    We desire $1/|T_l(\gamma^{-1})| \leq \sqrt{\eta}$, or rather $|T_l(\gamma^{-1})| \geq 1/\sqrt{\eta}$, which is ensured when:
    \begin{align}
        \frac{1}{2} \exp\left( 2l \sqrt{\frac{\gamma^{-1}-1}{\gamma^{-1}+1}}  \right)  \geq \frac{1}{\sqrt{\eta}} 
    \end{align}
    With $ \gamma := \bar\kappa $, we satisfy the above if $l$ satisfies:
    \begin{align}
        l   \geq \frac{1}{2} \ln\left(\frac{2}{\sqrt{\eta}}\right) \sqrt{ \frac{ \bar\kappa^{-1} + 1 }{ \bar\kappa^{-1}  - 1}  } =  \frac{1}{2} \ln\left(\frac{2}{\sqrt\eta}\right) \sqrt{ \frac{ (1 + \bar\kappa)^2 }{ 1  - \bar\kappa^2}  } 
    \end{align}
    which in turn is implied by $ l   \geq  \frac{1 }{\kappa}\ln\left(\frac{2}{\sqrt{\eta}}\right) $, seeing as $\bar\kappa < 1$.

    We move on to the construction of the Pellian polynomial $K(x)$ satisfying $|K(x)|^2 \geq 1-\eta$ for $|x| \geq \kappa$, using $J(x)$ as a starting point. Indeed, the desired properties follow from:
    \begin{align}
        |K(x)|^2 = 1 - \left|J\left(\bar x\right)\right|^2
    \end{align}
    We obtain $K$ from the Pell-complementarity characterization of Theorem~3 from  \cite{1806.01838}: we show that there is an even polynomial $E(x)$ such that $J(\bar x) = \bar x E(x)$, and that $E(x)$ satisfies the conditions in the lemma.

    $J(\bar x)$ is an odd polynomial in $\bar x$, so $J(\bar x)/\bar x$ is an even polynomial in $\bar x$. Even polynomials in $\bar x$ are really polynomials in $\bar x^2 = 1 - x^2$, so $E(x) := J(\bar x)/\bar x$ is an even polynomial in $x$ as well. By construction $J(\bar x) = \bar x E(x)$. 

    The first property $\bar x |E(x)| \leq 1$ for $|x| \leq 1$ is guaranteed by the fact that $|J(\bar x)| \leq 1$  for $|x| \leq 1$. Since $E(x)$ is even, we also need to show the second property that $(1+x^2) E(ix) E^*(ix) \geq 1$ for all real $x$. Since the coefficients of $J(x)$ are real, the coefficients of $E(x)$ are also real, so $E^*(ix) = E(ix) $. So it is sufficient to show $(1+x^2) E(ix)^2 \geq 1$. 
    \begin{align}
        (1+ x^2)  E(ix)^2 =   J( \sqrt{1 - (ix)^2}   )^2  \frac{1 + x^2}{1 - (ix)^2}   = J( \sqrt{1 + x^2}   ) \geq 1
    \end{align}
    Since $\sqrt{1+x^2} \geq 1$ for all real $x$, and $J(x) \geq 1$ for all positive $x$, we have shown the second condition for $E$. Thus, by Theorem~3 of  \cite{1806.01838} $E$ is Pell-complementary, so there exists an odd polynomial $K(x)$ such that $K$ and $E$ form a Pell pair.  We have $|K(x)|^2 = 1 - \bar x^2 |E(x)|^2 = 1 - |J(\bar x)|^2$, so $|K(x)|^2 \geq 1-\eta$ for $|x| \geq \kappa$ as desired.
\end{proof}

These polynomials immediately yield an algorithm for amplitude amplification. At this point we finally identify $\kappa := (4/5) \cdot \sqrt{\delta}/D$ in order to ensure $\kappa \leq \sin(\sqrt{\delta}/D)$ and $\cos(\sqrt{\delta}/D) \leq \bar \kappa $.

\begin{proposition} \label{prop:repair}  \textbf{Obtaining $\ket{\psi}$ from $\ket{\Pi}$ or $\ket{\Pi^\perp}$.} Suppose $a < \cos(\sqrt{\delta}/D)$, and we have a copy of $\ket{\Pi^\perp}$. Then, for any $\eta > 0$ we can prepare a copy of $\ket{\psi}$ with success probability $\geq 1-\eta$ by sampling from a polynomial of degree $\lceil (5/4) (D/\sqrt{\delta}) \ln(2/\sqrt{\eta})    \rceil$. 

    Similarly, suppose $\sin(\sqrt{\delta}/D) < a$, and say we have a copy of $\ket{\Pi}$. Then, we can prepare $\ket{\psi}$ with success probability $\geq 1-\eta$ by sampling from another polynomial with the same degree. 
\end{proposition}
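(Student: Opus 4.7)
The plan is to apply Lemma~\ref{lemma:repairpolys} with $\kappa := (4/5)\sqrt{\delta}/D$ and sample from the appropriate one of $J$ or $K$ depending on which of $\ket{\Pi^\perp}$ or $\ket{\Pi}$ we currently hold. First I would verify that this choice of $\kappa$ is compatible with the hypotheses of each case: the inequalities $\kappa \leq \sin(\sqrt{\delta}/D)$ and $\cos(\sqrt{\delta}/D) \leq \bar\kappa$ (valid on the relevant range of $\sqrt{\delta}/D$) imply that $a > \sin(\sqrt{\delta}/D)$ yields $a > \kappa$, and $a < \cos(\sqrt{\delta}/D)$ yields $a < \bar\kappa$. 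So in both cases $a$ lands in the regime where one of the two polynomials from Lemma~\ref{lemma:repairpolys} takes an extreme value.

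For the $\ket{\Pi^\perp}$ case I would sample from the odd Pellian polynomial $J$, for which $|J(a)|^2 < \eta$. Reading off the transition probabilities already enumerated in the proof of Proposition~\ref{prop:extreme}, sampling an odd Pellian polynomial $P$ sends $\ket{\Pi^\perp}\to\ket{\psi^\perp}$ with probability $|P(a)|^2$ and $\ket{\Pi^\perp}\to\ket{\psi}$ with probability $1-|P(a)|^2$; plugging in $P=J$ gives success probability $\geq 1-\eta$. The $\ket{\Pi}$ case is dual: the same table shows that odd $P$ sends $\ket{\Pi}\to\ket{\psi}$ with probability $|P(a)|^2$ and $\ket{\Pi}\to\ket{\psi^\perp}$ with probability $1-|P(a)|^2$, so sampling $K$ (with $|K(a)|^2 > 1-\eta$) again gives success probability $\geq 1-\eta$.

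The degree bound is immediate from Lemma~\ref{lemma:repairpolys}: both $J$ and $K$ have degree equal to the smallest odd integer $\geq \kappa^{-1}\ln(2/\sqrt{\eta}) = (5/4)(D/\sqrt{\delta})\ln(2/\sqrt{\eta})$, which matches the claimed $\lceil (5/4)(D/\sqrt{\delta})\ln(2/\sqrt{\eta})\rceil$. No substantive obstacle remains; the hard work was done in constructing $J$ and $K$ and in tabulating the odd-degree transitions. The only care needed is pairing the right polynomial with the right starting state, so that the \emph{forced} transition probability sits near $1$ rather than near $0$.
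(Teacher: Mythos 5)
Your proposal is correct and follows essentially the same route as the paper's proof: choose $\kappa=(4/5)\sqrt{\delta}/D$ so that $\kappa\leq\sin(\sqrt{\delta}/D)$ (equivalently $\cos(\sqrt{\delta}/D)\leq\bar\kappa$, via $0.8x\leq\sin(x)$ on $[0,1]$), then sample $J$ from $\ket{\Pi^\perp}$ and $K$ from $\ket{\Pi}$ using the odd-polynomial transition probabilities $1-|J(a)|^2$ and $|K(a)|^2$ respectively. The degree bookkeeping matches the paper as well.
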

\begin{proof} Figure~\ref{fig:pi2psi_again} shows the transition probabilities when sampling from an odd polynomial starting with $\ket{\Pi}$ or $\ket{\Pi^\perp}$.
    
    Given $\ket{\Pi^\perp}$ and $a < \cos(\sqrt{\delta}/D)$, our goal is to implement the $\ket{\Pi^\perp} \to \ket{\psi}$ transition with probability $\geq 1-\eta$. If we implement the transition with an odd Pellian polynomial $J$, then this transition happens with probability $1 - |J(a)|^2$. So our goal is to find a $J(a)$ satisfying $|J(a)|^2 \leq \eta$ for $a < \cos(\sqrt{\delta}/D)$. The polynomial $J(x)$ from Lemma~\ref{lemma:repairpolys} exactly has this property when $ \cos(\sqrt{\delta}/D) \leq \bar\kappa$. We pick $\kappa := (4/5)\cdot \sqrt{\delta}/D$. Since $0.8x \leq \sin(x)$ for $x\in[0,1]$, we have $\kappa \leq \sin( \sqrt{\delta}/D )$ which implies the desired bound.

    If we are given $\ket{\Pi}$ and $a \geq \sin(\sqrt{\delta}/D)$, then can implement the transition $\ket{\Pi}\to{\psi}$ by sampling from an odd Pellian polynomial with $K$ with success probability $|K(a)|^2$. The polynomial $K(x)$ from  Lemma~\ref{lemma:repairpolys} satisfies $|K(a)|^2 \geq 1-\eta$ for the same choice of $\kappa$.
\end{proof}

Finally, we combine Propositions~\ref{prop:extreme} and~\ref{prop:repair} to prove the main result of this section. The main challenge of this algorithm is that we do not actually know the value of $a$, and moreover never obtain conclusive evidence that $\sin(\sqrt{\delta}/\mathbf{D}) < a$ or $ a < \cos(\sqrt{\delta}/\mathbf{D})$. All we have is the output state of the previous amplitude estimation algorithm and its total degree $\mathbf{D}$. Given this information we essentially \emph{guess} that one of the bounds on $a$ hold. 

\begin{theorem} \label{thm:repair} \textbf{Non-destructive amplitude estimation.} Take any amplitude estimation algorithm that produces an estimate $\mathbf{\hat a}$ and has total degree $\mathbf{D}$. For any $\mu > 0$ there exists another amplitude estimation algorithm that produces the same $\mathbf{\hat a}$, outputs a copy of $\ket{\psi}$ at then end with probability $\geq 1-\delta$, and has total degree at most $\mathbf{D} + \lceil (5/4) (\mathbf{D}/\sqrt{\delta}) \ln(2/\sqrt{\eta})    \rceil$ where $\delta = (4/5)\cdot\mu$ and $\eta = (1/5)\cdot\mu$.
\end{theorem}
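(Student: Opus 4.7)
My strategy is to compose Propositions~\ref{prop:extreme} and~\ref{prop:repair}. Set $\kappa := (4/5)\sqrt{\delta}/\mathbf{D}$; since $\sin(x)\ge(4/5)x$ on $[0,1]$, this yields $\kappa \le \sin(\sqrt{\delta}/\mathbf{D})$ and $\bar\kappa \ge \cos(\sqrt{\delta}/\mathbf{D})$, so that the thresholds appearing in the two propositions align. I would construct the fixed-point polynomials $J$ and $K$ of Lemma~\ref{lemma:repairpolys} with this $\kappa$ and failure tolerance $\eta$; both have degree $\lceil (5/4)(\mathbf{D}/\sqrt{\delta}) \ln(2/\sqrt{\eta}) \rceil$, matching the theorem's stated additive cost.

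The augmented algorithm runs the given amplitude estimation verbatim to obtain $\mathbf{\hat a}$ and a residual Grover-subspace state, then uses $\mathbf{\hat a}$ as a guess to select a single repair polynomial to sample: append $J$ if $\mathbf{\hat a}$ is small (say $\mathbf{\hat a} < 1/\sqrt 2$) and $K$ otherwise. The total degree is $\mathbf{D}$ plus the degree of that one polynomial, independent of which branch is taken, and $\mathbf{\hat a}$ itself is passed through unchanged.

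I would then split the success analysis into cases on the unknown true $a$. In the small-$a$ regime $a \le \sin(\sqrt{\delta}/\mathbf{D})$, Proposition~\ref{prop:extreme} pins the residual to $\{\ket{\psi},\ket{\Pi^\perp}\}$ with probability $\ge 1-\delta$, and Proposition~\ref{prop:repair} applied with $J$ (whose hypothesis $a<\cos(\sqrt{\delta}/\mathbf{D})$ holds) then produces $\ket{\psi}$ with probability $\ge 1-\eta$. The symmetric large-$a$ regime uses $K$ and $\ket{\Pi}$. In the middle regime $\sin(\sqrt{\delta}/\mathbf{D}) < a < \cos(\sqrt{\delta}/\mathbf{D})$, Proposition~\ref{prop:extreme} gives nothing, but both hypotheses of Proposition~\ref{prop:repair} are simultaneously satisfied, so whichever polynomial the guess picked is a valid fixed-point amplifier on the ambient state. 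A union bound over the two failure events gives total failure probability $\le \delta + \eta = \mu$ under the substitution $\delta = (4/5)\mu$, $\eta = (1/5)\mu$.

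The main obstacle is that $\mathbf{\hat a}$ may be a very coarse estimate, since the theorem must work for an arbitrary input amplitude-estimation subroutine, so the ``guess'' could be wrong. The resolution is that the guess only needs to select correctly in the two extreme regimes; in the middle regime both polynomials are valid by the above, so the guess is irrelevant there. Consequently, any threshold between the two extreme cutoffs suffices, no accuracy assumption on the input amplitude estimator is needed, and the repair succeeds with probability $\ge 1-\mu$ using the claimed total degree.
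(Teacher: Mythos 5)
Your parameter choices ($\kappa = (4/5)\sqrt{\delta}/\mathbf{D}$, the degree bound, the three-way case split on $a$, and the union bound $\delta+\eta=\mu$) all match the paper's proof. But the algorithm you describe has a genuine gap: you select \emph{which} repair polynomial to sample based on $\mathbf{\hat a}$, and you apply it unconditionally to whatever residual state you hold. The correct branching variable is not $\mathbf{\hat a}$ but the \emph{identity of the residual state}, which the polynomial-sampling framework tracks explicitly (the measurement outcomes along the way tell you whether you currently hold $\ket{\psi},\ket{\psi^\perp},\ket{\Pi}$ or $\ket{\Pi^\perp}$). This matters in two places. First, if the residual is already $\ket{\psi}$ --- which is the \emph{likely} outcome in both extreme regimes by Proposition~\ref{prop:extreme} --- then sampling any odd polynomial sends $\ket{\psi}$ to $\ket{\Pi}$ or $\ket{\Pi^\perp}$, never back to $\ket{\psi}$; your unconditional repair step destroys a state that was already correct. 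Second, in the middle regime your claim that ``whichever polynomial the guess picked is a valid fixed-point amplifier'' is false: Proposition~\ref{prop:repair} pairs $J$ with input $\ket{\Pi^\perp}$ and $K$ with input $\ket{\Pi}$, and these are not interchangeable. If you hold $\ket{\Pi^\perp}$ but your guess selects $K$, the transition $\ket{\Pi^\perp}\to\ket{\psi}$ occurs with probability $1-|K(a)|^2 \leq \eta$, i.e.\ you fail almost surely. Since the theorem must hold for an \emph{arbitrary} input estimator, $\mathbf{\hat a}$ carries no guaranteed information and cannot be used to resolve this.

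The paper's proof avoids both problems by conditioning on the residual state rather than on $\mathbf{\hat a}$: do nothing if you hold $\ket{\psi}$; if you hold $\ket{\psi^\perp}$, first sample the degree-one polynomial $P(a)=a$ to land in $\{\ket{\Pi},\ket{\Pi^\perp}\}$; sample $J$ from $\ket{\Pi^\perp}$ and $K$ from $\ket{\Pi}$. The ``guess'' in the paper is only about which bound on $a$ holds (used to justify that the chosen polynomial succeeds), not about which polynomial to apply. With that change your case analysis and accounting go through essentially as you wrote them.
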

\begin{proof} The new algorithm is just the old algorithm plus an extra repair step. We first run the old algorithm, which gives our final estimate $\mathbf{\hat a}$ in some total degree $\mathbf{D}$, as well as one of $\ket{\psi},\ket{\psi^\perp},\ket{\Pi},\ket{\Pi^\perp}$. Then:
    \begin{itemize}

        \item   If we have $\ket{\psi}$, then there is nothing to do, and we successfully return $\ket{\psi}$.

\item   If we have $\ket{\psi^\perp}$, then we measure in the $\ket{\Pi},\ket{\Pi^\perp}$ basis (i.e., sample from $P(a) = a$) and proceed according to the next two cases.

\item  If we have $\ket{\Pi^\perp}$, then we guess that $a \leq \sin(\sqrt{\delta}/\mathbf{D})$ and sample from the polynomial $J(x)$ from Proposition~\ref{prop:repair}. If we obtain $\ket{\psi}$, then we have succeeded and we are done. If we obtain $\ket{\psi^\perp}$, then we have failed and we give up.
    
\item Finally, if we have $\ket{\Pi}$, then we guess that $a \geq \cos(\sqrt{\delta}/\mathbf{D})$ and sample from the polynomial $K(x)$ from Proposition~\ref{prop:repair}. Similarly, we return $\ket{\psi}$ if we obtain it, and otherwise we give up.

    \end{itemize}

    We want to show that the probability that we fail and give up is at most $\mu$. We split the analysis into three cases depending on $a$, although we do not need know know which case we are in in order to run the procedure above.

    Say $a \leq \sin(\sqrt{\delta}/\mathbf{D})$. Then, by Proposition~\ref{prop:extreme}, we have a copy of $\ket{\psi}$ or $\ket{\Pi^\perp}$ with probability at least $1-\delta$. If we have $\ket{\psi}$ we are done, and if we have $\ket{\Pi^\perp}$ then we sample from $J(a)$ which returns a copy of $\ket{\psi}$ with probability $\geq 1-\eta$. So overall we fail with probability at most $\delta + \eta = \mu$.

    Say $a \geq \cos(\sqrt{\delta}/\mathbf{D})$. Then, we have either $\ket{\psi}$ or $\ket{\Pi}$ with probability $\geq 1-\delta$. If we have $\ket{\Pi}$ then we sample from $K(a)$ and obtain $\ket{\psi}$ with probability $\geq 1 - \eta$. So again we fail with probability at most $\delta + \eta$.

    Finally, say  $\sin(\sqrt{\delta}/\mathbf{D}) \leq a \leq \cos(\sqrt{\delta}/\mathbf{D})$. Then we could get any of $\ket{\psi},\ket{\psi^\perp},\ket{\Pi},\ket{\Pi^\perp}$. As usual if we have $\ket{\psi}$ we are done, and if we have $\ket{\psi^\perp}$ then we turn it into $\ket{\Pi}$ or $\ket{\Pi^\perp}$. the condition on $a$ satisfies \emph{both} of the required assumptions in Proposition~\ref{prop:repair}, so the $\ket{\Pi}\to\ket{\psi}$ and $\ket{\Pi^\perp}\to\ket{\psi}$ both succeed with probability $\geq 1-\eta$. So we fail with probability at most $\eta < \mu$.
   
    We numerically find the choice $\delta = (4/5)\cdot\mu$ and $\eta = (1/5)\cdot\mu$ is reasonably close to minimizing $ (1/\sqrt{\delta}) \ln( 2/\sqrt{\eta} ) $.
\end{proof}

In practice, it may be advantageous to actually use the information on $a$ obtained from the previous amplitude estimation algorithm in order to reduce the overall complexity. In theory, however, we have demonstrated that this is not really necessary and that state repair is always possible regardless of the actual value of $a$.

We conclude the section with some additional remarks about this result.

\begin{remark} \textbf{Always returning $\ket{\psi}$, even if it takes forever.} The modified algorithm from Theorem~\ref{thm:repair} is a Monte Carlo algorithm: it only adds at most a fixed number of additional queries, but does not always succeed at preparing $\ket{\psi}$. But it can easily be transformed into a Las Vegas algorithm: First, observe that when $a = 0$ or $a = 1$, then sampling from a polynomial can never produce anything other than $\ket{\psi}$. Otherwise, if $0 < a < 1$ then we consider the following protocol. If we have $\ket{\psi^\perp}$ then measure in the $\ket{\Pi},\ket{\Pi^\perp}$ basis. If we have $\ket{\Pi}$ or $\ket{\Pi^\perp}$, measure in the $\ket{\psi},\ket{\psi^\perp}$ basis. Since all measurement outcomes have nonzero probability, this protocol must reach $\ket{\psi}$ eventually. Now we have Las Vegas algorithm that will eventually produce $\ket{\psi}$ with certainty, but can only inherit the performance guarantees of the original algorithm with probability $\geq 1-\mu$.
\end{remark}

\begin{remark} \textbf{State repair via sampling given bounds on $a$.} Say we are in the same situation as \cite{1907.09965}, where we are given a bound $\kappa < a < \bar \kappa$. Then we can also achieve state repair with only $O(\kappa^{-1} \log(\delta^{-1}))$ queries by sampling from another polynomial. To do so, we follow Theorem~\ref{thm:repair}, but instead of using the bounds $\sin(\sqrt{\delta}/D) < a < \cos(\sqrt{\delta}/D)$ to determine $\kappa$, we just use the $\kappa$ we were given.
\end{remark}

\ifdefined\maindocument
\else
    
    \end{document}
 \fi

\ifdefined\maindocument
    \section{Improved Performance\label{sec:performance}}
\else
    \documentclass[11pt]{article}
    \usepackage[margin=0.4in]{geometry}
    \usepackage{algorithm}
    \usepackage{algpseudocode}
    
    \begin{document}
    \begin{center}
        {\Large Improved Performance}
    \end{center}
\fi

The asymptotic performance of amplitude estimation has been known for some time. However, for those interested in forecasting the resource requirements of quantum algorithms built on amplitude estimation, constant-factor improvements in query complexity could make a large difference in the time-scale on which certain quantum applications may be practically realizable. To this end, many competing implementations of amplitude estimation have emerged in the last few years with differing empirical query complexity. 

The first recent attempt at improving the constant-factor performance of amplitude estimation was \cite{1904.10246}, which describes an algorithm called Maximum Likelihood Amplitude Estimation (MLAE). There is sound empirical evidence that this algorithm significantly improves over the performance of \cite{0005055} and numerics indicate that it achieves $O(1/\eps)$ scaling. However, there is no rigorous proof that this algorithm always delivers an accurate answer and the desired performance.

Later, \cite{1912.05559} gave an algorithm called Iterative Quantum Amplitude Estimation (IQAE) that seems to have the best of both worlds. Its empirical performance also outperforms \cite{0005055} and seems comparable to that of MLAE, but it also has a rigorous proof of correctness. The numerics supporting the speedup have been independently reproduced \cite{2005.05300}. The proofs in \cite{1912.05559} even offer bounds on the constant-factor performance, but these bounds are not strong enough to rigorously prove outperforming \cite{0005055}. As the desired precision grows, the likelihood-landscape of which MLAE needs to find the maximum becomes more and more complicated, requiring more and more classical resources to analyze. IQAE also avoids this problem with a simpler classical component to the algorithm. 

Are there any amplitude estimation algorithms that improve over the constant-factor performance of IQAE while simultaneously featuring a rigorous proof? \cite{2003.02417} claims to present a modified version of IQAE with better query complexity, but the experiments in the manuscript do not support this claim\footnote{Compare Figure~3 in \cite{2003.02417} to Figure~3 in \cite{1912.05559}. Recent work \cite{2206.08449} also makes this observation.}. Further, \cite{2206.08449} gave an algorithm where the classical processing is significantly faster then IQAE, but the query complexity is about the same.

In this section we present a modified version of IQAE called `ChebAE' that we empirically demonstrate to require only about $50\%$ to $60\%$ of the queries of IQAE. Since it is a fairly simple modification, it inherits the proof of correctness presented in \cite{1912.05559}. ChebAE was discovered in an attempt to improve the performance of IQAE using the polynomial sampling framework in Theorem~\ref{thm:polysamp}. Recall that IQAE is a special case of such algorithms where only odd-degree Chebyshev polynomials are considered. Does access to a larger family of polynomials allow us to improve the query complexity in terms of constant factors? We were not able to devise an algorithm that answers this question in the affirmative because it seems that the Chebyshev polynomials deliver the best performance. However, in our efforts to perform hyperparameter optimization, we discovered a faster algorithm that is based on Chebyshev polynomials only. The algorithm also has the property that it requires a smaller query complexity when $a \approx 1$.

We now give a brief description of IQAE in terms of the polynomial sampling framework in order to point out what optimization was made. IQAE is inspired by \cite{1908.10846} in that it gradually improves a confidence interval $[a_\text{min},a_\text{max}]$ that contains $a$ with high probability. The method for improving this confidence interval is sketched in Figure~\ref{fig:poly_invert}. First, we find the largest odd integer $d$ such that $|T_d(a)|^2$ is invertible on $[a_\text{min},a_\text{max}]$. Then, we sample from $T_d$ in order to obtain a confidence interval $[p_\text{min},p_\text{max}]$ containing $|T_d(a)|^2$ with high probability. For this purpose we can rely on the Clopper-Pearson method \cite{CP34} which is based on tight bounds on the binomial distribution. Then, since $|T_d(a)|^2$ is invertible, we can compute a new interval $[a_\text{min}^*,a_\text{max}^*]$ where (if $|T_d(a)|^2$ is increasing on the interval):
\begin{align}
    |T_d(a_\text{min}^*)|^2 = p_\text{min} \text{ and } |T_d(a_\text{max}^*)|^2 = p_\text{max}
\end{align}
If $|T_d(a)|^2$ is decreasing on the interval, then we just swap $p_\text{min}$ and $p_\text{max}$ in the equations above. We repeat this process until the confidence interval is as small as desired, or until a better Chebyshev polynomial can be found.

In order to guarantee a bound $\delta$ on the failure probability, we anticipate the total number of confidence intervals $[p_\text{min},p_\text{max}]$ required and divide the failure probability evenly among them. If we demand that the degree $d$ increases by at least a factor of $r$ at every iteration, then the confidence interval must also shrink by a factor of $r$. Thus, at most $\log_{r}(1/\eps)$ iterations are required. With this method, the final asymptotic complexity is $O(\eps^{-1} \log(\delta^{-1}\log(\eps^{-1})))$. \cite{1908.10846} gave a method to remove this extra $\log(\log(\eps^{-1}))$ factor, but this method blows up the constant factor on the query complexity. 

How many samples from $T_d$ should we take at every step? We need to take enough samples in order to shrink the confidence interval by a factor of $r$. But since the shape of $|T_d(a)|^2$ on $[a_\text{min},a_\text{max}]$ is non-linear and the width of the Clopper-Pearson confidence interval depends on the value of $|T_d(a)|^2$ itself, this exact number is hard to anticipate. The best method would be to take samples from $T_d$ one at a time until the confidence interval is small enough so that a larger $d$ can be found. But \cite{1912.05559} found that this is prohibitively slow in terms of classical performance. Observing that the query complexity is dominated by the final iterations of the algorithm, they introduced a parameter $N_\text{shots} = 100$ - the number of samples for the `early' iterations. Once we get to the `later' iterations we take fewer samples at a time.

This is where our optimization comes in: how do we decide what iterations are `early' and what iterations are `late'? IQAE leverages a heuristic based on the maximum error of any Clopper-Pearson confidence interval as well as the assumption that $|T_d(a)|^2$ is linear on the interval. We found that the cutoff from this heuristic is too late, causing IQAE to take more samples than needed. Instead, we introduce a hyperparameter $\nu$ that determines when we switch from sampling $N_\text{shots}$ to $1$. When we set $\nu = 8$, then ChebAE needs only about $50\%$ to $60\%$ of the queries of IQAE.

Having established context, we briefly digress to give intuition as to why Chebyshev polynomials appear to deliver better performance over the broader class of Pellian polynomials. Optimizing Pellian polynomials introduces several computational overheads relative to using only Chebyshev polynomials. Classical evaluation of a Pellian polynomial takes $\mathcal{O}(d)$ time where Chebyshev polynomials can be evaluated in constant time. This can have a large impact on the efficiency of choosing a new polynomial once the confidence interval has been updated. Additionally, checking whether a Chebyshev polynomial is invertible can be done exactly, but for Pellian polynomials one must check for extrema within the target interval via binary search. Failure to detect an extrema within the target interval results in an increased failure probability. Finally, the optimization of QSP angles is a highly non-convex problem which must be tuned carefully. Given that nearly optimal Chebyshev polynomials can be found inexpensively, the potential benefit of using Pellian polynomials is overshadowed by the associated overhead and reduced algorithmic stability. 

ChebAE does add a minor additional flexibility over IQAE: rather than considering only odd-degree Chebyshev polynomials, ChebAE considers Chebyshev polynomials of any degree. But we do not find that this improves performance much. Nonetheless, we find that phrasing the algorithm in terms of Chebyshev polynomials significantly declutters the algorithm compared to IQAE. We also find that our implementation of ChebAE is faster to study classically than the implementation of IQAE that was kindly provided by the authors of \cite{1912.05559}. Our numerical experiments of IQAE were based on the same implementation as the one used for \cite{1912.05559}. Although the modification of IQAE underlying ChebAE is rather slight, we believe that the ChebAE's simpler presentation as well as significant performance improvement merit a full presentation of the algorithm here.

\begin{figure}
    \begin{center}
        \includegraphics[width=0.75\textwidth]{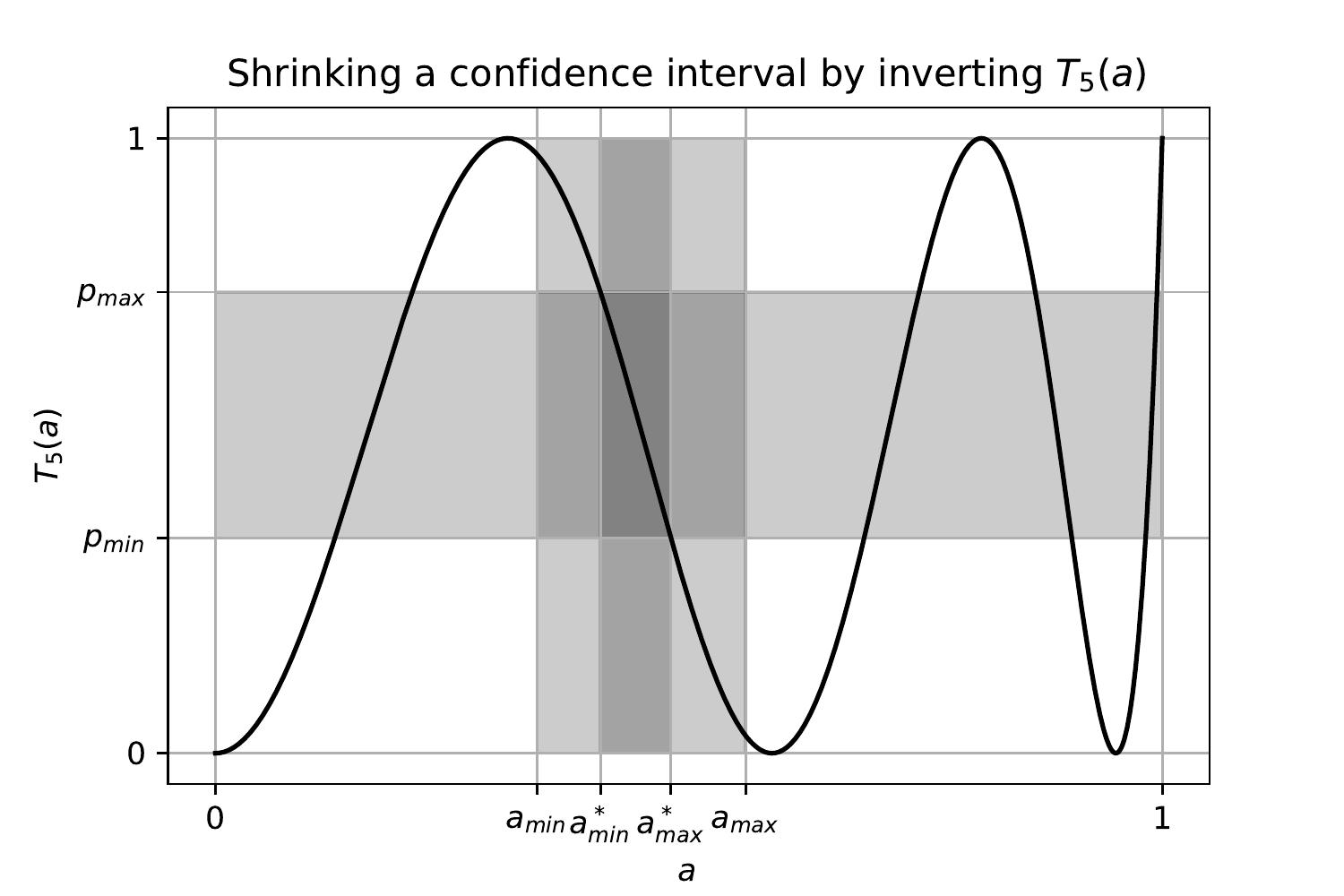}
    \end{center}
    \caption{\label{fig:poly_invert} Say we had established that $a \in [a_\text{min},a_\text{max}] = [0.34, 0.56]$. On this interval, $|T_5(a)|^2$ is invertible. By sampling from the polynomial we obtain a confidence interval $|T_5(a)|^2 \in [p_\text{min},p_\text{max}] = [0.35,0.75]$. Since $T_5(a)$ is invertible on $[a_\text{min},a_\text{max}]$, we can invert $[p_\text{min},p_\text{max}]$ to obtain an improved confidence interval $a \in [a^*_\text{min},a^*_\text{max}] = [0.405,0.481]$. Both IQAE and ChebAE repeat this step for various $T_d(a)$ until the desired accuracy is achieved.} 
\end{figure}

\begin{empiricalclaim}  \label{empericalclaim:chebae} There is an amplitude estimation algorithm `ChebAE' that for any $\eps,\delta > 0$ samples from a random variable $\mathbf{\hat a}$ satisfying $\text{Pr}[ | \mathbf{\hat a} -  a| \geq \eps  ] \leq \delta$. For $a = 0.5$, $\delta = 0.05$, and $\eps \in [10^{-3}, 10^{-6}]$, the observed average query complexity $\langle Q_\Pi\rangle$ satisfies\footnote{When we write $A \approx_\eta B$, we mean $|1 - A/B|\leq\eta$. So for $\eta=4\%$, the smallest values of $\langle Q_\Pi \rangle$ are about $0.96 \times$ the prediction, and the largest are about $1.04 \times$ the prediction.}:
    \begin{align}
        \langle Q_\Pi \rangle \approx_{3.15\%} \frac{1.71}{\eps} \ln \left( 2.08 \ln\left(\frac{1}{\eps} \right) \right)
    \end{align}\\[-1.1cm]
\end{empiricalclaim}
\begin{proof} The ChebAE algorithm relies on a subroutine \textsc{find\_next\_cheb} which we present after the description of the main algorithm. It also has three hyperparameters $r,N_\text{shots},\nu$: $r$ is the factor by which we grow the degree of the Chebyshev polynomial at each iteration, $N_\text{shots}$ is the number of samples from $T_d$ we take at each iteration in the `early' phase of the algorithm, and $\nu$ determines when we switch to the `late' phase of the algorithm. We find the best values of these parameters are:
\begin{align}
    r = 2, \hspace{1cm} N_\text{shots} = 100, \hspace{1cm} \nu = 8
\end{align}

The ChebAE algorithm is as follows.

\begin{enumerate}

    \item Let $T := \lceil \log_r( (2\eps)^{-1} ) \rceil$ be a bound on the number of confidence intervals needed.
    \item Let $\eps^{p}_\text{max}$ be the largest possible error on the estimate of the bias of a coin guaranteed by the Clopper Pearson method with $N_\text{shots}$ flips and confidence $1-\delta/T$.
    
    \item Initialize the confidence interval $[a_\text{min},a_\text{max}] \gets [0,1]$ and the coin toss tally $n_\text{heads},n_\text{flips} \gets 0,0$. Initialize the degree $d \gets 1$.
    
    \item While $a_\text{max} - a_\text{min} \geq 2\eps$ do:\
    \begin{enumerate}
        \item Call \textsc{find\_next\_cheb}$(a_\text{min},a_\text{max})$ to try to find a new degree $ d_\text{new}$.\\ If $d_\text{new} \geq rd$, reset the tally $n_\text{heads},n_\text{flips} \gets 0,0$ and set $d \gets d_\text{new}$.
        \item If the following condition holds then we are `late', otherwise we are     `early'.
        \begin{align}
        \eps^{p}_\text{max} \cdot \frac{a_\text{max} - a_\text{min}}{|T_d(a_\text{max}) - T_d(a_\text{min})|} \leq \eps\nu
        \end{align}
        \item If we are `early', sample $N_\text{shots}$ many times from $T_d$. If we are late, sample from $T_d$ once. Increment the tally $n_\text{heads},n_\text{flips}$ accordingly.
        \item Compute a $\delta/T$ confidence interval $[p_\text{min},p_\text{max}]$ on $|T_d(a)|^2$ using the Clopper-Pearson method \cite{CP34}. 
        \item Invert $[p_\text{min},p_\text{max}]$ to $[a^*_\text{min},a^*_\text{max}]$ using the method in Figure~\ref{fig:poly_invert}.
        \item Update the interval: $[a_\text{min},a_\text{max}] \gets [a^*_\text{min},a^*_\text{max}]\cap [a_\text{min},a_\text{max}]$.
    \end{enumerate}
    \item Let $\hat a$ be the midpoint of $[a_\text{min},a_\text{max}]$.
\end{enumerate}

The purpose of the \textsc{find\_next\_cheb}$(a_\text{min},a_\text{max})$ subroutine is to find the highest $d$ such that $|T_d(a)|^2$ is invertible on the interval $[a_\text{min},a_\text{max}]$. The method is based on the identity $T_d(a) = \cos(d \arccos(a))$.

\begin{enumerate}
    \item Let $[\theta_\text{min},\theta_\text{max}] := [\arccos(a_\text{max}),\arccos(a_\text{min})]$\ 
    \item Initialize $d \gets \left\lfloor\frac{\pi}{2} (\theta_\text{max}-\theta_\text{min})^{-1} \right\rfloor$.
    \item Decrement $d$ until $\cos^2(d \theta)$ has no extrema for $\theta \in [\theta_\text{min},\theta_\text{max}]$. This is equivalent to $\left\lfloor \frac{2 }{\pi}d\theta_\text{min} \right\rfloor = \left\lfloor \frac{2 }{\pi}d\theta_\text{max} \right\rfloor$.
    \item Return $d$.
\end{enumerate}

The proof of correctness is identical to that of IQAE from \cite{1912.05559}: since we grow the degree by a factor of $r$ with every confidence interval, we will need at most $T$ many confidence intervals. Since each interval contains $T_d(a)$ with probability $\geq 1- \delta/T$, the union bound implies that the algorithm fails with probability at most $\delta$ overall. By the same analysis, the asymptotic query complexity is $O( \eps^{-1} \log(\delta^{-1}\log(\eps^{-1}))  )$.

The data supporting the empirical claim is presented in Figure~\ref{fig:Q_vs_eps}. We took 1000 many runs of the above procedure for each of 9 logarithmically spaced $\eps$ in the interval $\eps \in [10^{-2}, 10^{-6}]$ with $\delta = 0.05$ and $a = 0.5$. We found that for each $\eps$ the fraction of runs where $|a - \hat a| > \eps$ was indeed always $<\delta$.

We took the mean of the sampled query complexities $Q_\Pi$ for each value of $\eps$, call it $\langle Q_\Pi\rangle$, and fitted the function $f_{A,B}(\eps) := A\eps^{-1} \ln( B \ln( \eps^{-1}) ) $. Using brute force, we found the parameters $(A,B) = (1.71,2.08)$ ensured that the model $f_{A,B}(\eps)$ was always within a $3.15\%$ relative error of $\langle Q_\Pi\rangle$. The same parameters ensure that $f_{A,B}(\eps)$ are within $71.77\%$ of all values of $Q_\Pi$. We also fitted a simpler model $f_C(\eps) := C/\eps$ to $\langle Q_\Pi\rangle$, and found that the best parameter choice $C = 4.66$ approximates both $\langle Q_\Pi\rangle$ to within $19.65\%$ and $Q_\Pi$ to within $74.28\%$.

The software that performed this analysis is available at:
\begin{center}
    \url{https://github.com/qiskit-community/ChebAE/blob/main/chebae.ipynb}
\end{center}

\end{proof}

\begin{figure}[h]
    \begin{center}
        \includegraphics[width=0.75\textwidth]{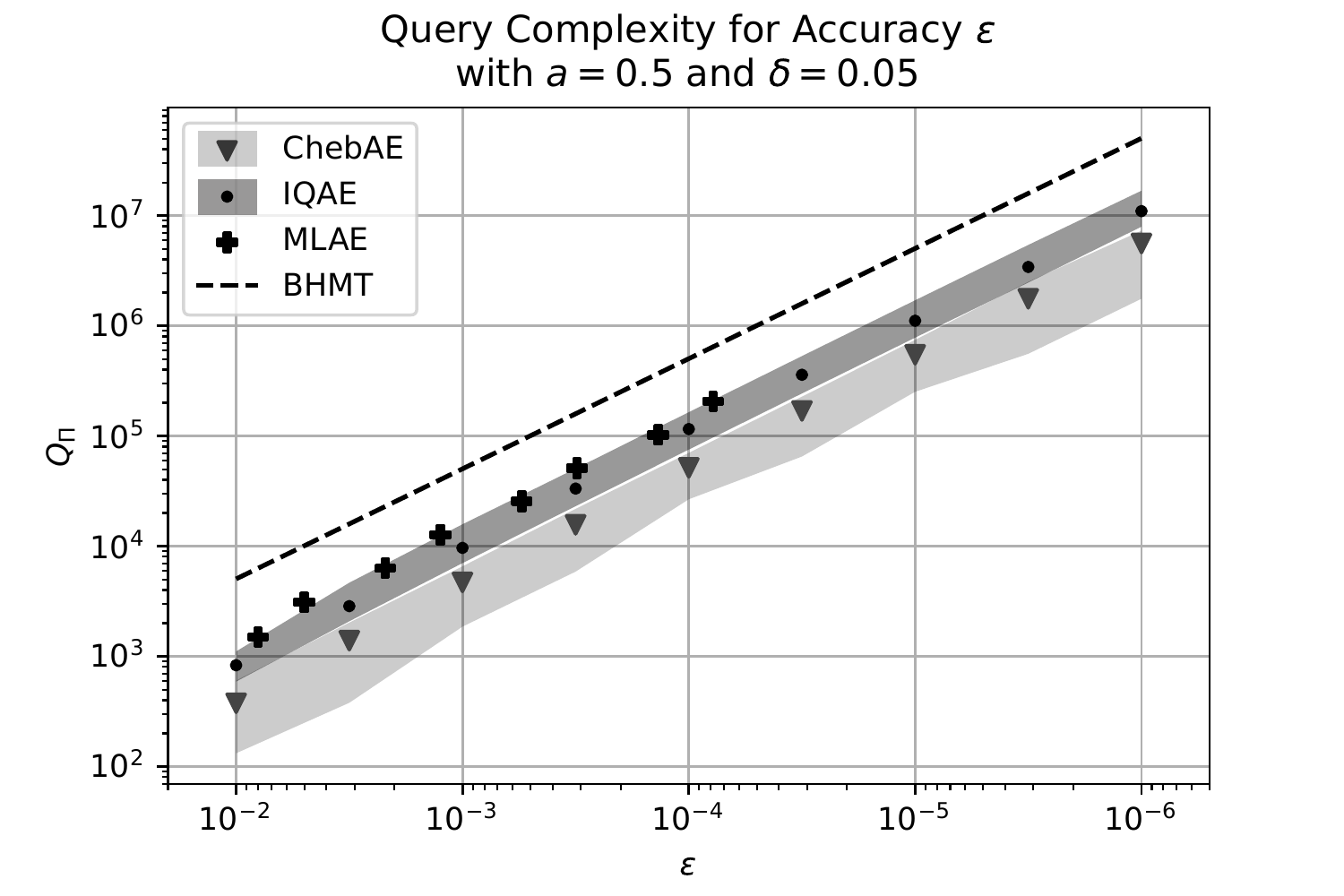}
        \vspace{-7mm}
    \end{center}
    \caption{\label{fig:Q_vs_eps} Comparison of amplitude estimation algorithms with a fixed underlying amplitude and failure probability, and varying target accuracy. Each point is the average of 1000 samples, and the shaded regions around IQAE and ChebAE indicate the minimum and maximum query complexities observed. For MLAE, each point is the average of 100 runs.} 

\end{figure}

A key question in this data analysis is if we should consider in our fits the extra $\log(\log(\eps^{-1}))$ factor, which is present in the asymptotic performance. This is complicated by the fact that $\mathbf{Q}_\Pi$ is a random variable, and visibly exhibits significant variation about the mean. The $f_{A,B}$ fit function captures the empirical mean $\langle Q_\Pi\rangle$ of our data rather accurately, suggesting that the $\log(\log(\eps^{-1}))$ does affect the empirical query complexity. But the most extreme values of $Q_\Pi$ are only within $\sim70\%$ of the $f_{A,B}$ model. However, the simpler $f_{C}$ model does not capture $Q_\Pi$ or $\langle Q_\Pi \rangle$ any better. This suggests that the large error of $\sim70\%$ does not stem from poor fit quality but from actual variation within the random variable $\mathbf{Q}_\Pi$.

Having presented the algorithm, we seek to demonstrate three things. First, we want to argue a significant improvement in query complexity relative to prior art. Second, we want to demonstrate that it was indeed the `late' vs `early' heuristic that resulted in the improvement over IQAE. Third, we show that ChebAE's requires fewer queries when $a \approx 1$.

To compare to the prior art, we characterize the constant-factor query complexity performance of some of the previous algorithms. We consider \cite{0005055} for reference, and also look at MLAE and IQAE since these two seem to have similar performance according to prior literature \cite{1912.05559, 2005.05300}.  Fortunately, the amplitude estimation method from \cite{0005055} has a closed-form expression for its query complexity.

\begin{proposition} \label{prop:bhmt} \cite{0005055} For any $\eps,\delta > 0$, there is an amplitude estimation algorithm (not in the form of Definition~\ref{def:amp_est_alg}) that outputs an estimate $\mathbf{\hat{a}}$ satisfying $\text{Pr}[ |\mathbf{\hat a} - a| \leq \eps  ] > 1- \delta$ with deterministic query complexity $Q_\Pi$ as follows:
    \begin{align}
        Q_\Pi &= \left\lceil \frac{\pi}{\arcsin\eps}  \right\rceil \cdot  \left\lceil \frac{1}{2 } \left(\frac{8}{\pi^2} - \frac{1}{2}\right)^{-2}\ln \frac{1}{\delta}  \right\rceil
    \end{align}
    At $\delta = 0.05$ this is about $Q_\Pi \approx_{1\%} 50/\eps$ for $\eps < 10^{-2}$.
\end{proposition}

For MLAE and IQAE, we perform similar numerical experiments as in Empirical~Claim~\ref{empericalclaim:chebae}.

\begin{empiricalclaim} For $a = 0.5$, $\delta = 0.05$ and $\eps \in [10^{-3}, 10^{-6}]$, the IQAE algorithm from \cite{1912.05559} satisfies:
    \begin{align}
        \langle Q_\Pi \rangle \approx_{7.27\%} \frac{2.62}{\eps} \ln \left( 6.61 \ln\left(\frac{1}{\eps} \right) \right)
    \end{align}
   For the same $\eps,\delta$, the MLAE algorithm from \cite{1904.10246} satisfies:
    \begin{align}
        \langle Q_\Pi \rangle, Q_\Pi \approx_{13.91\%} \frac{3.48}{\eps} \ln \left( 9.89 \ln\left(\frac{1}{\eps} \right) \right) 
    \end{align}
\end{empiricalclaim}
\begin{proof} For IQAE we took 1000 runs with $N_\text{shots}=100$, $r=2$ for the same 9 logarithmically spaced $\eps$ in the interval $\eps \in [10^{-3}, 10^{-6}]$ with $\delta = 0.05$ and $a = 0.5$. The $f_{A,B}$ model with $(A,B) = (2.62,6.61)$ captured the mean with $\langle Q_\Pi\rangle$ to within $7.27\%$, and captured all values with $Q_\Pi$ to within $57.62\%$. The $f_{C}(\eps)$ model fitted to $\langle Q_\Pi\rangle$ with $C = 9.93$ achieved $\langle Q_\Pi\rangle \approx_{16.41\%} f_{C}$ and $Q_\Pi \approx_{72.31\%} f_{C}$.

MLAE does not take $\eps$ as an input parameter. Instead, following \cite{1912.05559}, we take $100$ samples from $T_{2k+1}$ with $k = 0,2^0,2^1,...,2^k$ and compute a confidence interval based on likelihood ratio method. The reported $\eps$ is the half-width of this interval. In some sense, this gives MLAE an unfair advantage: IQAE and ChebAE also produce a confidence interval, and that confidence interval's half-width may be smaller than the requested $\eps$. IQAE and ChebAE do not report that smaller potentially smaller half-width, while MLAE does.

Either way, we perform 100 runs with $k = \{1,\ldots,11\}$ with $\delta=0.05$ and $a=0.5$. Interestingly, we observed that $Q_\Pi$ hardly deviated from its mean $\langle Q_\Pi\rangle$. The $f_{A,B}(\eps)$ model with $(A,B) = (3.48,9.89)$ achieved $\langle Q_\Pi\rangle, Q_\Pi \approx_{13.91\%} f_{A,B}$. The $f_{C}(\eps)$ model with $C = 9.99$ achieved $\langle Q_\Pi\rangle, Q_\Pi \approx_{70.49\%} f_{C}$.

The results of these experiments are also shown in Figure~\ref{fig:Q_vs_eps}, and the fraction of incorrect estimates was $<\delta$ for all $\eps$ (IQAE) and for all $k$ (MLAE).
\end{proof}

Again we are in a situation where the performance can be measured differently depending on what model is chosen. Just like ChebAE, for IQAE the $f_{A,B}$ model captures $\langle Q_\Pi \rangle$ more accurately than $f_{C}$. This again suggests that the $\log(\log(\eps^{-1}))$ dependence is empirically visible. The accuracy of the $f_{A,B}$ and $f_{C}$ models for MLAE are about the same. 

\begin{figure}[h]
    \begin{center}
        \includegraphics[width=0.75\textwidth]{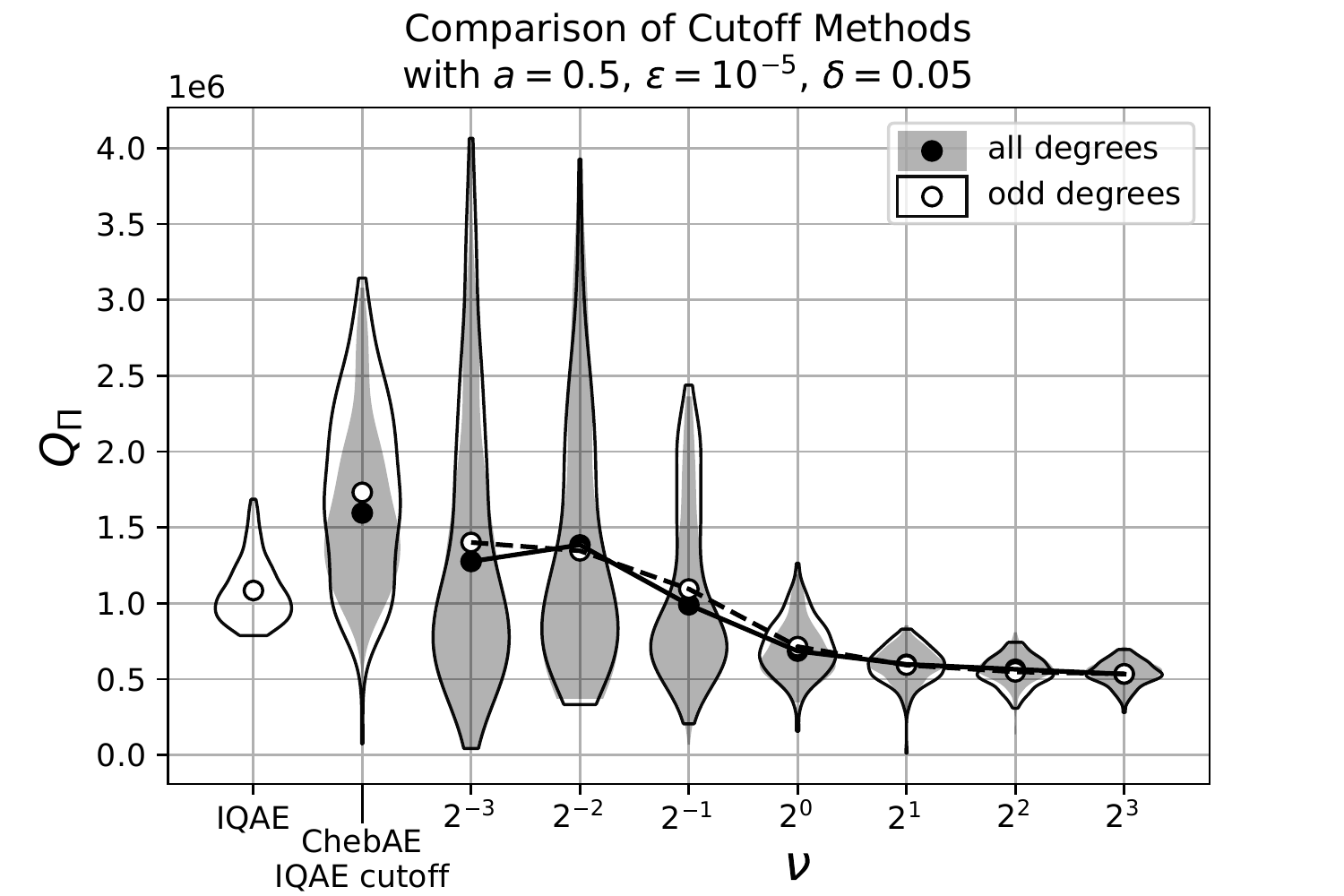}
        \vspace{-5mm}
    \end{center}
    \caption{\label{fig:what_changed} Comparison of different variants of the ChebAE algorithm to the IQAE algorithm. All violins are based on 200 samples and have $a=0.5, \eps=10^{-5},$ and $\delta=0.05$, as well as $r=2$ and $N_\text{shots}=100$. We see how considering Chebyshev polynomials with even degrees as well as odd degrees does not affect the performance much. The primary deciding factor of performance is the cutoff method - step 4(b) in Empirical~Claim~\ref{empericalclaim:chebae}. The leftmost violin shows \cite{1912.05559}'s implementation of IQAE for reference. The violin marked `Chebae IQAE cutoff' is obtained by taking ChebAE and replacing step 4(b) with the cutoff criterion taken from \cite{1912.05559}'s IQAE implementation (see also lines 12-15 in Algorithm~1 of \cite{1912.05559}). The violins on the right leave step 4(b) as in Empirical~Claim~\ref{empericalclaim:chebae}, but vary $\nu$. } 
\end{figure}

Now we justify the claim that ChebAE only needs $45\%$ to $65\%$ of the queries of IQAE. Any such claim will be severely reductive: both of the query complexities are random, depend on the amplitude $a$ being estimated, and the fit methodology also affects the claim. Certainly from Figure~\ref{fig:Q_vs_eps} we can see that ChebAE exhibits a clear speedup over IQAE, with the worst ChebAE $Q_\Pi$ being about the same as the best IQAE $Q_\Pi$. The speedup is visible despite the enormous variation of $Q_\Pi$ around $\langle Q_\Pi \rangle$. To quantify the reduction in query complexiy we will focus only on the average $\langle Q_\Pi \rangle$ and rely on the fits we performed. Both ChebAE and IQAE are better explained by the $f_{A,B}$ model. The fact that we need two parameters to properly fit ChebAE and IQAE makes a comparison more complicated, but ChebAE improves over IQAE in both parameters $A,B$. We can thus be generous to IQAE and ignore the $B$ parameter, and say the fraction is about $1.71/2.62 \approx 65\%$. If we use the $f_{C}$ model then the fraction is $4.66/9.93 \approx 45\%$. By this metric, IQAE only needs about $9.93/50 \approx 20\%$ of the queries of \cite{0005055}.

It remains to justify that it was the altered `late' vs `early' heuristic that was the primary source of the speedup. ChebAE is extremely similar to IQAE, and the differences can be summarized with three alterations: First, rather than obtain an estimate of the Grover angle $\theta := \arcsin(a)$, ChebAE estimates $a$ directly. Second, ChebAE considers both odd and even Chebyshev polynomials whereas IQAE only considers odd polynomials. Third, the `late' vs `early' heuristic in step 4(b) in Empirical~Claim~\ref{empericalclaim:chebae} has a different form and also has the $\nu$ optimization parameter.

We compare these changes individually in Figure~\ref{fig:what_changed}. Here we observe that (for $a = 0.5$) modifying IQAE to estimate $a$ instead of $\theta$ significantly worsens the query complexity on average, and increases the amount of random variation.  We also observe that when ChebAE is restricted to consider only odd polynomials just like IQAE does, then the performance hardly changes. When switching to the new heuristic in step 4(b), the performance is slightly improved even when $\nu$ is small. Then, the query complexity decreases with increasing $\nu$ until even the worst ChebAE runs are faster than the best IQAE runs. So clearly it was the altered heuristic and the tuning of $\nu$ that caused the improvement.

\begin{figure}[h]
    \begin{center}
        \includegraphics[width=0.75\textwidth]{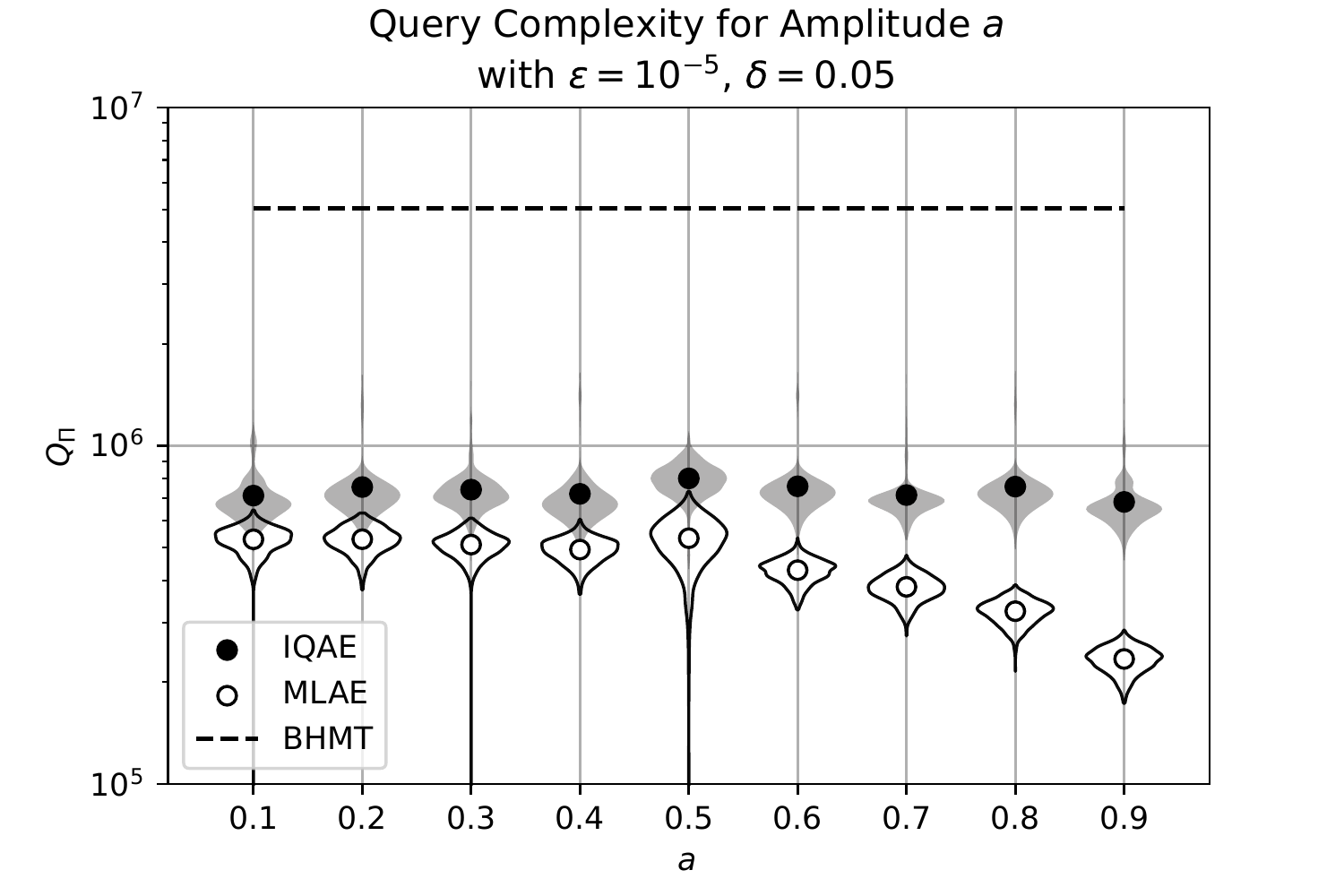}
        \vspace{-7mm}
    \end{center}
    \caption{\label{fig:Q_vs_a} Comparison of amplitude estimation algorithms at a fixed accuracy and success probability while varying underlying amplitude $a$ being estimated. Each violin has 1000 samples, and the dots indicate the means.  } 

\end{figure}

The behavior of $\nu$ can be explained as follows: as $\nu$ increases the heuristic switches from `early' to `late' for earlier iterations, thus decreasing the chances of taking more samples from the expensive Chebyshev polynomials near the end. The improvement with $\nu$ exhibits diminishing returns, and worsens the classical performance of algorithm analysis. We find little benefit in increasing $\nu$ beyond $\nu = 8$, and the classical analysis time is still faster than that of IQAE.

Finally, we remark on the performance as a function of $a$. In Proposition~\ref{prop:bhmt} we claimed that the accuracy \cite{0005055} is independent of $a$. However, Theorem~12 in \cite{0005055} actually has an extra dependence on $a$. Both IQAE and \cite{0005055} estimate the Grover angle $\theta := \arcsin(a)$, and then convert the estimate of $\theta$ to an estimate of $a$. When $a \approx 1$, a less accurate estimate of $\theta$ suffices for the same accuracy on $a$, if only some bounds on $a$ were known beforehand. We consider a setting where nothing is known about $a$ before running the algorithm. Furthermore, we request that the algorithms return an estimate of a certain accuracy $\eps$, and then do not reward the algorithms for achieving a higher accuracy than what was requested. In this more realistic setting for estimation, IQAE and \cite{0005055} must assume the worst-case value of $a$ to determine their query complexity and are then penalized for overshooting their accuracy. Consequently, their query complexity as a function of $a$ is constant.

The experiments presented in Figure~\ref{fig:Q_vs_a} shows that the query complexity of ChebAE delivers smaller query complexities when $a$ is closer to 1. This stems from the fact that ChebAE does not indirectly estimate $a$ via the Grover angle $\theta$, and instead maintains a confidence interval on $a$ directly. By estimating $a$ directly, ChebAE can take advantage of the fact that Chebyshev polynomials have larger slope when $a$ is closer to 1, and thereby exhibit improved performance for these amplitudes even if nothing is known about $a$ beforehand. The algorithm terminates as soon as the estimate of $a$ is good enough, so it does not overshoot and obtain a higher estimate than necessary. Note that for the numerical experiments supporting our claim that ChebAE outperforms IQAE more generally, we deliberately chose $a = 0.5$ because that way we compare IQAE to the slowest possible version of ChebAE.

\ifdefined\maindocument
\else
    
    \end{document}
\fi

\ifdefined\maindocument
    \section{Unbiased Estimation \label{sec:unbiased}}
\else
    \documentclass[11pt]{article}
    \usepackage[margin=0.4in]{geometry}
    
    \begin{document}
    \begin{center}
        {\Large Unbiased Estimation}
    \end{center}
\fi

    In this section we present a simple new algorithm for amplitude estimation based on sampling from semi-Pellian polynomials. This algorithm has a new capability: it samples from a random variable $\mathbf{\hat a}$ satisfying $\mathbb{E}[\mathbf{\hat a}] \approx a$. We call this task \textbf{unbiased amplitude estimation.}

    This is the first algorithm achieving unbiased amplitude estimation in the literature to our knowledge. On the other hand, unbiased \emph{phase} estimation has appeared in \cite{2109.10215, 2207.08800, 2207.08643}. The main idea is that if $U$ has eigenphase $e^{i\phi}$, we can prepare $Ue^{i\boldsymbol{\theta}}$ with eigenphase $e^{i(\phi+\boldsymbol{\theta})}$ for a random shift $\boldsymbol{\theta}$. After applying phase estimation to $Ue^{i\boldsymbol{\theta}}$ and subtracting off the random shift yields an unbiased estimator for $\phi$.
    
    However, one cannot use \cite{0005055} to directly translate the technique of \cite{2109.10215} to amplitude estimation. \cite{0005055} directly estimates $\theta := \arcsin(a)$ via phase estimation. Say we had an unbiased estimator $\mathbf{\hat{\boldsymbol{\theta}}}$ of $\theta$. Then, $\sin(\mathbf{\hat{\boldsymbol{\theta}}})$ is not an unbiased estimator of $a$. A different approach is needed.

    Instead, we rely on the following strategy. Say we have established a confidence interval $[a_\text{min}, a_\text{max}]$ within which $a$ is contained with high probability. Suppose furthermore that $a_\text{max} - a_\text{min} < \eps$, so any value in the interval is within $\eps$ of $a$. We construct a polynomial $P(a)$ such that:
    \begin{align}
        |P(a)|^2 \approx  \frac{ a - a_\text{min}  }{a_\text{max} - a_\text{min}}
    \end{align}
    Observe that $|P(a)|^2$ approximates a line with $|P(a_\text{min})|^2 \approx 0$ and $|P(a_\text{max})|^2 \approx 1$. So, we let our final output be:
    \begin{align}
        \mathbf{\hat a} = a_\text{max} \text{ if } \mathbf{b}_{P} = 1\text{, else } a_\text{min}
    \end{align}
    recalling that $\Pr[\mathbf{b} = 1] = |P(a)|^2$ from Theorem~\ref{thm:polysamp}.  Then $\mathbb{E}[\mathbf{\hat a}] = |P(a)|^2 a_\text{max} + (1 - |P(a)|^2) a_\text{min} \approx a$, so the estimator is approximately unbiased. But since we also have $\mathbf{\hat a} \in [a_\text{min}, a_\text{max}]$, we also have a separate guarantee that $\mathbf{\hat a}$ is within $\eps$ of $a$.

    Sampling from the polynomial $P(a)$ immediately solves the problem once we have obtained the confidence interval $[a_\text{min}, a_\text{max}]$. But it turns out that the same polynomial can also be used to refine such a confidence interval, that is, shrinking it by a factor of 0.9 while keeping $a$ in the interval with high probability. This immediately yields a full algorithm for amplitude estimation that is rather similar to \cite{1908.10846, 1912.05559}, but dramatically simpler than either of them.

    Thus, the main technical work of this section is to find a polynomial $P(a)$ with the desired properties. To this end, we use a standard result form the theory of function approximation using polynomials.

    \begin{theorem} \label{thm:jackson} \textbf{Jackson's Theorem.} \cite{rivlin} For any continuous function $g(x)$ on the interval $[-1,1]$ there exists a polynomial $P(x)$ of degree at most $d$ so that for all $x \in [-1,1]$:
\begin{align}
    |J(x) - g(x)| \leq 6 \omega_g(1/d),
\end{align}
where $\omega_g(\delta)$ is the modulus of continuity of $g(x)$ given by:
\begin{align}
  \omega_g := \text{sup}\{  |g(x) - g(y)| \text{ for } x,y \in [-1,1] \text{ with } |x-y|\leq \delta  \}.
\end{align}
\end{theorem}

    Essentially, we can approximate any continuous function $g(x)$ with a polynomial to error $\eps$. The degree will be on the order of the maximum slope of $g$ times $\eps^{-1}$. One key complication remains: we want $|P(a)|^2$ to approximate a line, not $P(a)$. This means we need to account for the Born rule in our choice of $g(x)$, which makes the error analysis a bit more involved. 

    \begin{lemma} \label{lemma:linepoly}  For $0 < a_\text{min} < a_\text{max} <1$, let $\Delta := a_\text{max}-a_\text{min}$. Then for any $\eta>0$, there exists an even semi-Pellian polynomial $P_\eta(x)$ such that:
\begin{align}
\left| |P_\eta(x)|^2 - \frac{x - a_\text{min}}{\Delta}  \right| \leq \eta\end{align}
for all $x \in [a_\text{min},a_\text{max}]$. Furthermore $\text{deg}(P_\eta) \in O( \eta^{-1} \Delta^{-1}    )$. 
\end{lemma}
\begin{proof} Let $c$ be some constant that we will pick later in the proof, and let $g(x)$ be defined on the interval $[-1,1]$ as
\begin{align}
 g(x)= \left\{
    \begin{array}{cl} 
      0 & |x| < a_\text{min} \\
        \sqrt{c\cdot \frac{|x| - a_\text{min}}{\Delta}}&  |x| \in [a_\text{min},a_\text{max}]\\
      \sqrt{c} & |x| > a_\text{max} 
      \end{array}
    \right. 
\end{align}

    Observe that $g(x)$ is even, continuous, and that $\omega_g(1/d) \leq \sqrt{\frac{c}{\Delta d}}$.  Following Theorem~\ref{thm:jackson}, let $J(x)$ be a degree $d$ polynomial satisfying $|J(x) - g(x)| \leq 6 \omega_g(1/d)$. We will also pick $d$ later.

    We would like to ensure that our final polynomial $P_\eta(x)$ is even, even though $J(x)$ might not be. Actually, under the hood, Jackson's theorem relies on a Chebyshev expansion of $g(x)$. Since $g(x)$ is even, all the odd components of the expansion will vanish, so $J(x)$ will be even anyway. But we do not need to rely on this fact to proceed: we simply let $P_\eta(x)$ be the even part of $J(x)$. Then:
    \begin{align}
        P_\eta(x) &:= \frac{1}{2} \big(J(x) + J(-x)\big) \\
        \left| P_\eta(x) - g(x) \right| &\leq   \frac{1}{2} \big(|J(x) - g(x)| + |J(-x) - g(-x)|\big) \\
        &\leq  6 \omega_g(1/d) 
    \end{align}
    $P_\eta(a)$ is even, so it has fixed parity. If we can also show that $|P_\eta(x)| \leq 1$ then by Corollary~10 of \cite{1806.01838} we have demonstrated that $P_\eta(x)$ is semi-Pellian.
    \begin{align}
        \left| |P_\eta(x)|^2 - g(x)^2 \right| &\leq (6\omega_g(1/d) )^2\leq \frac{36c}{\Delta d}\\
        |P_\eta(x)|^2 &\leq \max_x g(x)^2 + \left| |P_\eta(x)|^2 - g(x)^2 \right|\\
        &\leq c +  \frac{36c}{\Delta d}
    \end{align}
Thus, if we select $ c := \left(1 + \frac{36}{\Delta d} \right)^{-1}$, then we have $|P_\eta(x)|^2 \leq 1$, so $|P_\eta(x)| \leq 1$, so $P_\eta$ is semi-Pellian.

    It remains to show that $|P_\eta(x)|^2$ approximates $ \frac{x-a_\text{min}}{\Delta}$ on the interval $[a_\text{min}, a_\text{max}]$, and to select $d$. Let $\bar{c} := 1-c$ and apply the triangle inequality:
    \begin{align}
        \left| |P_\eta(x)|^2 - \frac{x - a_\text{min}}{\Delta}  \right| &\leq \left| |P_\eta(x)|^2 - c\frac{x - a_\text{min}}{\Delta}  \right| + \left| c\frac{x - a_\text{min}}{\Delta}  - \frac{x - a_\text{min}}{\Delta} \right| \\
        &\leq \frac{c}{\Delta d} + (1-c) \\
        &\leq \frac{1 - \bar{c}}{\Delta d} + \bar{c} \\
        &\leq \frac{1 - \bar{c} + \bar{c}\Delta d}{\Delta d} \\ 
        &\leq \frac{1}{\Delta d} - \bar{c}\frac{1+\Delta d}{\Delta d} \leq \frac{1}{\Delta d} 
    \end{align}
    So, we achieve $ \frac{1}{\Delta d} \leq \eta$ when $d := \lceil \eta^{-1}  \Delta^{-1} \rceil$. This completes the proof.
\end{proof}

Now that we have a construction for $P_\eta(a)$, we can present the algorithm for unbiased estimation. The main idea is to take an interval $[a_\text{min}, a_\text{max}]$ starting at $[0,1]$, and to refine it until $\Delta := a_\text{min}-a_\text{max}$ satisfies $\Delta < \eps$. Then we apply the trick we described at the beginning of the section to sample a $\mathbf{\hat a}$ satisfying both $\mathbb{E}[\mathbf{\hat a}] \approx a$ and $\text{Pr}[|\mathbf{\hat a} -a | \geq \eps] \leq \delta$.

Before we step into the main proof, we take note of three things:
\begin{itemize}
    \item Since we always shrink $\Delta$ by the same factor, it is possible to anticipate how many iterations we need: $\sim \log(\eps^{-1})$ many. If we grow our failure probability at each iteration $\delta_t$ via a geometric series, we can avoid an extra $\log(\eps^{-1})$ factor in the query complexity. This analysis is taken directly from \cite{1908.10846}.
    \item For shrinking the interval, we actually can get away with a constant accuracy $\eta = 0.1$  in our polynomial approximation. It is only the final step where we sample $\mathbf{\hat a}$ that requires high accuracy.
    \item Since any random variable bounded in the interval $[a_\text{min}, a_\text{max}]$ must also have its expectation in that interval, we automatically get $|\mathbb{E}[\mathbf{\hat a}] - a | \leq \eps$. So the main capability of the algorithm is to make expected value even closer to $a$. This additional accuracy $\eta$ enters the runtime as $\eta^{-1}$, stemming from the accuracy of Jackson's theorem. Is there a polynomial approximation achieves polylogarithmic scaling in $\eta^{-1}$?
\end{itemize}

\begin{theorem} \label{thm:unbiasedamplitudeestimation} \textbf{Unbiased amplitude estimation.}  For every $\eps,\delta,\eta > 0$ there exists an amplitude estimation algorithm satisfying:
     \begin{align}
        \text{Pr}[|\mathbf{\hat a}  - a| \geq \eps] \leq \delta
    \end{align}\\[-1.3cm]
    \begin{align}
        \mathbf{D} \text{ is not random and } \mathbf{D} \in O( \eps^{-1}( \log(\delta^{-1}) + \eta^{-1}  )  )
    \end{align}\\[-1.3cm]
      \begin{align}
        \left| \mathbb{E}[\mathbf{\hat a}] - a \right| \leq  \eps\eta + \delta
   \end{align}
\end{theorem}
    \begin{proof} The algorithm proceeds as follows:
        \begin{enumerate}
            \item Initialize $a^{(0)}_\text{min} \leftarrow 0$ and $a^{(0)}_\text{max} \leftarrow 1$. Let $\Delta_t := a^{(t)}_\text{max} - a^{(t)}_\text{min}$ throughout.
            \item Let $T := \lceil\log_{0.9}(\eps)\rceil$. For $t = 0,1,2,...,T-1$, do:
                \begin{enumerate}
                    \item Use Lemma~\ref{lemma:linepoly} to construct a polynomial $P^{(t)}_{0.1}(a)$ whose square is $0.1$-close to $(a - a^{(t)}_\text{min})/\Delta_t$ on $[a^{(t)}_\text{min},a^{(t)}_\text{max}]$.
                    \item Let $\delta_t := (\eps\delta/10)  \cdot 0.9^{-t}$, sample from $\mathbf{b}_{P^{(t)}_{0.1}}$ a total of $m_t := 6 \ln(\delta_t^{-1})$ times, and let $\mathbf{S}^{(t)}$ be the total.
                    \item If $\mathbf{S}^{(t)} > m_t/2$, then set:
                        \begin{align}
                            [a^{(t+1)}_\text{min},\hspace{1mm}a^{(t+1)}_\text{max}] \leftarrow [a^{(t)}_\text{min} + 0.1\Delta_t,\hspace{1mm}a^{(t)}_\text{max}]
                        \end{align}
                             Otherwise, set:
                        \begin{align}
                            [a^{(t+1)}_\text{min},\hspace{1mm}a^{(t+1)}_\text{max}] \leftarrow [a^{(t)}_\text{min},\hspace{1mm}a^{(t)}_\text{max} -  0.1\Delta_t]
                        \end{align}
                \end{enumerate}
            \item  Use Lemma~\ref{lemma:linepoly} to construct a polynomial $P^{(T)}_\eta(a)$ whose square is $\eta$-close to $(a - a^{(T)}_\text{min})/\Delta_T$ on $[a^{(T)}_\text{min},a^{(T)}_\text{max}]$.
            \item Sample from $\mathbf{b}_{P^{(T)}_\eta}$ once. If  $\mathbf{b}_{P^{(T)}_\eta} = 1$, then return $\mathbf{\hat a} \leftarrow a^{(T)}_\text{max}$. Otherwise return $\mathbf{\hat a} \leftarrow a^{(T)}_\text{min}$.
        \end{enumerate}

        At the beginning we have $\Delta_0 = 1$, and each of the $T$ iterations of step 2 transforms $\Delta_{t+1} = 0.9\Delta_t$. So, at the end we have $\Delta_T = 0.9^{T} \leq \eps$.
      
        We first establish that $\text{Pr}[|\mathbf{\hat a}  - a| \geq \eps] \leq \delta$. This is guaranteed if $a \in [a^{(T)}_\text{min}, a^{(T)}_\text{max}]$ after step 2 finishes, since afterwards both $|a^{(T)}_\text{max} - a| \leq \Delta_T \leq \eps$ and $|a^{(T)}_\text{min} - a| \leq \Delta_T \leq \eps$. So it suffices to show that the probability we ever have $a \not\in [a^{(T)}_\text{min}, a^{(T)}_\text{max}]$ is at most $\delta$.
   
        Say we are at step $t$ of step 2, and we have $a \in [a^{(t)}_\text{min}, a^{(t)}_\text{max}]$. Then, there are two ways we could update the interval to make $a \not\in [a^{(t+1)}_\text{min}, a^{(t+1)}_\text{max}]$: either $a \leq a^{(t)}_\text{max} - 0.9\Delta_t$ and $\mathbf{S}^{(t)} > m_t/2$, or $a \geq a^{(t)}_\text{min} + 0.9\Delta_t$ and $\mathbf{S}^{(t)} \leq m_t/2$. We show that both of these cases occur with probability at most $\delta_t$.
    
        Say $a \leq a^{(t)}_\text{max} - 0.9 \Delta_t = a^{(t)}_\text{min} + 0.1 \Delta_t$. We have:
        \begin{align}
            |P^{(t)}_{0.1}(a)|^2 &\leq \frac{a - a^{(t)}_\text{min}}{\Delta_t} + 0.1\\
                        &\leq 0.1 + 0.1 = 0.2.
        \end{align}
        So $\mathbb{E}[\mathbf{b}_{P^{(t)}_{0.1}}] \leq 0.2$, implying $\mathbb{E}[\mathbf{S}^{(t)}] \leq 0.2m_t$. Using the Chernoff-Hoeffding theorem:
        \begin{align}
            \text{Pr}[ \mathbf{S}^{(t)} > m_t/2 ] &\leq \text{Pr}[ \mathbf{S}^{(t)} \geq \mathbb{E}[\mathbf{S}^{(t)}] +    0.3 m_t]\\
        &\leq e^{-2(0.3)^2 m_t} \leq \delta_t.
        \end{align}
        Similarly say $a \geq a^{(t)}_\text{min} + 0.9 \Delta_t$. Then $\mathbb{E}[\mathbf{b}_{P^{(t)}_{0.1}}]   \geq 0.8$, so $\text{Pr}[ \mathbf{S}^{(t)} \leq m_t/2 ] = \text{Pr}[  \mathbf{S}^{(t)} \leq \mathbb{E}[\mathbf{S}^{(t)}] - 0.3m_t ] \leq \delta_t$ by the same argument.

        To finish the argument that $\text{Pr}[|\mathbf{\hat a}  - a| \geq \eps] \leq \delta$, we follow the analysis of \cite{1908.10846}. Let $b := 1/0.9 \approx 1.11$ and observe that $T = \lceil \log_b(\eps^{-1}) \rceil$. We bound the probability that any of the steps fail with a union bound:
        \begin{align}
            \sum_{t=0}^{T-1} \delta_t = \frac{\eps\delta}{10}  \sum_{t=0}^{T-1} b^t  \leq \frac{\eps\delta}{10}\frac{b^T - 1}{b-1} \leq  \frac{\delta}{10}\frac{1}{b-1} \leq \delta.
        \end{align}

        Now we show that $\mathbf{D}$ only takes a single value with probability 1. We observe that in Lemma~\ref{lemma:linepoly}, while $P^{(t)}_\eta(a)$ itself depends on the individual values of $a^{(t)}_\text{min}, a^{(t)}_\text{max}$, the degree only depends on their difference $\Delta_t$. While the values $a^{(t)}_\text{min}, a^{(t)}_\text{max}$ may vary randomly in the algorithm, $\Delta_t$ is always exactly $0.9^t$. Since the total number of polynomials we sample from is also independent of $a^{(t)}_\text{min}, a^{(t)}_\text{max}$, to total degree of all polynomials $\mathbf{D}$ is deterministic. This makes asymptotic claims on $\mathbf{D}$ well defined.

        At the $t$'th iteration of step 2, we sample from a polynomial of degree $\in O( \Delta_t^{-1} 0.1^{-1} ) = O(0.9^{-t}) =  O( b^{t} )$ a total of $m_t$ times. Then, in step 4, we sample from a polynomial of degree $\in O( \Delta_t^{-1} \eta^{-1} ) \subseteq O(  \eps^{-1}  \eta^{-1} )$ once. So the total degree is:
        \begin{align}
            \mathbf{D} \in O\left( \sum_{t=0}^{T-1}   \frac{m_t}{0.9^t}  + \frac{1}{\eta \eps}   \right) &\leq  O\left( \sum_{t=0}^{T-1}   \frac{1}{0.9^t} \ln\left(\frac{1}{\delta_t}\right)  + \frac{1}{\eta \eps}   \right)\\
            &\leq  O\left( \sum_{t=0}^{T-1}   \frac{1}{0.9^t} \ln\left(\frac{0.9^t}{\eps\delta}\right)  + \frac{1}{\eta \eps}   \right)\\
            &\leq  O\left( \frac{1}{\eps} \ln\left(\frac{1}{\delta}\right)  + \frac{1}{\eta \eps} \right) + O\left( \sum_{t=0}^{T-1} b^t \left[ \ln\left(\frac{1}{\eps} \right) - t \ln(b) \right]\right) 
        \end{align}
        The first term is the desired bound, so all that remains to show is that the second term is dominated by the first term. Indeed, a somewhat cumbersome calculation in equations (38-44) of \cite{1908.10846} demonstrates that the second term is $\leq O( \eps^{-1} )$ (albeit with $T+1$ instead of $T$, which is a stronger result).

        Finally, we prove the fact that makes this construction unique: that $\left|\mathbb{E}[\mathbf{\hat a}] - a\right| \leq \eps\eta + \delta$. To do so, we interpret the dynamics of the algorithm as a binary decision tree: at each node we can either proceed to the left child where we increase $a_\text{min}$ or the right child where we decrease $a_\text{max}$. At the $t$'th iteration of step 2 we are at the $t$'th layer of the tree, and we can think of step 4 as the $T$'th layer containing all the leaves. See Figure~\ref{fig:recursiveexpectation}.

        We can label each node in the $t$'th layer of the tree with a bit string $\vec x\in\{0,1\}^t$, where the bits encode the choices made to get to that node. We say a node $\vec x$ is `good' if the corresponding interval $[a^{(t)}_\text{min}, a^{(t)}_\text{max}]$ contains $a$, otherwise we say $\vec x$ is `bad'. Each node is associated with a random variable $\mathbf{\hat a}_{\vec x}$, denoting the output of the algorithm assuming we were at the $\vec x$ node at iteration $t = \text{dim}(\vec x)$. Correspondingly, at the root of the tree $\mathbf{\hat a}$ equals $\mathbf{\hat a}_{\vec x}$ where $\vec x$ is the `empty' bit string with dimension 0.

        We establish $\left|\mathbb{E}[\mathbf{\hat a}] - a\right| \leq \eps\eta + \delta$ by proving the following fact. For all $t \in [0,1,..., T]$ and for all good $\vec x \in \{0,1\}^t$:
        \begin{align}
            \left|\mathbb{E}[\mathbf{\hat a}_{\vec x}] - a\right| \leq \eps\eta + \sum_{k=t}^{T-1} \delta_k
        \end{align}
        Then $\left|\mathbb{E}[\mathbf{\hat a}] - a\right| \leq \eps\eta + \delta$ follows by plugging in $t=0$ and noting that $\sum_{k=0}^{T-1}\delta_k \leq \delta$ as shown above.

            We prove the statement recursively, starting with the leaves and working our way to the root. For a leaf, $t = T$ so the sum $\sum_{k=T}^{T-1} \delta_k$ vanishes. We have:
        \begin{align}
            \mathbb{E}[  \mathbf{\hat a}_{\vec x}] &= |P^{(T)}_{\eta}(a)|^2 \cdot a_\text{max} - (1 - |P^{(T)}_\eta(a)|^2) \cdot a_\text{min} \\
            &= |P^{(T)}_{\eta}(a)|^2 \cdot \Delta_T + a_\text{min}
        \end{align}
By Lemma~\ref{lemma:linepoly}, we have $||P^{(T)}_\eta(a)|^2 - (a-a^{(T)}_\text{min})/\Delta_T| \leq \eta$. So:
    \begin{align}
        \eta \Delta_T &\geq \left||P^{(T)}_\eta(a)|^2 - (a-a^{(T)}_\text{min})/\Delta_T\right| \cdot \Delta_T\\
        &=\left| \left(|P_\eta(a)|^2 \cdot \Delta_T +a^{(T)}_\text{min}\right) - a\right| =\left| \mathbb{E}[ \mathbf{\hat a}_{\vec x}] - a\right| 
    \end{align}

    At step 4 we have $\Delta_T \leq \eps$, so we have established $\left| \mathbb{E}[ \mathbf{\hat a}_{\vec x}] - a\right| \leq \eps \eta$ as desired.

    Now we recursively show the statement for $t$ and all good $\vec x \in \{0,1\}^t$, assuming it holds for all good $\vec xy$ at $t+1$, with $y \in \{0,1\}$. There are two cases: either $\vec xy$ is good for both $y \in \{0,1\}$, or it is only good for one of them. It cannot be bad for both $y$, because then $a$ would need to be both $\leq a^{(t)}_\text{min}+0.1\Delta_t$ and $\geq a^{(t)}_\text{min}+0.9\Delta_t$.

    If $\vec xy$ is good for both $y$ (like $\vec x = 01$ in Figure~\ref{fig:recursiveexpectation}), then by assumption $|\mathbb{E}[ \mathbf{\hat a}_{\vec xy}] - a| \leq \eps \eta + \sum_{k=t+1}^T \delta_k$. For some probability $p$ we have:
    \begin{align}
        \mathbb{E}[ \mathbf{\hat a}_{\vec x} ] &=  p \cdot \mathbb{E}[ \mathbf{\hat a}_{\vec x0} ] + (1 - p) \cdot  \mathbb{E}[ \mathbf{\hat a}_{\vec x1} ]\\
        |\mathbb{E}[ \mathbf{\hat a}_{\vec x} ]  - a| &\leq p \cdot |\mathbb{E}[ \mathbf{\hat a}_{\vec x0} ]  - a| +  (1-p) \cdot |\mathbb{E}[ \mathbf{\hat a}_{\vec x1} ]  - a| \\
        &\leq \eps \eta + \sum_{k=t+1}^T \delta_k \leq \eps \eta + \sum_{k=t}^T \delta_k
    \end{align}
    Now assume without loss of generality that $\vec x0$ is bad, and $\vec x1$ is good (like $\vec x = 0$ in Figure~\ref{fig:recursiveexpectation}). Now the assumption says nothing about $| \mathbb{E}[\mathbf{\hat a}_{\vec x0}]  - a |$, but we at least know that $| \mathbb{E}[\mathbf{\hat a}_{\vec x0}]  - a | \leq 1$. Furthermore, we know that the probability $p$ with which we proceed down the $\vec x0$ path is at most $\delta_t$. So:
    \begin{align}
        |\mathbb{E}[ \mathbf{\hat a}_{\vec x} ]  - a| &\leq \delta_t \cdot |\mathbb{E}[ \mathbf{\hat a}_{\vec x0} ]  - a| +   |\mathbb{E}[ \mathbf{\hat a}_{\vec x1} ]  - a| \\
        &\leq \eps \eta + \delta_t +  \sum_{k=t+1}^T \delta_k = \eps \eta + \sum_{k=t}^T \delta_k.
    \end{align}
    \end{proof}

\begin{figure}
    \begin{center}
        \includegraphics[width=0.6\textwidth]{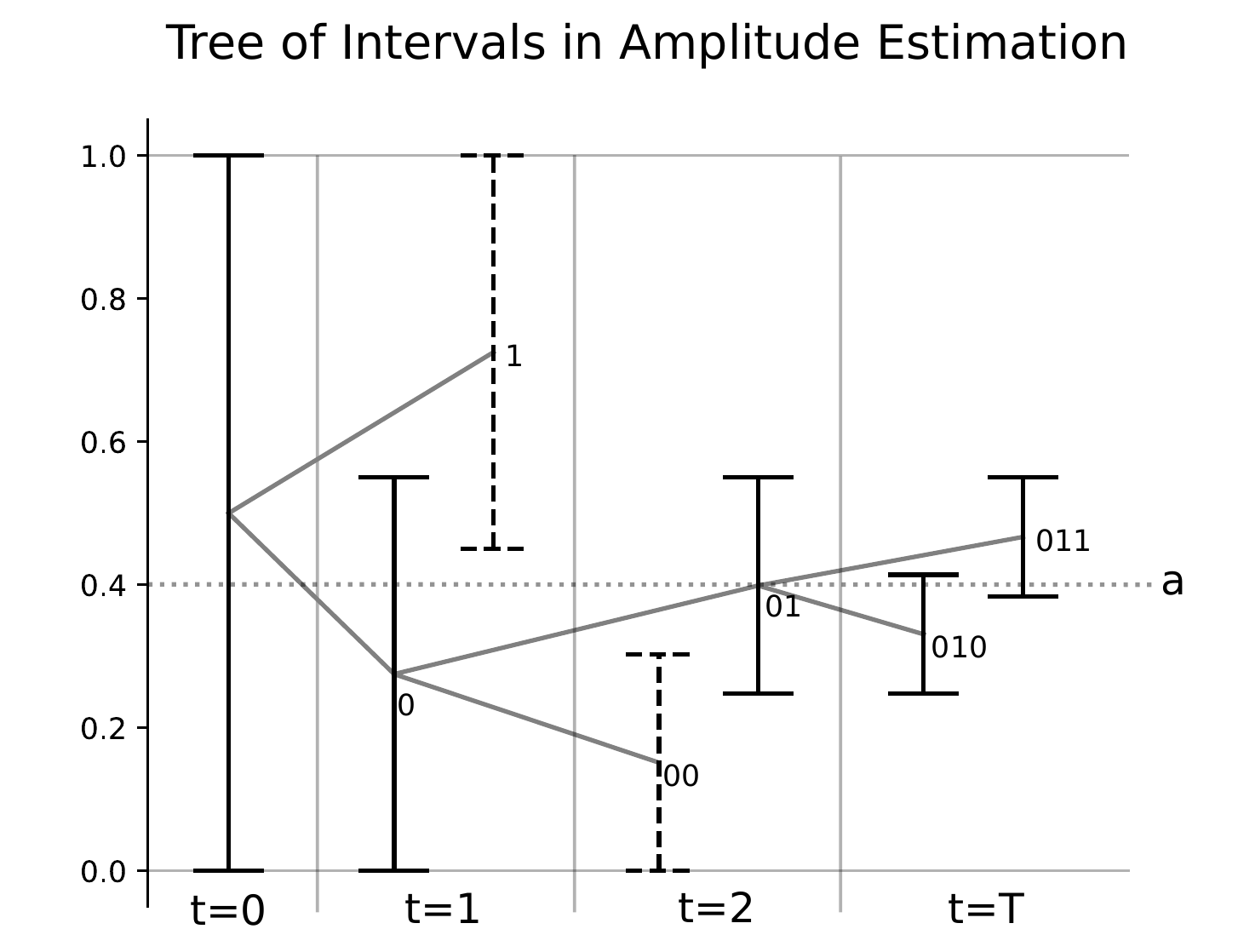}
    \end{center}
    \caption{\label{fig:recursiveexpectation} Sketch of the intervals considered in the proof of Theorem~\ref{thm:unbiasedamplitudeestimation}, labeled by the string $\vec x \in \{0,1\}^t$. At each iteration, we cut off either the right or left side of the interval which corresponds to adding either a $0$ or $1$ to $\vec x$. `Bad' intervals not containing the true amplitude $a = 0.4$ are marked with dashed lines, and their children are not shown. The algorithm describes shrinking the interval by a factor of 0.9 with every step, but for visual clarity we use a factor of 0.55 here. } 

\end{figure}

\ifdefined\maindocument
\else
    
    \end{document}
\fi

\ifdefined\maindocument
    \section{Hybrid Classical-Quantum Estimation \label{sec:hybrid}}
\else
    \documentclass[11pt]{article}
    \usepackage[margin=0.4in]{geometry}
    
    \begin{document}
    \begin{center}
        {\Large Hybrid Classical-Quantum Estimation}
    \end{center}
\fi

    In the future, we may have access to quantum computers that are capable of running Grover's algorithm but are still limited in the overall circuit depth. Amplitude estimation algorithms may become useful earlier if they can still accelerate estimation to accuracy $\eps$ without requiring depth $O(1/\eps)$. This can be achieved by mixing classical and quantum resources.
    
    In particular, \cite{2012.03348} consider the following version of this problem: for any $0 \leq \beta < 1$ they show that there exists an amplitude estimation algorithm that runs in time $O(1/\eps^{1+\beta})$, but requires a maximum depth of only $O(1/\eps^{1-\beta})$. When $\beta = 0$ this reproduces the `fully quantum' performance of other algorithms, and when $\beta \to 1$ this algorithm has the same asymptotic complexity as classical probability estimation.

    We consider the construction of such hybrid algorithms a fascinating theoretical problem.   \cite{2012.03348} give two methods for achieving this: a `Power law' algorithm relying on maximum likelihood estimation, and a `QoPrime' algorithm relying on the Chinese remainder theorem. In this section we give another method for solving this problem based on quantum signal processing. Depending on the reader's background this new method may be more intuitive.

    However, we remark that we consider it questionable if any of these methods are actually useful in practice. In particular, constructing the algorithms to achieve the nice theoretical scaling of $O(1/\eps^{1\pm\beta})$ undercuts the simplicity of the underlying situation. A depth-limited quantum computer will be able to run some fixed number of Grover rotations $r$. A practical version of hybrid estimation would take as input such an $r$, and use that to extract as much information about $a$ as possible. In the context of quantum signal processing, this $r$ corresponds to some fixed `degree budget'. We could then use machine learning to find whatever polynomial of degree $\leq r$ gives the most information about $a$. That is not to say that \cite{2012.03348} does not consider practicality: they perform a noise analysis, and even implement their algorithm in hardware \cite{2109.09685} albeit with a trivial oracle. We merely invite the possibility of simpler algorithms with less elegant theoretical properties.

    There is a reason to be skeptical of the $\beta \to 1$ limit in particular: we expect early quantum computers to have significantly slower clock speeds than classical computers. Thus, classical oracle evaluation is likely much faster than quantum oracle evaluation. So if one is going to put in all the effort of running the oracle on the quantum computer, one might as well do so as many times as possible. A perhaps more reasonable notation of hybrid quantum-classical is discussed by \cite{2202.11443}: we have two different types of queries, quantum and classical, each with different cost. Unfortunately, \cite{2202.11443} shows that there is no speedup for Grover's search in this setting. Is the same true for estimation?

    Despite these observations, we consider the task of varying the complexity with $\beta$ an excellent opportunity to display the power of the polynomial sampling framework. We now discuss the new algorithm for achieving depth $\mathbf{d} \in O(1/\eps^{1-\beta})$ and overall complexity $\mathbf{D} \in O(1/\eps^{1+\beta})$ based on quantum signal processing. The final result is presented in Theorem~\ref{thm:hybridestimation}. While the polynomial constructions require cumbersome error analysis as usual, the core idea is quite simple. In light of the above discussion, we do not attempt a constant-factor performance analysis on this algorithm.

    We take the same approach as in Theorem~\ref{thm:unbiasedamplitudeestimation}: we maintain an interval $[a^{(t)}_\text{min}, a^{(t)}_\text{max}]$ containing $a$ with high probability, and shrink the interval with every iteration depending on what side of the interval $a$ is on. Letting $\Delta_t := a^{(t)}_\text{max} - a^{(t)}_\text{min}$, we test this by finding a polynomial $P(a)$ that is $\leq \frac{1}{2} - \gamma$ when $a$ is  $\leq a^{(t)}_\text{min} + 0.1 \Delta_t$, and is $\geq \frac{1}{2} + \gamma$ when $a$ is $\geq a^{(t)}_\text{max} - 0.1\Delta_t$. This can be achieved by making $P(a)$ approximate a line with slope $\propto \gamma/\Delta_t$, and the degree is largely determined by the slope. Then, $\propto 1/\gamma^2$ samples from the polynomial suffice to distinguish the two cases.

    In the previous section we selected $\gamma$ to be constant, so the degree is $\propto 1/\Delta_t$. This makes the depth and the total number of queries $\propto 1/\Delta_t$. To achieve hybrid estimation, we simply select $\gamma \propto \Delta_t^{\beta}$. Then the depth is $\propto \Delta_t^\beta / \Delta_t = 1/\Delta_t^{1-\beta}$. But, since we need to take $\propto 1/\gamma^2 \propto 1/\Delta_t^{2\beta}$ samples from the polynomial, the total complexity is $ 1/\Delta_t^{1+\beta} $. The runtime is dominated by the final iterations where $\Delta_t \propto \eps$, so the final depth is $\propto 1/\eps^{1-\beta}$ and the final complexity is $\propto 1/\eps^{1+\beta}$ as desired.

    Letting $a^{(t)}_\text{mid} := a^{(t)}_{\text{min}}+0.5\Delta_t$, we require a polynomial that approximates a line passing through $(a^{(t)}_\text{mid}, 1/2)$ with whatever slope we want. Once such a polynomial has been constructed the remainder of the analysis is extremely simple as shown above. However, such a polynomial is unfortunately not so easy to build because of two complications:
    \begin{itemize}
        \item Jackson's theorem (Theorem~\ref{thm:jackson}) is not precise enough. The line that we desire to approximate only creates a gap $\gamma \propto \Delta^{\beta} \in o(1)$. The approximation error better be be smaller than $\gamma$, but Jackson's theorem requires an additional factor of $1/\gamma$ in the degree to achieve this. This is too slow, so we need a more accurate polynomial approximation of a line. Fortunately, \cite{1707.05391} constructed an extremely accurate approximation of $\erf(kx) \approx kx$ via the Jacobi-Anger expansion.
        \item Semi-Pellian polynomials have fixed parity, so even though $\erf(kx)$ is odd, the shifted version $\erf(k(a-a_\text{mid}))$ has mixed parity. We can fix this by instead approximating the even function $\erf(k(a-a_\text{mid})) + \erf(-k(a+a_\text{mid})) + 1$, which is very similar to $\erf(k(a-a_\text{mid}))$ for $a \geq 0$. But this only works when $a_\text{mid}$ is not too close to the origin - otherwise the two $\erf$ functions interfere. This forces us to adopt special behavior for small $a_\text{mid}$, causing $O(a^{-1})$ terms in the final runtime.  \cite{2012.03348}  treat $a$ and $\beta$ as constants in their analysis, so if we are allowed to do the same then we still reproduce their complexities.
    \end{itemize}

    \begin{remark} As pointed out by \cite{2204.13641}, it is sometimes possible to construct oracles for a shifted amplitude $\frac{a+1}{2}$. For some state $\ket{0}$ say we have a unitary $U$ satisfying $U\ket{0} = \ket{\Psi}$ and another unitary $V$ such that $\Pi = V\ket{0}\bra{0}V^\dagger$. Then redefine:
        \begin{align}
            \ket{\bar\Psi} := (I \oplus V^\dagger U)\ket{+}\ket{0}\\
            \bar\Pi := \ket{+}\bra{+} \otimes \ket{0}\bra{0}
        \end{align}
        Observe that $|\bra{0} V^\dagger U \ket{0}| = |\Pi\ket{\Psi}| = a $, but suppose furthermore that $\bra{0} V^\dagger U \ket{0} = a$. Then, we see that:
        \begin{align}
            |\bar\Pi\ket{\bar\Psi}| &=  \left|  \bra{0}\cdot (\bra{+}\otimes I)(I\oplus V^\dagger U)(\ket{+}\otimes I) \cdot\ket{0}  \right|\\
            &=  \left|  \bra{0} \cdot\frac{I+V^\dagger U}{2} \cdot\ket{0}  \right| = \frac{1+a}{2}
        \end{align}
        If we use this oracle in our algorithm instead, then we have $a \geq 1/2$ so all the $O(a^{-1})$ terms in Theorem~\ref{thm:hybridestimation} become constant.
    \end{remark}

We begin by discussing the approximation of $\erf(kx)$.

    \begin{lemma} \label{lemma:erfpoly} (\cite{1707.05391} Corollary 4). For all $k,\eta>0$ there exists an odd polynomial $P_{\text{erf},k,\eta}$ such that for all $x \in [-2,2]$:
        \begin{align}
            \left| P_{\text{erf},k,\eta}(x)  - \erf(kx) \right| \leq \eta.
        \end{align}
        Furthermore $\text{deg}(P_{\text{erf},k,\eta}) \in O( \sqrt{ (k^2 + \log(1/\eta)) \log(1/\eta)  } )$.
\end{lemma}

    The construction in \cite{1707.05391} actually obtains an approximation for $x\in [-1,1]$, but this is easily expanded to $x \in [-2,2]$ by dividing the input by two and then doubling $k$.

    Now we leverage that $\erf(kx)$ can be used to approximate the line $kx$, and is close to $-1$ or $1$ when far from the origin. These properties are listed in the two facts below and also plotted in Figure~\ref{fig:erfproperties}. Note that the kind of approximation $\erf(kx) \approx kx$ we achieve here is sufficient for interval reduction, but not sufficient for the sampling performed for the unbiased estimation algorithm from the previous section. If this had been the case, we could have exponentially improved the accuracy of the unbiased estimation algorithm. 

    \begin{fact} \label{fact:erfline} For $x \in [-1,0]$ we have $ \erf(x) \leq 0.8x$. For $x \in [0,1]$ we have $0.8x \leq  \erf(x)$.
\end{fact}

    \begin{fact} \label{fact:erfbound} For any $\tau > 0$, let $\kappa(\tau) := \frac{1}{2} \sqrt{ 2 \ln( 2/ (\pi \tau^2)  )  }$. Then:
    \begin{align}
        x \leq -\kappa  \hspace{5mm}&\to\hspace{5mm} -1 \leq \erf(x) \leq -(1 - \tau) \\
        \kappa \leq x \hspace{5mm}&\to\hspace{5mm} 1-\tau \leq \erf(x) \leq 1
    \end{align}
\end{fact}

    \begin{figure}[h]
    \begin{center}
        \includegraphics[width=0.5\textwidth]{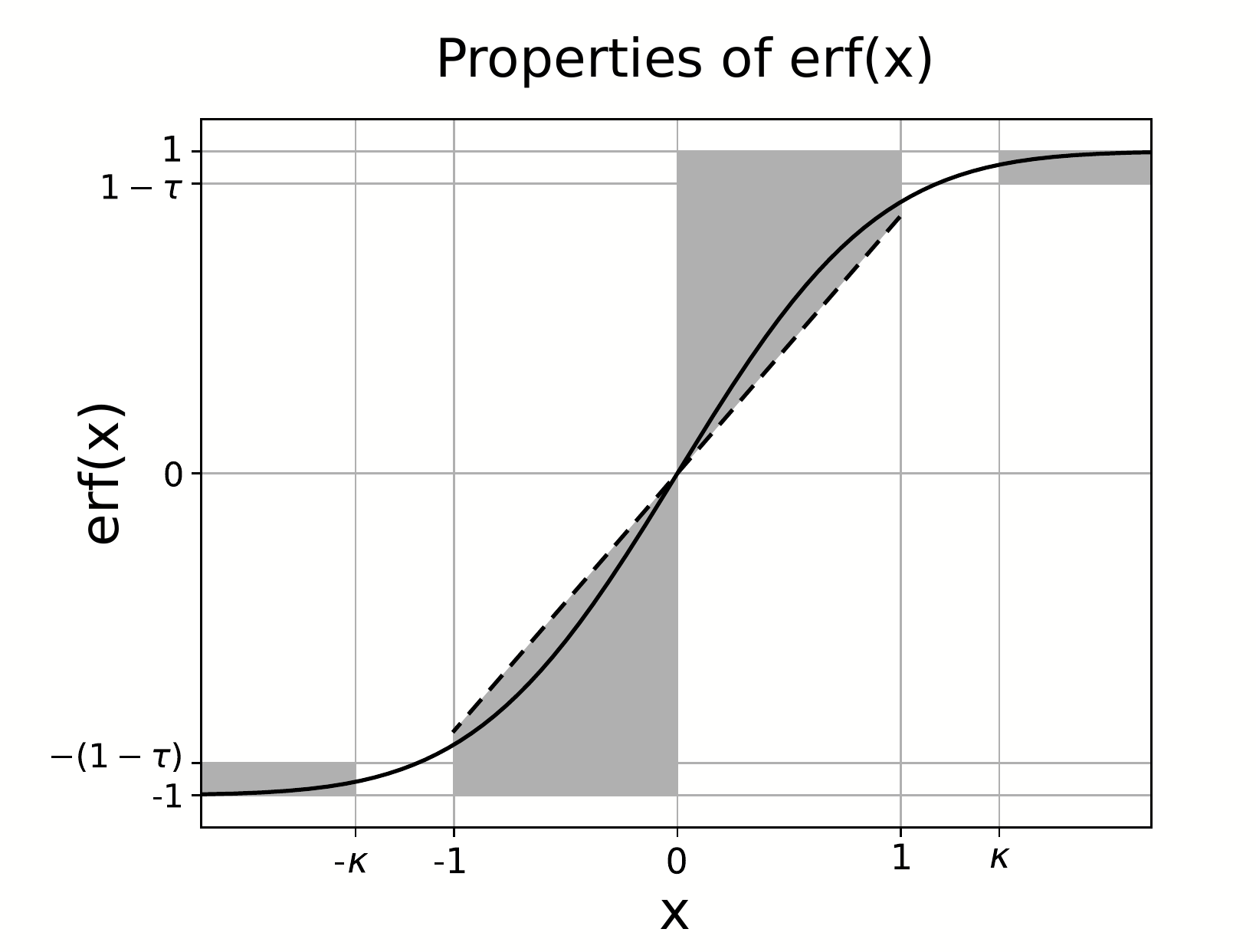}
    \end{center}
    \caption{\label{fig:erfproperties} Sketch of the properties of $\erf(x)$ stated in Facts~\ref{fact:erfline} and~\ref{fact:erfbound}. Here we selected $\tau = 0.1$, making $\kappa(\tau) \approx 1.44$. Fact~\ref{fact:erfline} asserts that $\erf(x)$  is either strictly above or below the dashed line $0.8x$ depending on the sign. Fact~\ref{fact:erfbound} asserts that $\erf(x)$ is within $\tau$ of $\pm 1$ when $x$ is $\kappa$-far from the origin. These bounds correspond to the gray regions. }
\end{figure}

Now we discuss the trick we explained earlier that lets us approximate the mixed-parity function $\erf(k(a-a_\text{mid}))$ with an even function on the interval $a \in [0,1]$. Think of $f(a) \approx \erf(k(a-a_\text{mid})) + 1$, so that $f(a)$ jumps from 0 to 1 on the interval $[a_\text{min},a_\text{max}]$. Then $f_e(a) := f(a) + f(-a)$ approximates $f(a)$. We track the quality of the approximation via the parameter $\lambda$.

    \begin{fact} \label{fact:evencombine} Say there is a function that satisfies $f(a) \leq \lambda$ for $a \in [-1,0]$, $f(a) \leq 1 - \lambda$ for $a\in[0,1]$, and $0 \leq f(a)$ for all $a\in[-1,1]$. Then $f_e(a) := f(a) + f(-a) $ is an even function satisfying $0 \leq f_e(a) \leq 1$ for $a \in [-1,1]$. Furthermore, when $a \in[0,1]$:
        \begin{align}
            f(a) \leq f_e(a) \leq  f(a) + \lambda.
        \end{align}
\end{fact}

We have all the tools in place to construct the family of polynomials for hybrid estimation. A sketch of the construction is shown in Figure~\ref{fig:hybridpoly}, which displays the critical property proven by the following lemma: that $P(a)$ is bounded away from $1/2$ for values close to $a_\text{min}$ and $a_\text{max}$. That bound will later become $\gamma$ from the previous discussion. The criterion $a_\text{mid} \geq \kappa/k$ allows us to determine when $a_\text{mid}$ is too close to the origin for the construction to work.

\begin{figure}[h]
     \centering
     \begin{subfigure}[b]{0.49\textwidth}
         \centering
         \includegraphics[width=0.9\textwidth]{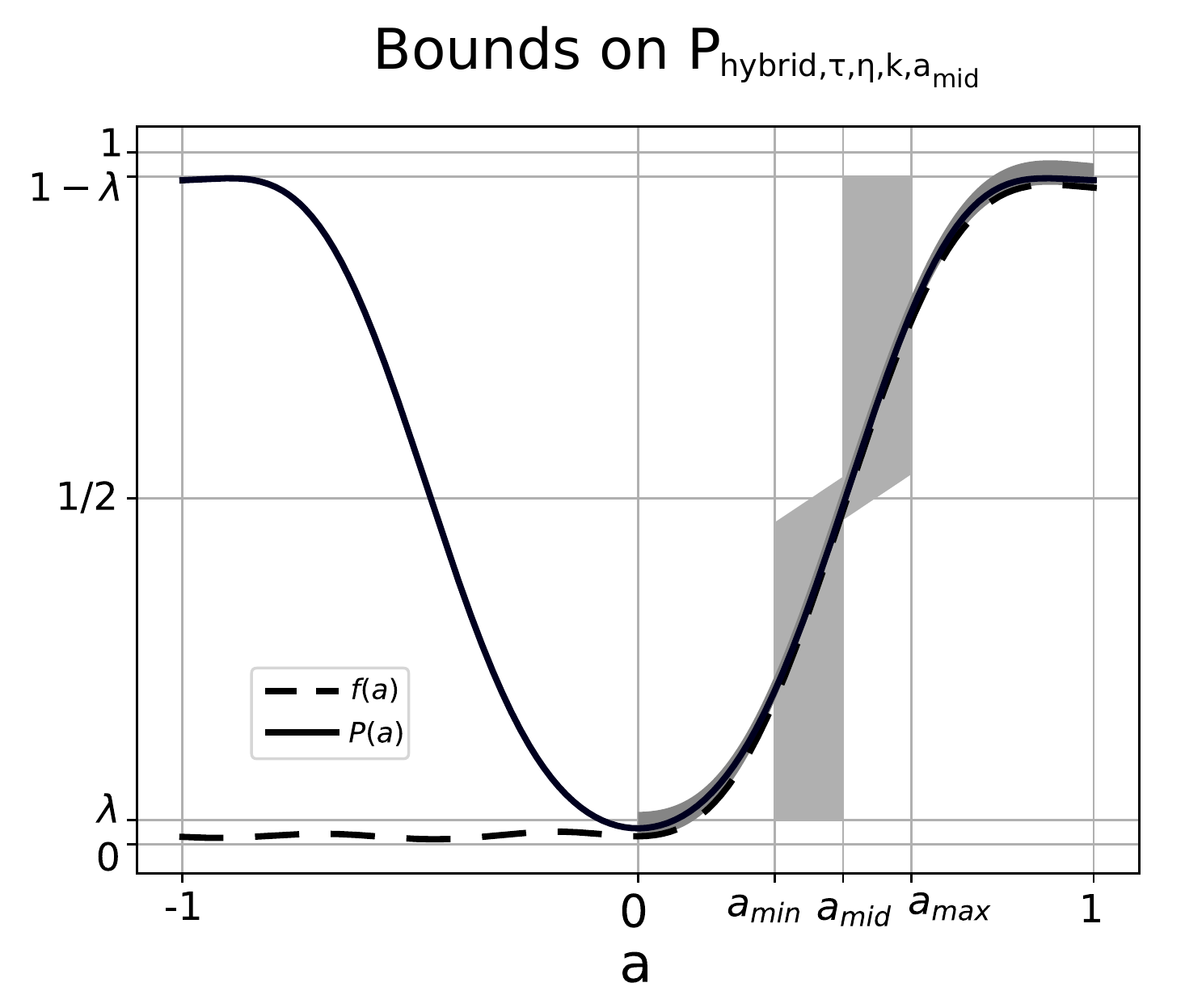}
         \caption{$[a_\text{min}, a_\text{max}] = [0.3,0.6]$}
     \end{subfigure}
     \begin{subfigure}[b]{0.49\textwidth}
         \centering
         \includegraphics[width=0.9\textwidth]{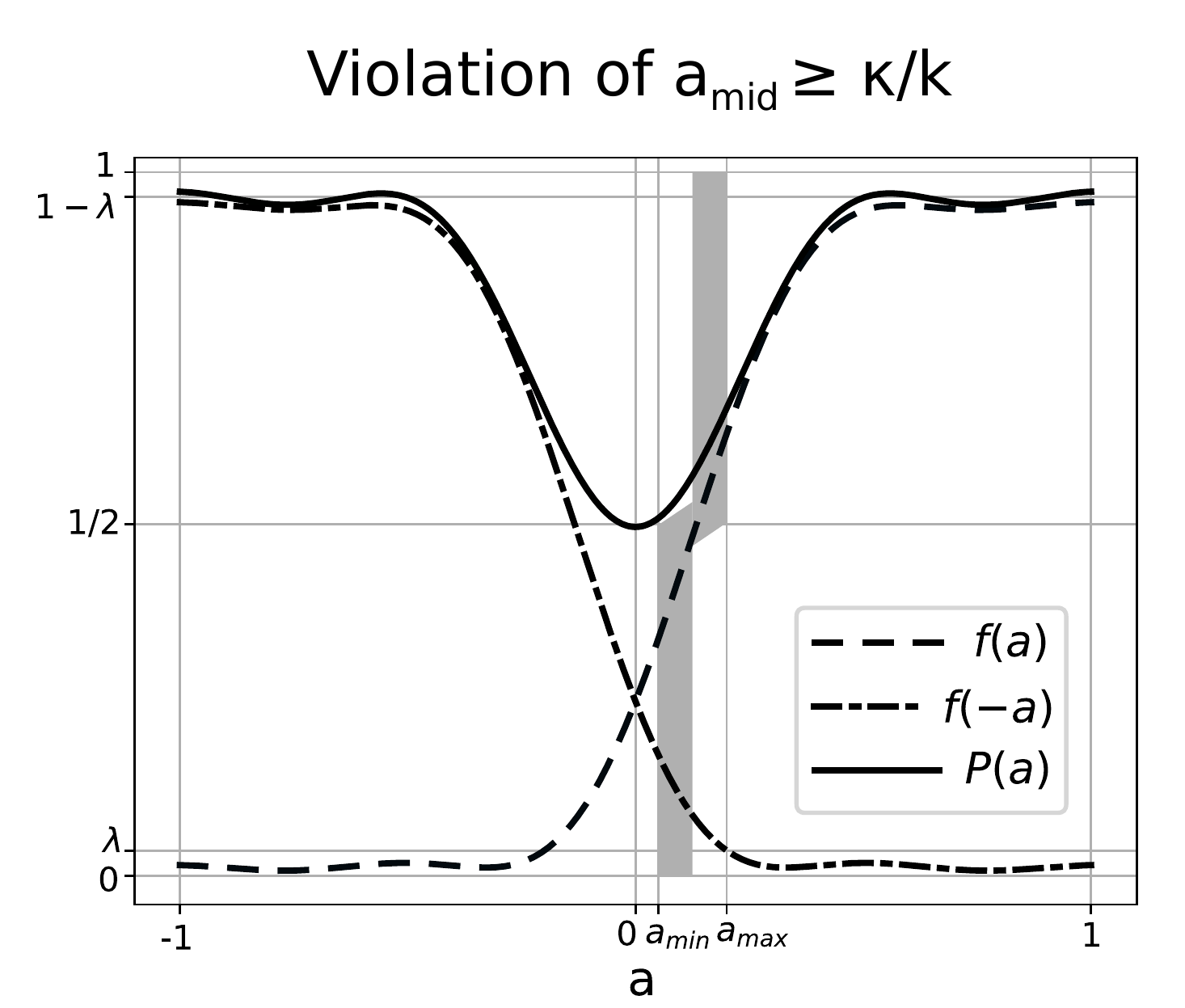}
         \caption{$[a_\text{min}, a_\text{max}] = [0.05,0.2]$}
     \end{subfigure}
     \hfill
        \caption{\label{fig:hybridpoly} Sketch of the construction of $P_{\text{hybrid},\tau,\eta,k,a_\text{mid}}(a)$
    or $P(a)$ for short from Lemma~\ref{lemma:hybridpoly}. Here we selected $\tau = \eta = 0.025$, $k=4$, which make $\kappa \approx 1.87$, $\kappa/k \approx 0.47$ and $\text{deg}(P_{\text{erf},k,\eta}) = 21$. In (a) we have $a_\text{mid} \geq \kappa/k$, so we see that $P(a)$ is contained in the gray regions asserted in $(\ref{eqn:hybridbound1},\ref{eqn:hybridbound2})$. The region at most $\lambda$ above $f(a)$ is highlighted in dark gray. But in (b) where $a_\text{mid} < \kappa/k$, we see that $f(-a)$ is large on the interval $[a_\text{min},a_\text{max}]$ causing the shape of $P(a)$ to be distorted and no longer be contained in the gray regions. }

\end{figure}

\begin{figure}
    \begin{center}
        \includegraphics[width=0.5\textwidth]{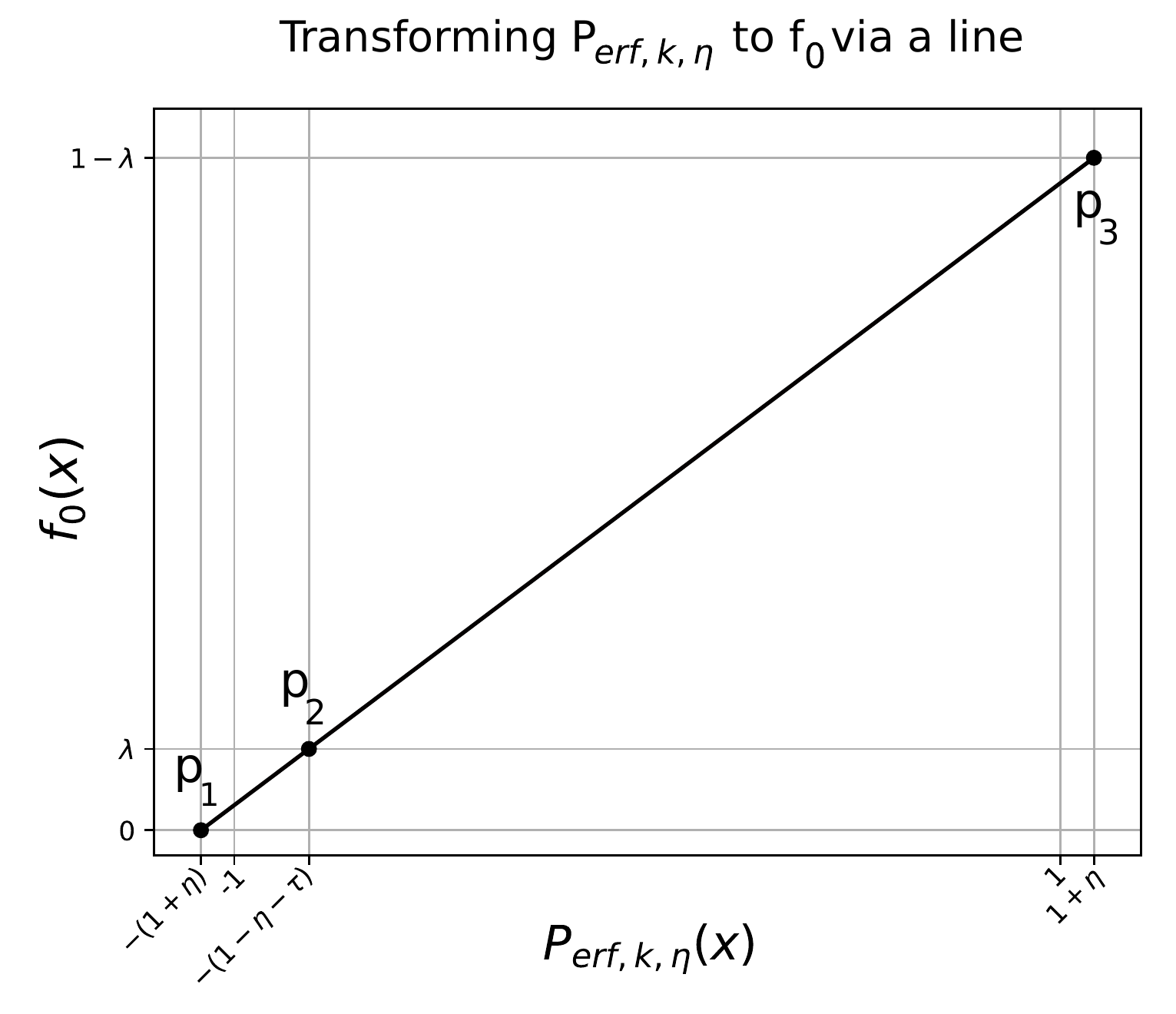}
    \end{center}
    \caption{\label{fig:colinear} Line defined by the points in (\ref{eqn:points}) used to transform $P_{\text{erf},k,\eta}$ to $f_0$ in the construction of $P_{\text{hybrid},\tau,\eta,k,a_\text{mid}}(a)$ from Lemma~\ref{lemma:hybridpoly}.} 
\end{figure}

    \begin{lemma} \label{lemma:hybridpoly} Consider any $\tau,\eta \in (0,1)$, and let $\kappa = \kappa(\tau)$ be as in Fact~\ref{fact:erfbound}. Consider any $a_\text{min},a_\text{max}$ that satisfy $0 < a_\text{min} < a_\text{max}< 1$. Furthermore, let $\Delta := a_\text{max} - a_\text{min}$ and consider any $k$ satisfying $1 \leq k \leq 2/\Delta$. Finally, let $a_\text{mid} := \frac{1}{2}(a_\text{min}+a_\text{max})$ and suppose that $a_\text{mid} \geq \kappa/k$.
 
        Then there exists an even semi-Pellian polynomial $P_{\text{hybrid},\tau,\eta,k,a_\text{mid}}(a)$ or $P(a)$ for short, that satisfies:
        \begin{align}
            a_\text{min}  \leq a \leq a_\text{mid} \hspace{5mm}&\to\hspace{5mm}   P(a) \leq \frac{1}{2} + 0.11 k(a - a_\text{mid}) + \eta + 0.25\tau \label{eqn:hybridbound1} \\
            a_\text{mid}  \leq a \leq a_\text{max} \hspace{5mm}&\to\hspace{5mm}   P(a) \geq \frac{1}{2} + 0.11 k(a - a_\text{mid}) - \eta - 0.25\tau. \label{eqn:hybridbound2} 
        \end{align}
        This polynomial has the same degree as $P_{\text{erf},k,\eta}(x)$  from Lemma~\ref{lemma:erfpoly}.
 \end{lemma}
    \begin{proof} We will obtain $P(x)$ by taking $P_{\text{erf},k,\eta}(x)$ and performing a series of degree-preserving transformations. Our first goal is to take $P_{\text{erf},k,\eta}(x)$, which is between $\pm(1+\eta)$, and $\leq -(1-\eta-\tau)$ for small $x$, and transform it to an $f_0(x)$ which is in $[0,1-\lambda]$, and $\leq \lambda$ for small $x$ as required by Fact~\ref{fact:evencombine}. Let $\lambda$ be chosen such that the following points are colinear:
        \begin{align}
            \vec p_1 :=  (  -(1+\eta) , 0); \hspace{5mm}  \vec p_2 :=(-(1-\eta-\tau), \lambda); \hspace{5mm}  \vec p_2 :=(1+\eta, 1-\lambda)  \label{eqn:points}
        \end{align}
We find this is solved by:
        \begin{align}
            \lambda := \frac{2\eta+\tau}{4\eta + \tau + 2} 
        \end{align}
        Furthermore, we find the line $y = (1 + x + \eta)/(4\eta+\tau+2)$ contains these points. Accordingly, let: 
        \begin{align}
            f_0(x) :=  \frac{1 + P_{\text{erf},k,\eta}(x) + \eta}{4\eta+\tau+2}
        \end{align}

        We want to plug a shifted $f_0$ into Fact~\ref{fact:evencombine}, so we will first prove that $f_0$ meets some requirements for all $x\in[-2,2]$. First, Lemma~\ref{lemma:erfpoly} implies that $P_{\erf,k,\eta}(x) \geq -(1+\eta)$, so since we pass through $\vec p_1$ we have $f_0(x) \geq 0$. Second,  $P_{\erf,k,\eta}(x) \leq 1+\eta$, so since we pass through $\vec p_3$ we also have $f_0(x) \leq 1-\lambda $. Third, Fact~\ref{fact:erfbound} implies that for $kx \leq -\kappa$ we have $\erf(kx) \leq -(1-\tau)$, so $P_{\erf,k,\eta}(x) \leq -(1-\eta-\tau)$, so since we pass through $\vec p_2$ we have $f_0 \leq \lambda$. 

        Now we are ready to use Fact~\ref{fact:evencombine}. Let $f(a) := f_0(a - a_\text{mid})$ with $x = a - a_\text{mid}$. We know that $a \in [0,1]$, but the polynomial construction must work for any $a \in [-1,1]$.  Since $a_\text{mid} \in [0,1]$ it is ensured that $x \in [-2,1] \subset [-2,2]$ so the approximation holds. Since the function was shifted to the right, we have $f(a) \geq 0$ for $a \in [-1,1]$, and we have $f(a) \leq 1-\lambda$ for $a \in [0,1]$. Since we assumed $a_\text{mid} \geq \kappa/k$, we know that for $a \in [-1,0]$ we have $kx = k(a - a_\text{mid}) \leq k a_\text{mid} \leq -\kappa$, which implies $f(a) \leq \lambda$. We meet the requirements of Fact~\ref{fact:evencombine}, so let $P_{\text{hybrid},\tau,\eta,k,a_\text{mid}}(a) := f(a) + f(-a)$.

        $P_{\text{hybrid},\tau,\eta,k,a_\text{mid}}(a)$, or $P(a)$ for short, is an even function by construction. Since it is a linear combination of polynomials it is itself a polynomial, and polynomials that are even functions are even polynomials. From Fact~\ref{fact:evencombine}, we know that $0 \leq P(a) \leq 1$, so $P(a)$ is semi-Pellian. The transformations from $P_{\erf,k,\eta}$ to $P(a)$ were all linear, so the degree is unchanged.

        It remains to show the desired bounds on $P(a)$. These stem from Fact~\ref{fact:erfline}. For $kx \in [-1,0]$ we have $\erf(kx) \leq 0.8kx$, so $P_{\erf,k,\eta}(x) \leq 0.8kx + \eta $, so:
        \begin{align}
            f_0(x) &\leq \frac{1 +(0.8kx +\eta)+\eta}{4\eta +\tau+2} \\
                    &= \frac{1}{2} +  \left[\frac{1 +(0.8kx +\eta)+\eta}{4\eta +\tau+2} - \frac{1}{2} \right] \\
                    &= \frac{1}{2} +  \frac{2(1 + 0.8kx + 2\eta) - (4\eta + \tau + 2)}{2(4\eta +\tau+2)} \\
                    &\leq \frac{1}{2} +  \frac{\tau}{2(4\cdot 0 + 0 +2)} - \frac{ 2\cdot 0.8(-x) }{2(4\cdot 1 +1 +2)} \\
                    &\leq \frac{1}{2} +  0.11 kx  - 0.25\tau\\
             f_0(x) + \lambda  &\leq \frac{1}{2} +  0.11 kx  + \eta + 0.25\tau
        \end{align}
        where we plugged in $0 \leq \eta,\tau \leq 1$ to obtain a very loose but workable bound, and observed $\lambda \leq \eta+\tau/2$. Similarly, for $kx \in [0,1]$ we have $\erf(kx) \geq 0.8kx - \eta$, so we derive:
            \begin{align}
            f_0(x) &\geq \frac{1 +(0.8kx -\eta)+\eta}{4\eta +\tau+2} \\
                    &= \frac{1}{2} +  \frac{2(1 + 0.8kx) - (4\eta + \tau + 2)}{2(4\eta +\tau+2)} \\
                    &\geq \frac{1}{2} +  \frac{2\cdot 0.8kx}{2(4\cdot 1 +1+2)} - \frac{ 4\eta + \tau }{2(4\cdot 0 +0+2)} \\
                    &\geq \frac{1}{2} +  0.11 kx - \eta  - 0.25\tau
        \end{align}
        The transformation $x = a-a_\text{mid}$ makes the intervals $kx\in[-1,0]$ and $kx\in[0,1]$ correspond to $a \in [a_\text{mid} -1/k , a_\text{mid}] $ and $a \in [a_\text{mid} , a_\text{mid} + 1/k] $ respectively. Since we assumed $k \leq 2/\Delta$, the intervals $[a_\text{min}, a_\text{mid}]$ and $[a_\text{mid},a_\text{max}]$ are subintervals. From Fact~\ref{fact:evencombine}, we thus derive the desired bounds via $f_0(x) \leq P(a) \leq f_0(x) + \lambda$.
    \end{proof}

We have yet to take into account the Born rule: when we sample from $P(a)$, we actually toss a coin that comes up heads with probability $|P(a)|^2$. Fortunately, as the following lemma shows, it suffices to bound $x$ away from $1/2$ even when tossing a coin with bias $x^2$ in order to decide if $x$ is big or small.

        \begin{lemma} \label{lemma:amppoly} Say there is a coin that comes up heads with probability $x^2$, and for any $\gamma,\delta>0$ let $A_{t,\gamma,\delta}(x)$ be the probability that more than $\frac{1}{4}+\gamma^2$ of $\lceil \frac{1}{2} \gamma^{-2} \ln( \delta^{-1} )  \rceil$ tosses come up heads. Then:
\begin{align}
    \text{if } x\geq \frac{1}{2} + \gamma \text{ then } A_{\gamma,\delta}(x) &\geq 1-\delta,\\
    \text{if } x\leq \frac{1}{2} - \gamma \text{ then } A_{\gamma,\delta}(x) &\leq\delta.
\end{align}
\end{lemma}
        \begin{proof} We can derive $x \geq \frac{1}{2} + \gamma$ implies $x^2 \geq \frac{1}{4} + \gamma^2  + \gamma $, and $x \leq t - \gamma$ implies $x^2 \leq \frac{1}{4} + \gamma^2 - \gamma$. 
            The claims on $A_{t,\gamma,\delta}$ follow from two applications of the Chernoff-Hoeffding theorem. Let $\mathbf{S}$ be the random variable denoting the number of heads, and let $m$ be the number of tosses.  Then, if $x^2 \geq \frac{1}{4} + \gamma^2 + \gamma$:
        \begin{align}
            \text{Pr}\left[ \mathbf{S} \geq m\left(\frac{1}{4} + \gamma^2\right)  \right] \leq \text{Pr}[ \mathbf{S} \geq \mathbb{E}(\mathbf{S}) + \gamma m  ] \leq e^{ - 2 \gamma^2 m }
        \end{align}
            Demanding $e^{ - 2\gamma^2 m} \leq \delta$ and solving for $m$ completes the proof. A similar argument follows for the bound when $x^2 \leq \frac{1}{4} + \gamma^2 - \gamma$.
\end{proof}

Having constructed the polynomial and shown how to use it to distinguish $a \leq a^{(t)}_\text{min} +0.1\Delta_t$ and $a \geq a^{(t)}_\text{max} -0.1\Delta_t$, all that remains to show is how to deal with the restriction that we must have $a^{(t)}_\text{mid} \geq \kappa/k$ for the polynomial construction to work properly. Recall that we would ideally like to pick the slope $k \propto 1/\Delta^{1-\beta}_t$. Since $\kappa$ is roughly constant, this essentially translates the condition $a^{(t)}_\text{mid} \geq \kappa/k$ to $a^{(t)}_\text{mid} \geq \frac{1}{2} \Delta_t^{1-\beta}$. If this is not the case, then we must go `full quantum' ($\beta = 0$) and set the slope to be $k \propto 1/\Delta_t$ instead. Then the condition is $a^{(t)}_\text{mid} \geq \frac{1}{2} \Delta_t$ which is always true.

We find that the consequences of this behavior are to knock out an extra $O(a^{-1})$ term in the total query complexity and an extra $O(a^{-1/(1-\beta)})$ term in the depth. Intuitively, these arise as follows. At iteration $t$ we have $a \in [a^{(t)}_\text{min}, a^{(t)}_\text{max}]$, so, implying that $a_\text{mid}$ is lower bounded by $a - \Delta_t/2  \geq a - \Delta_t^{1-\beta}/2 $. At a certain time $t^*$, $\Delta_t$ becomes small enough $(\Delta_t^{1-\beta} \leq a)$ so that the bound $a^{(t)}_\text{mid} \geq \frac{1}{2} \Delta_t^{1-\beta}$ is assured. So, the algorithm divides into two phases. In the first phase $(t \leq t^*)$ we have $\Delta_t^{1-\beta} \geq a$ and the algorithm is essentially fully quantum. The complexity and depth are then given by $O(a^{-1})$ and $O(a^{-1/(1-\beta)})$ respectively. In the second phase $(t > t^*)$ we are guaranteed to have $a^{(t)}_\text{mid} \geq \kappa/k$, so the algorithm has the desired complexity and depth of $O(1/\eps^{1+\beta})$ and $O(1/\eps^{1-\beta})$ respectively.

Recall that $\mathbf{d}$ is the maximum degree of any particular polynomial, and that $\mathbf{D}$ is the sum of the degrees of all the polynomials. These correspond to query depth and query complexity respectively.

\begin{figure}
    \begin{center}
        \includegraphics[width=0.5\textwidth]{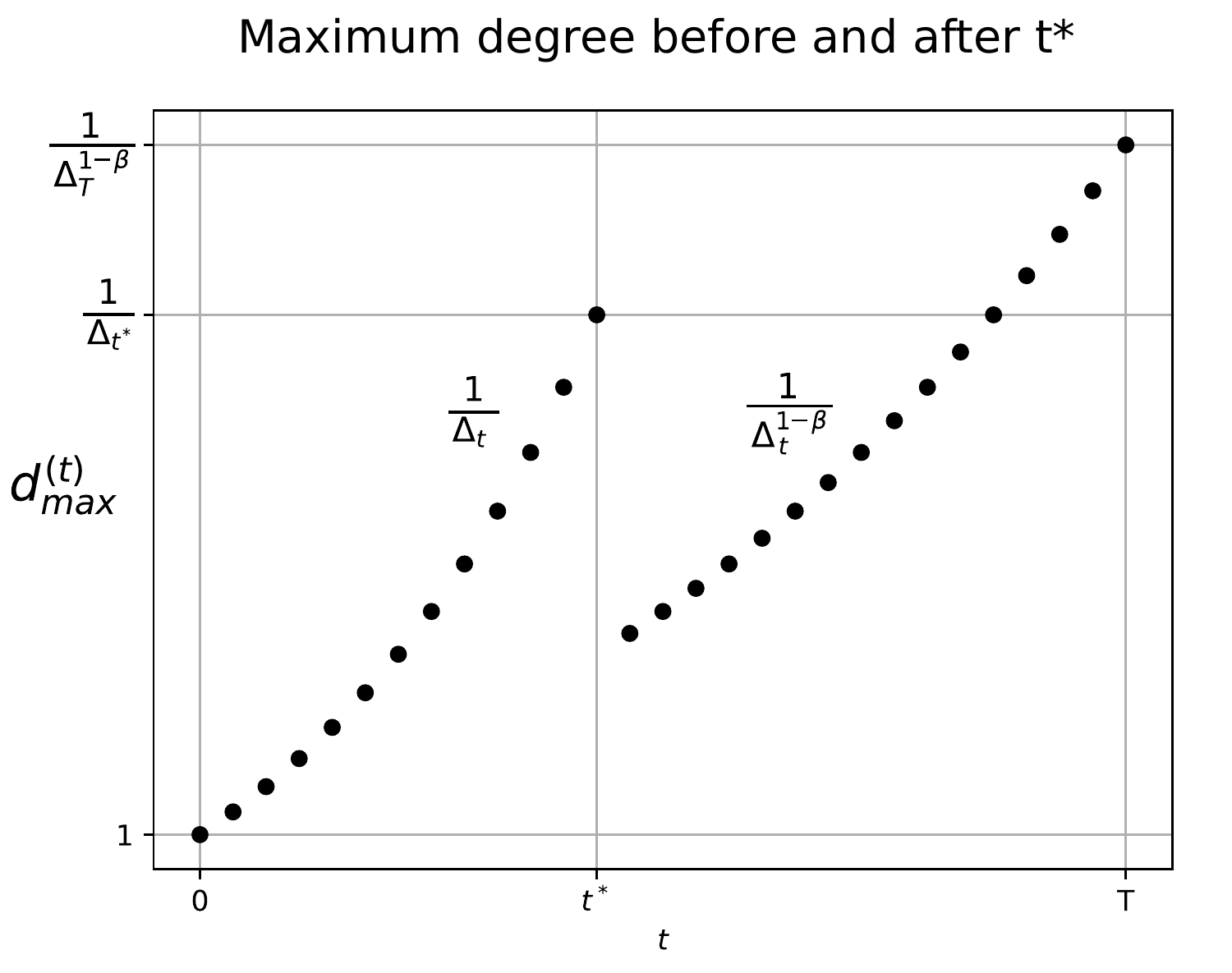}
    \end{center}
    \caption{\label{fig:tstarTmax} Sketch of an argument for bounding $\mathbf{d} := \max_t d^{(t)}_\text{max}$ in Theorem~\ref{thm:hybridestimation} in the case when $T > t^*$. When $t \leq t^*$ then there is a chance that $a^{(t)}_\text{mid} < \frac{1}{2}\Delta_t^{1-\beta}$ so we cannot rule out the more expensive scaling of $d^{(t)}_\text{max} \in O(\Delta^{-1}_t)$. But after $t > t*$, we have the cheaper scaling of $\tilde O( \Delta^{-(1-\beta)}_t )$. So the maximum must occur at either $t=t^*$ or $t=T$.}
\end{figure}

\begin{theorem} \label{thm:hybridestimation} \textbf{Hybrid quantum-classical estimation via polynomial sampling.} For any $\eps,\delta > 0$ and $0 \leq \beta < 1$, there exists an amplitude estimation algorithm satisfying:
     \begin{align}
        \text{Pr}[|\mathbf{\hat a}  - a| \geq \eps] \leq \delta
    \end{align}\\[-1.3cm]
    \begin{align}
        \exists f(\eps,a,\delta) \in \tilde O((a^{-1}  + \eps^{-(1+\beta)} )\log(\delta^{-1}))  \text{ such that } \Pr[ \mathbf{D} \geq f(\eps,a,\delta) ] \leq \delta \\
        \exists g(\eps,a) \in \tilde O( a^{-1/(1-\beta)} + \eps^{-(1-\beta)} )  \text{ such that } \Pr[ \mathbf{d} \geq g(\eps,a) ] \leq \delta 
    \end{align}
    Here, $\tilde O$ hides log factors in $\eps$.
\end{theorem}
    \begin{proof}
        The algorithm proceeds as follows:
        \begin{enumerate}
            \item Initialize $a^{(0)}_\text{min}\leftarrow 0$ and $a^{(0)}_\text{max} \leftarrow 1$. Let $\Delta_t := a^{(t)}_\text{max} - a^{(t)}_\text{min}$ and $a^{(t)}_\text{mid} = a^{(t)}_\text{min} + 0.5\Delta_t$ throughout.
            \item Let $T := \lceil \log_{0.9}(\eps) \rceil$. For $t = 0,1,2,...,T-1$, do:
                \begin{enumerate}
                    \item If $a^{(t)}_\text{mid} \geq \frac{1}{2}\Delta_t^{1-\beta}$, set:
                        \begin{align}
                            \eta_t, \tau_t &:= 0.01 \cdot \Delta_t^{\beta},\\
                             k_t &:=\frac{\kappa(\tau_t)}{2} \frac{1}{\Delta_t^{1-\beta}},\\
                             \gamma_t &:= 0.01 \cdot \Delta_t^{\beta}
                        \end{align}
                        Otherwise, set:
                        \begin{align}
                            \eta_t, \tau_t &:= 0.01,\\
                             k_t &:= \frac{\kappa(\tau_t)}{2}\frac{1}{\Delta_t},\\
                             \gamma_t &:= 0.01
                        \end{align}
                    \item Obtain $P_t(a)$ from Lemma~\ref{lemma:hybridpoly}. Set $\delta_t := \delta/T$ and $m_t := \lceil \frac{1}{2} \gamma_t^{-2}  \ln( \delta_t^{-1} )  \rceil$ and sample from $\mathbf{b}_{P_t}$ a total of $m_t$ times and let $\mathbf{S}^{(t)}$ be the total.
                    \item If $\mathbf{S}^{(t)} > m_t\left( \frac{1}{4} + \gamma_t^2 \right)$, then set:
                                                \begin{align}
                            [a^{(t+1)}_\text{min},\hspace{1mm}a^{(t+1)}_\text{max}] \leftarrow [a^{(t)}_\text{min} + 0.1\Delta_t,\hspace{1mm}a^{(t)}_\text{max}]
                        \end{align}
                             Otherwise, set:
                        \begin{align}
                            [a^{(t+1)}_\text{min},\hspace{1mm}a^{(t+1)}_\text{max}] \leftarrow [a^{(t)}_\text{min},\hspace{1mm}a^{(t)}_\text{max} -  0.1\Delta_t]
                        \end{align}
                \end{enumerate}
            \item Let $\mathbf{\hat a} \leftarrow a^{(T)}_\text{mid}$ (or anything else in the interval $[a^{(T)}_\text{min},a^{(T)}_\text{max}]$).
        \end{enumerate}
       
        We begin by proving correctness. At the beginning we have $\Delta_{0} = 1$, and each of the sub-steps of step 2 sets $\Delta_{t+1} = 0.9\Delta_{t}$, so $\Delta_t = 0.9^t$. At the end we have $\Delta_{T} = 0.9^T \leq \eps$, so, provided $a\in[a^{(T)}_\text{min},a^{(T)}_\text{max}]$, anything inside the interval $[a^{(T)}_\text{min},a^{(T)}_\text{max}]$ is within $\eps$ of $a$. So the algorithm returns correctly if at all iterations of step 2 we have $a\in[a^{(t)}_\text{min},a^{(t)}_\text{max}]$.

        We will show that the probability step 2c) adjusts the interval in a way to make $a\not\in[a^{(t+1)}_\text{min},a^{(t+1)}_\text{max}]$ is at most $\delta_t$. Then, by the union bound, the probability any of the steps fail is $\delta$.

        There are two ways we could fail to have $a\not\in[a^{(t+1)}_\text{min},a^{(t+1)}_\text{max}]$ after step 2c): either we have $a \in [a^{(t)}_\text{min}, a^{(t)}_\text{min}+0.1\Delta_t]$ and $\mathbf{S}^{(t)} > m_t(\frac{1}{4}+\gamma_t^2)$, or we have $a \in [a^{(t)}_\text{max} - 0.1\Delta_t, a^{(t)}_\text{max}]$ and $\mathbf{S}^{(t)} < m_t(\frac{1}{4}+\gamma_t^2)$. To show that these are both unlikely, we invoke Lemma~\ref{lemma:amppoly}. Say $a \in [a^{(t)}_\text{min}, a^{(t)}_\text{min}+0.1\Delta_t]$  implies $P_t(a) \leq \frac{1}{2} - \gamma_t$, and $a \in [a^{(t)}_\text{max} - 0.1\Delta_t, a^{(t)}_\text{max}]$  implies $P_t(a) \geq \frac{1}{2} + \gamma_t$. Then Lemma~\ref{lemma:amppoly} asserts that any of the events causing $a\not\in[a^{(t+1)}_\text{min},a^{(t+1)}_\text{max}]$ happen with probability at most $\delta_t$.

        So all that remains is to show the bounds on $P_t(a)$ in terms of $\gamma_t$. We can show this using Lemma~\ref{lemma:hybridpoly}, provided we meet its requirements. The requirement that $1 \leq k_t \leq 2/\Delta_t$ is always satisfied. But the requirement that $a^{(t)}_\text{mid} \geq \kappa(\tau_t)/k_t$, demanded that in step 2a) we select the parameters conditionally.
    
        We begin with the case where $a^{(t)}_\text{mid} \geq \frac{1}{2} \Delta_t^{1-\beta}$. Here, we see that we selected $k_t := \frac{1}{\Delta_t^{1-\beta}} \frac{\kappa(\tau_t)}{2} $, implying $\kappa(\tau_t)/k_t = \frac{1}{2} \Delta_t^{1-\beta}$. So we have $a^{(t)}_\text{mid} \geq \kappa(\tau_t)/k_t$ as desired. Also observe that $\tau_t < 1/4$, so $\kappa(\tau_t) > 1$, so $k_t \geq \frac{1}{2} \frac{1}{\Delta_t^{1-\beta}}$.  Then, Lemma~\ref{lemma:hybridpoly} demonstrates:
        \begin{align}
            a \in [a^{(t)}_\text{min}, a^{(t)}_\text{min}+0.1\Delta_t] \hspace{5mm}\to\hspace{5mm} P_t(a) &\leq \frac{1}{2} - 0.11 k_t (0.4 \Delta_t) + \eta_t + 0.25 \tau_t \\
            &\leq \frac{1}{2} - \left( 0.11\cdot\frac{1}{2}\cdot 0.4 - 0.01 - 0.25\cdot0.01  \right) \cdot \Delta_t^{\beta} \\
            &\leq \frac{1}{2} - 0.01 \cdot \Delta_t^{\beta} = \frac{1}{2} - \gamma_t\\
            a \in [a^{(t)}_\text{max}-0.1\Delta_t, a^{(t)}_\text{max}] \hspace{5mm}\to\hspace{5mm} P_t(a) &\geq \frac{1}{2} + 0.11 k_t (0.4 \Delta_t) - \eta_t - 0.25 \tau_t \\
            & \geq \frac{1}{2} + \gamma_t
        \end{align}
        In the case where $a^{(t)}_\text{mid} \geq \frac{1}{2} \Delta_t^{1-\beta}$, we instead select $\eta_t,\tau_t = 0.001$, implying $\kappa(\tau_t) > 1$ as before.  That way $\frac{\kappa(\tau_t)}{k_t} \leq \frac{1}{2} \Delta_t $. We always have $\frac{1}{2} \Delta_t \leq a^{(t)}_\text{mid} $, so the conditions of Lemma~\ref{lemma:hybridpoly} are satisfied. So, we derive:
    \begin{align}
        a \in [a^{(t)}_\text{min}, a^{(t)}_\text{min}+0.1\Delta_t] \hspace{5mm}\to\hspace{5mm} P_t(a) &\leq \frac{1}{2} - 0.11 \left(\frac{1}{2}\cdot \frac{1}{\Delta_t}  \right) (0.4 \Delta_t) +\eta_t + 0.25\tau_t \\
        &\leq \frac{1}{2} - \left(0.11 \cdot \frac{1}{2} \cdot 0.4 - 0.01 - 0.25\cdot0.01  \right) \\
        &\leq \frac{1}{2} - 0.01 \geq \frac{1}{2} - \gamma_t\\
        a \in [a^{(t)}_\text{max}-0.1\Delta_t, a^{(t)}_\text{max}] \hspace{5mm}\to\hspace{5mm} P_t(a) &\geq \frac{1}{2} + 0.11 \left(\frac{1}{2}\cdot \frac{1}{\Delta_t}  \right) (0.4 \Delta_t) - \eta_t - 0.25\tau_t \\
        &\geq \frac{1}{2} + 0.01 \geq \frac{1}{2} + \gamma_t
    \end{align}
So in both cases we meet the assumptions of Lemma~\ref{lemma:amppoly}, completing the correctness portion of the proof.

        Now we prove the properties of $\mathbf{D}$ and $\mathbf{d}$. To do so, we bound the maximum degree $d^{(t)}_\text{max}$ and the sum of all the degrees $d^{(t)}$ at each iteration of step 2. The following bounds rely on the guarantee that $a \in [a^{(t)}_\text{min}, a^{(t)}_\text{max}]$ for all $t$, which is only true probabilistically. Hence the bounds are themselves probabilistic.

        We start with $\mathbf{d}$ via $d^{(t)}_\text{max}$. The degree $d^{(t)}_\text{max}$ is given by Lemma~\ref{lemma:erfpoly}, and since we always have $\log(1/\eta_t) \leq k_t$ we have $d^{(t)}_\text{max} \in O(k_t \sqrt{\log(1/\eta_t)})$.   If $a^{(t)}_\text{mid} \geq \frac{1}{2} \Delta_t^x$, then we have $d^{(t)}_\text{max} \in O( \Delta_t^{-(1-\beta)} \sqrt{ \log( \Delta_t^{-2\beta})} \cdot \sqrt{\log( \Delta_t^{-\beta} )}    ) \leq \tilde O( \Delta_t^{-(1-\beta)} )$ where $\tilde O$ hides log factors in $\Delta_t$. Otherwise if $a^{(t)}_\text{mid} < \frac{1}{2} \Delta_t^{1-\beta}$, we have $d^{(t)}_\text{max} \leq O( \Delta_t^{-1} )$.

        We have $\mathbf{d} = \max_t d^{(t)}_\text{max}$. To maximize over all $t$, we must analyze the condition $a_\text{mid} \geq \frac{1}{2} \Delta_t^{1-\beta}$, since when this is the case we have the cheaper degree of $O( \Delta_t^{-(1-\beta)} )$. We argue that there is some $t^*$ such that for $ t > t^*$ the algorithm is guaranteed to satisfy $a^{(t)}_\text{mid} \geq \frac{1}{2} \Delta_t^{1-\beta}$.

        Observe that $a - \Delta_t/2 \leq a^{(t)}_\text{mid} \leq a + \Delta_t/2$, and since $\Delta_t \leq \Delta_t^{1-\beta}$ we also have $a^{(t)}_\text{mid} \geq a - \Delta_t^{1-\beta}/2$. That means that as soon as $\frac{1}{2} \Delta_t^{1-\beta} \geq a - \Delta_t^{1-\beta}/2 $, we are guaranteed that $a^{(t)}_\text{mid} \geq \frac{1}{2} \Delta_t^{1-\beta}$. This happens when $a \geq \Delta_t^{1-\beta}$, that is, when $t > t^* := \lceil\log_{0.9}(a)\rceil$.

       Let's assume $T > t^*$. Before $t^*$ we have $d^{(t)}_\text{max} \in O( \Delta_t^{-1} )$, and afterwards we have $d^{(t)}_\text{max} \in \tilde O( \Delta_t^{-(1-\beta)} )$. Both of these functions are increasing in $t$, so we maximize $d^{(t)}_\text{max}$ over all $t$ by just considering their values at $t^*$ and $T$. See Figure~\ref{fig:tstarTmax}.  We saw above that at $t^*$ we have $\Delta_t \geq a^{1/(1-\beta)}$, and at $T$ we have $\Delta_t \leq \eps$. So we have established: 
        \begin{align}
            \max_t d^{(t)}_\text{max} \in  \max\left[O( a^{-1/(1-\beta)} ), \tilde O(\eps^{-(1-\beta)} )  \right]   = \tilde O\left( a^{-1/(1-\beta)} + \eps^{-(1-\beta)}   \right)
        \end{align}
    where the final $\tilde O$ hides log factors in $\eps$, but not log factors in $a^{1/(1-\beta)}$. 
    
    Otherwise, assume $T \leq t^*$. Then we have $d^{(t)}_\text{max} \in O(\Delta_t^{-1})$ always, and since this function is increasing the maximum occurs at $t = T$. But at $t \leq t^*$ we have $a < \Delta_t^{1-\beta}$, which implies $\Delta_t^{-1} \leq a^{-1/(1-\beta)}$. So in this case we again have $\max_t d^{(t)}_\text{max} \in O(a^{-1/(1-\beta)})$. This establishes the bound on $\mathbf{d}$ for both cases.

        Now we bound the sum of the degrees $\mathbf{D}$ via the sum of all degrees at a particular iteration $d^{(t)}$. We simply have $d^{(t)} = m_t d^{(t)}_\text{max}$, so in the $t \leq t^*$ case we have $d^{(t)} \in O( \Delta_t^{-1} \log(\delta_t^{-1}) )$, and in the $t > t^*$ case we have $d^{(t)} \in \tilde O( \Delta_t^{-(1-\beta)} \cdot \Delta_t^{-2\beta}  \log(\delta_t^{-1}) ) = \tilde O(\Delta_t^{-(1+\beta)}  \log(\delta_t^{-1}) )$. Since $\delta_t$ is independent of $t$, we can just bound the first factor in the products, recalling that $\Delta_t = 0.9^t$:
        \begin{align}
            \sum_{t=0}^{t^*}\Delta_t^{-1} + \sum_{t=t^*+1}^{T-1}\Delta_t^{-(1+\beta)} &\leq \frac{1}{0.9} \sum_{t=0}^{t^*-1}\Delta_t^{-1} + \sum_{t=0}^{T-1}\Delta_t^{-(1+\beta)} \\
            &= \frac{1}{0.9} \frac{ 0.9^{-t*} -1  }{ 0.9^{-1} -1 } + \frac{ (0.9^{T})^{-(1+\beta)} -1  }{0.9^{-(1+\beta)} -1 }\\
            &\in O( a^{-1} + \eps^{-(1+\beta)} )
        \end{align}
Looking at the $\log(\delta_t^{-1})$ term, we have $T \in O( \log(1/\eps) )$, so we have $\log(\delta_t^{-1}) \in \tilde O( \log(\delta^{-1}) )$ where the $\tilde O$ hides log factors in $\eps$. This establishes the bound on $\mathbf{D}$.
    \end{proof}

\ifdefined\maindocument
\else
    
    \end{document}
\fi

\section{Acknowledgements}
The authors are grateful for fruitful discussions with Stefan Woerner, Julien Gacon, Giacomo Nannicini, John Martyn, Nikitas Stamatopoulos, and Pawel Wocjan. Part of this work was completed while PR was supported by Scott Aaronson's Vannevar Bush Faculty Fellowship, although the majority of the work was performed at IBM Quantum.

\appendix

\end{document}